\newtheorem{theorem}{Theorem}[section]
\newtheorem{lemma}[theorem]{Lemma}
\newtheorem{corollary}[theorem]{Corollary}
\newtheorem{remark}{Remark}[section]
\newcommand{\qed}{\mbox{\ \ \ }\rule{6pt}{7pt} \bigskip}
\newenvironment{proof}{\noindent{\em Proof:}}{\hfill\qed}
\newenvironment{oneshot}[1]{\noindent {\bf{#1}}. }
\newcommand{\expect}{\mathop{\operatorname{\bf E}}}
\newcommand{\ex}{\expect}
\newcommand{\pr}{\mathop{\operatorname{\bf Pr}}}
\renewcommand{\Pr}{\pr}
\newcommand\prob{\mbox{\bf Pr}}
\newcommand{\Real}{\mbox{\rm\bf R}}
\newcommand{\nikhil}[1]{}
\newcommand{\opt}{\mbox{\rm OPT}}
\newcommand{\asi}{adversarial stochastic input}
\newcommand{\aijk}{a_{ijk}}
\newcommand{\bij}{ b_{ij}}
\newcommand{\vij}{ v_{ij}}
\newcommand{\xij}{ x_{ij}}
\newcommand{\xjk}{x_{j,k}} 
\newcommand{\xjkt}{x_{j,k,t}} 
\newcommand{\wjk}{w_{j,k}}
\newcommand{\wijk}{w_{ijk}}
\newcommand{\rand}{R}
\newcommand{\optr}{W_{\rand}}
\newcommand{\opte}{W_{E}}
\newcommand{\optei}[1][i]{W_{E,#1}}
\newcommand{\ho}{P}
\newcommand{\hoe}{\widetilde{\ho}}
\newcommand{\xin}{X_i}
\newcommand{\yin}{Y_i}
\newcommand{\xina}{X_i^A}
\newcommand{\yina}{Y_i^A}
\newcommand{\xit}[1][t]{X_{i,#1}}
\newcommand{\yit}[1][t]{Y_{i,#1}}
\newcommand{\xito}[1][t]{X_{i,#1}^{\ast}}
\newcommand{\yito}[1][t]{Y_{i,#1}^{\ast}}
\newcommand{\xjko}{x_{jk}^{\ast}}
\newcommand{\xitoe}[1][t]{\widetilde{X}_{i,#1}}
\newcommand{\yitoe}[1][t]{\widetilde{Y}_{i,#1}}
\newcommand{\xita}[1][t]{X_{i,#1}^A}
\newcommand{\yita}[1][t]{Y_{i,#1}^A}
\newcommand{\sxir}[1][s]{S_{#1}^{\stage}\left(\xin\right)}
\newcommand{\syir}[1][s]{S_{#1}^{\stage}\left(\yin\right)}
\newcommand{\sxia}[1][s]{S_{#1}\left(\xina\right)}
\newcommand{\syia}[1][s]{S_{#1}\left(\yina\right)}
\newcommand{\sxiar}[1][s]{S_{#1}^{\stage}\left(\xina\right)}
\newcommand{\syiar}[1][s]{S_{#1}^{\stage}\left(\yina\right)}
\newcommand{\phixi}[1][s]{\phi_{i,#1}}
\newcommand{\phiyi}[1][s]{\psi_{i,#1}}
\newcommand{\phixir}[1][s]{\phi_{i,#1}^{\stage}}
\newcommand{\phiyir}[1][s]{\psi_{i,#1}^{\stage}}
\newcommand{\epsx}[1][\stage]{\epsilon_{x,#1}}
\newcommand{\epsy}[1][\stage]{\epsilon_{y,#1}}
\newcommand{\uf}{\mathcal{F}}
\newcommand{\ufr}{\mathcal{F}_{\stage}}
\newcommand{\stage}{r}
\newcommand{\tr}[1][\stage]{t_{#1}}
\newcommand{\er}[1][\stage]{e_{#1}}
\newcommand{\zr}[1][\stage]{Z_{#1}}
\newcommand{\ir}[1][\stage]{\mathcal{I}_{#1}}
\newcommand{\wm}{w_{\max}}
\newcommand{\no}{\mathrm{NO}}
\newcommand{\dmi}{\rho_i}
\newcommand{\reso}{\mathcal{A}}
\newcommand{\req}{\mathcal{J}}
\newcommand{\optn}{\mathcal{K}}
\begin{document}
	
	\title{Near Optimal Online Algorithms and Fast Approximation Algorithms for 
		Resource Allocation Problems\thanks{
			Part of this work was done when the second author was a researcher at 
			Microsoft Research, Redmond, and the third, fourth authors were interns at 
			Microsoft Research, Redmond.
	}}
	\author{
		Nikhil R. Devanur\thanks{Microsoft Research. 
			\tt{nikdev@microsoft.com}.}
		\and Kamal Jain\thanks{Faira. \tt{kamaljain@gmail.com}.}
		\and Balasubramanian Sivan\thanks{Google Research. 
			\tt{balusivan@google.com}.}
		\and Christopher A. Wilkens \thanks{Facebook Research. \tt{cwilkens@berkeley.edu}.}
	}
	
	\date{}
	\maketitle{}
	\thispagestyle{empty}

\begin{abstract}
We present prior robust algorithms for a large class of resource allocation problems where requests arrive one-by-one (online), drawn independently from an {\em unknown} distribution at every step. We design a single algorithm that, for every possible underlying distribution, obtains a $1-\epsilon$ fraction of the profit obtained by an algorithm that knows the entire request sequence ahead of time. The factor $\epsilon$ approaches $0$ when no single request consumes/contributes a significant fraction of the global consumption/contribution by all requests together. We show that the tradeoff we obtain here that determines how fast $\epsilon$ approaches $0$, is near optimal: we give a nearly matching lower bound showing that the tradeoff cannot be improved much beyond what we obtain. 

Going beyond the model of a static underlying distribution, we introduce the {\em adversarial stochastic input} model, where an adversary, possibly in an adaptive manner, controls the distributions from which the requests are drawn at each step. Placing no restriction on the adversary, we design an algorithm that obtains a $1-\epsilon$ fraction of the optimal profit obtainable w.r.t. the worst distribution in the adversarial sequence. Further, if the algorithm is given one number per distribution, namely, the optimal profit possible for each of the adversary's distribution, we design an algorithm that achieves a $1-\epsilon$ fraction of the  weighted average of the optimal profit of each distribution the adversary picks. 

In the offline setting we give a fast algorithm to solve very large LPs with both packing and covering constraints. We give algorithms to approximately  solve (within a factor of $1+\epsilon$) the mixed packing-covering problem with $O(\frac{\gamma m \log (n/\delta)}{\epsilon^2})$ oracle calls where the constraint matrix of this LP has dimension $n\times m$, the success probability of the algorithm is $1-\delta$, and $\gamma$ quantifies how significant a single request is when compared to the sum total of all requests. 

We discuss implications of our results to several special cases including online combinatorial auctions, network routing and the adwords problem.

\end{abstract}

\section{Introduction \& Summary of Results}\label{online:sec:intro}
%\label{sec:intro}

\newcommand{\pj}{\mathcal{P}_j}
\newcommand{\xj}{\mathbf{x}_j}
\newcommand{\aij}{\mathbf{a}_{i,j}}
\newcommand{\bbij}{\mathbf{b}_{i,j}}
\newcommand{\wj}{\mathbf{w}_j}
\newcommand{\wij}{\mathbf{w}_{i,j}}
\newcommand{\alphab}{\mathbf{\alpha}}
\newcommand{\bigo}[1]{O\left( #1 \right)}
\newcommand{\bigomega}[1]{\Omega\left( #1 \right)}

There has been an increasing interest in online algorithms for resource 
allocation problems motivated by their wide variety of applications in Internet 
advertising, allocating multi-leg flight seats for customers online, allocating 
goods to customers arriving online in a combinatorial auction etc. Designing 
efficient resource allocation algorithms has significant scientific and 
commercial value. The traditional computer science approach to deal with 
uncertain future inputs has been the worst-case competitive analysis. Here 
nothing is assumed about the sequence of requests that arrive online, and the 
benchmark is the optimal algorithm that knows the entire sequence of requests 
ahead of time. Several problems in this space have been analyzed in the 
traditional framework, exemplified, for instance, in the well-studied {\em 
Adwords} problem introduced by~\citet{MSVV05}. While worst-case 
analysis is a robust framework, for many problems it leads to pessimistic 
bounds that rule out obtaining more than a constant fraction of the optimal 
profit. Consequently, there has been a drive in the last few years to go beyond 
worst-case analysis.  A frequently used alternative is to perform stochastic 
analysis: assume that the input is drawn from a {\em known} distribution and 
optimize the objective w.r.t. this distribution. While stochastic analysis 
circumvents the impossibility results in worst-case analysis, any error in the 
knowledge of distribution could render the algorithm suboptimal, and sometimes 
even infeasible. 

%where an algorithm assigns keywords arriving online to bidders to maximize profit without violating bidders' budget constraints, is a special case of the resource allocation framework. Several problems in this space have been analyzed in the traditional framework for online algorithms, namely, worst-case competitive analysis. Here nothing is assumed about the sequence of requests that arrive online, and the benchmark is the optimal algorithm that knows the entire sequence of requests ahead of time. For many problems worst-case competitive analysis is not entirely satisfactory: it leads to pessimistic bounds that rule out obtaining more than a constant fraction of the optimal profit.  Consequently, there has been a drive in the last few years to go beyond worst-case analysis.  A frequently used alternative is to perform stochastic analysis: assume that the input is drawn from a {\em known} distribution and optimize the objective w.r.t. this distribution. While stochastic analysis circumvents the impossibility results in worst-case analysis, any error in the knowledge of distribution could render the algorithm suboptimal, and sometimes even infeasible. 

In this paper, we study a middle ground between worst-case and stochastic 
settings. We assume that the input is drawn from an underlying distribution 
that is {\em unknown} to the algorithm designer. We present a single algorithm, 
that for every distribution performs nearly as well as the optimal algorithm 
that knows the entire sequence of requests ahead of time. In this sense, the 
algorithm is {\em prior robust}. 

We now give an informal description of the resource allocation framework and our main contributions. See Section~\ref{online:sec:prelim} for a formal description and theorem statements.  We consider a resource allocation setting where requests arrive online; every request can be served by some subset of several available options; each (request, option) pair consumes some amount of every resource, and generates some profit. There is a given budget for each resource. 
Requests are drawn i.i.d. from an {\em unknown} distribution. 
The goal is to maximize the total profit generated while making sure that the total consumption of each resource is no more than the corresponding budget. 
%a given objective function (this paper handles optimizing the sum and minimum of linear functions). 
We compare the profit of the algorithms against the offline optimum and prove competitive ratio bounds. 
Even for very restricted special cases of this problem, the worst-case setting  
cannot yield anything beyond a $1-\frac{1}{e}$ competitive 
ratio~\cite{KP96,MSVV05}.
While the stochastic setting with a fully known distribution can give near 
optimal performance guarantees, 
it often leads to very distribution dependent algorithms (e.g. see~\cite{AHL12} for the special case of the adwords problem, which requires knowledge of the entire distribution to perform the optimization). 
Hence both these approaches are not satisfactory, and this problem lends itself well to the middle ground of prior robust analysis. 

Going beyond i.i.d., our work introduces the {\em adversarial stochastic input} (ASI) model as a more realistic
model for analyzing online algorithmic problems. Here the distribution from which the requests arrive is allowed to change over time (unlike i.i.d., where it stays identical for every request). The adversary decides how to pick the distributions, and is even allowed to pick them adaptively. 
%Our algorithms guarantee a profit close to the optimal profit for the worst of these distributions. 
For many practical applications such as in display advertising, the distribution of requests shows trends that change over the course of time: mornings are different from evenings and weekdays are different from weekends. 
Thus a time varying distributional model is more realistic than the i.i.d. model.  
A keen reader might notice that the above description includes the worst-case setting as well, therefore we have to make some extra assumptions, either by restricting how these distributions can be picked, or by allowing the algorithm some extra information about the distributions. We will describe these in greater detail later on.

Apart from the theoretical contribution, {\em the algorithms we design for the ASI models were used to completely overhaul the display advertising management system at Microsoft}, leading to a significant improvement in revenue ($\approx 10 \%$), better system manageability and enabling new capabilities.\footnote{This system had been globally operational from 2011  to  2015, when Microsoft made a deal with AOL to allow AOL to sell the display advertisement on behalf of Microsoft. }
We believe that our results  make a significant contribution to the search for ``allows-positive-results-yet-realistic'' models for online algorithms.
%A closely related, and slightly more general model that we refer to often is the {\em random permutation} model (introduced by Goel and Mehta~\cite{GM08}), that assumes that the adversary picks the set of keywords, but the order in which the keywords arrive is chosen uniformly at random.  

%We summarize our results in more detail below. The formal definitions of the problems are presented in Section~\ref{online:sec:prelim}. 

\paragraph*{First Result: Near-Optimal Prior Robust Online Algorithms for Resource Allocation
Problems}
A key parameter on which algorithms for several resource allocation problems 
depend on is the relative significance of any single request when compared to 
the entire sequence of requests. For instance, for the special case of the 
Adwords problem, this is the ratio of a single bid to an advertiser's budget. 
For the Adwords problem,~\cite{MSVV05} and~\cite{BJN07} design an algorithm 
that achieves a worst
case competitive ratio that tends to $1-1/e$ as the bid to budget ratio (which
we denote by $\gamma$) tends to 0.\footnote{Note that $\gamma$ approaching zero is the
easiest case. Even with $\gamma$ approaching zero, $1-1/e$  is the best competitive ratio
that any randomized algorithm can achieve in the worst case, illustrating how 
worst-case analysis leads to pessimistic bounds.}~\cite{DH09} studied the same 
problem in the random permutation model, and showed that the competitive
ratio tends to 1 as $\gamma$ tends to 0.  This result showed that competitive
ratio of algorithms  in stochastic models  could be  much better than that of
algorithms in the worst case.  The important question since then has been to
determine the optimal trade-off between $\gamma$ and the competitive ratio.
\cite{DH09} showed how to get a 1- $O(\epsilon)$ competitive ratio when 
$\gamma$
is  at most $O(\frac{\epsilon^3}{n\log (mn/\epsilon) })$ where $n$ is the number
of advertisers and $m$ is the number of keywords.  Subsequently~\cite{AWY09} 
improved the bound on $\gamma$ to
$O(\frac{\epsilon^2}{n\log (m/\epsilon) })$.  The papers of~\citet{FHKMS10} and 
~\citet{AWY09} have also shown that
the technique of~\cite{DH09} can be extended to other online problems.

{\em The first main result in this paper is the following $3$-fold improvement of
previous results (Theorems~\ref{online:thm:online_ra} and~\ref{online:thm:lb}), for the i.i.d. with unknown distributions model}. All our results apply to the general class of problems that we call the {\em resource allocation framework}.
A formal definition of the framework is presented in Section~\ref{online:subsec:online_ra} and a discussion of many interesting special  cases including online network routing and online combinatorial auctions is presented in Section~\ref{online:sec:special_cases}.
\begin{enumerate}[leftmargin=*]
\item We give an algorithm which guarantees a $1-\epsilon$ approximation factor when
$\gamma = O(\frac{\epsilon^2}{\log(n/\epsilon)})$. 
\item We show that our bound on $\gamma$ is {\em almost optimal}; we show that no algorithm, even if it knew the distribution, can guarantee a $1-\epsilon$ approximation when $\gamma = \omega(\frac{\epsilon^2}{\log (n) })$. 
\item Our algorithms lend themselves to natural generalizations that provide identical guarantees in the more general {\em \asi} (ASI) model that was described earlier. We provide three different versions of the ASI model in Section~\ref{online:sec:asiGWL}.  
\end{enumerate}

\paragraph*{Significance} 
\begin{enumerate}[leftmargin=*]
\item Regarding the bound on $\gamma$, we remove a factor of $n$ from $\gamma$, making the algorithm more practical. 
Consider for instance the Adwords problem and suppose that the bids
are all in [0,1]. The earlier bound implies that the advertiser budgets need to be of the
order of $n \log n/\epsilon^2$ in order to get a $1-\epsilon$ competitive algorithm,
where $n$ is the number of advertisers.  With realistic values for these
parameters, it seems unlikely that this condition would be met.  While with the
improved bounds presented in this paper, we only need the advertiser budget to be of the
order of $\log n /\epsilon^2$ and this condition is met for reasonable values of
the parameters. Furthermore, in the more general resource allocation framework,
the previous best upper bound on $\gamma$ is from Agrawal, Wang and
Ye~\cite{AWY09} and equals $O(\frac{\epsilon^2}{n\log (mK/\epsilon) })$.  Here
$K$ is the number of available ``options'' (see Section~\ref{online:subsec:online_ra})
and in typical applications like network routing, $K$ could be exponential in
$n$, and thus, the factor saved by our algorithm becomes quadratic in $n$. 

\item Our ASI models are realistic models of time varying distributions for which we present algorithms with asymptotically optimal performance guarantees. 
We consider three different benchmarks, each progressively stronger than the previous, and require different levels of information about the distributions to achieve near optimal performance guarantees. For the weakest benchmark, we need just one parameter from the distribution, 
while for the strongest benchmark, we still need only $2mn$ parameters. Note 
that the distributions themselves can have an arbitrarily large support 
size\footnote{ We even allow continuous distributions which have an infinite 
support.} and hence the amount of information we need is much smaller than the 
description of all the distributions. Our results for the ASI model can be 
thought of as generalizations of the ``Prophet Inequality".\footnote{The 
Prophet Inequality is essentially a $1/2$-competitive algorithm for the 
following problem: a sequence of values is drawn independently from different 
distributions and presented one at a time. The algorithm may pick at most one 
of these values in an online manner, given some knowledge of the distributions, 
such as the expectation of the maximum of these values. The goal is to maximize 
the value picked. See~\cite{Cahn84}} Finally, as mentioned earlier, our 
algorithms 
for this model have made a significant impact on the practice of display 
advertising management at Microsoft. 

%\item In the \asi\ model, we show that if the algorithm is provided with one parameter for each distribution the adversary picks, it can get a $1-\epsilon$ fraction of the weighted average of optimal profit of each distribution the adversary picks. This level of generalization and robustness fits well with the display ads business requirement in Microsoft (see Section~\ref{online:subsec:online_ra} for a quick definition of the display ads problem, which is a special case of the resource allocation problem). Distributions determining users' webpage requests necessarily change over time of the day, and hence a static model like the i.i.d. model or the random permutation model is not suitable here. The ASI model captures time varying distributions and requires that the algorithm designer know just one parameter per distribution. This is exactly the sense in which Microsoft uses our algorithm: these parameters are estimated by Microsoft based on past data, and then our algorithm is run. Our algorithm has been operational globally since the summer of 2011. 

%\item Special cases of this problem include the adwords problem, display ads problem, online network routing and online combinatorial auctions. Our algorithms when specialized for online combinatorial auctions
\end{enumerate}

\paragraph*{Second Result: Prior Robust $1-1/e$ approximation Greedy Algorithm
for Adwords}
%%%%%%%%% Motivate this via greedy %%%%%%%%%%%%%%%%%%%%%%%%%%%%%%%%%%%%%%%%%%%%%%%%%
A natural algorithm for the Adwords problem that is widely used for its
simplicity is the greedy algorithm: always match an incoming query to the
advertiser that has the maximum effective bid (the minimum of bid and remaining
budget) for that query. Because of its wide use, previously the performance of
the greedy algorithm has been analyzed by Goel and Mehta~\cite{GM08} who showed
that in the random permutation and the i.i.d.  models, it has a competitive
ratio of $1-1/e$ with an assumption which is essentially that $\gamma$ tends to
0. 

It has been an important open problem to analyze the performance of greedy
algorithm in a stochastic setting for unbounded $\gamma$, i.e., for all
$0\leq\gamma\leq 1$.  The best factor known so far is $1/2$, and this works for
the worst case also. 
%(For the special case of online
%bipartite matching, in the case of i.i.d input with a {\em known} distribution,
%recent series of results achieve a ratio of better than 1-1/e, for instance by
%Feldman et al.~\cite{FMMM09} and  Bahmani and Kapralov~\cite{BK10}. The best
%ratio so far is .702 by Manshadi, Gharan and Saberi~\cite{SS10}. The same online
%bipartite matching has been recently studied in the random permutation model by
%Mahdian and Yan~\cite{MY11} and by Karande, Mehta and Tripathi~\cite{KMT11}.
%The best ratio so far is $0.696$ by Mahdian and Yan~\cite{MY11}.)
Nothing better was known, even in the stochastic models.   {\em The second result
in this paper is that for the Adwords problem in the i.i.d.  unknown
distributions model, with no assumption on $\gamma$ (i.e., $\gamma$ could be as
big as $1$), the greedy algorithm gets an approximation factor of $1-1/e$ against
the optimal fractional solution to the expected instance (Theorem
\ref{online:thm:adwords_greedy})}.  

Our proof technique for this result has been subsequently used to prove a similar result for the greedy algorithm in online submodular welfare maximization~\cite{KPV13}. We note here that there are other algorithms that achieve a $1-1/e$ approximation for the Adwords problem with unbounded $\gamma$, but the greedy algorithm is the only prior robust (i.e., distribution independent) algorithm known, and it is quite simple too. For example Alaei, Hajiaghayi and Liaghat ~\cite{AHL12} design a randomized algorithm that obtains a $1-1/e$ approximation, but requires the knowledge of the entire distribution. Devanur, Sivan and Azar~\cite{DSA12} design a deterministic algorithm that obtains a $1-1/e$ approximation, but requires a few parameters from the distribution.

\paragraph*{Third Result: Fast Approximation Algorithms for Mixed Packing and
Covering Integer Programs}
\cite{CCDJS10} considered the following (offline) problem: given
a lopsided bipartite graph $G=(L,R,E)$, that is a bipartite graph where $m = |L|
\gg |R| =n$, does there exist an assignment $M: L \rightarrow R$ with $(j,M(j))
\in E $ for all $j \in L$, and such that for every vertex $i \in R$,
$|M^{-1}(i)| \geq B_i$ for some given values $B_i$.  Even though this is a
classic problem in combinatorial optimization with well known polynomial time
algorithms, the instances of interest are too large to use traditional
approaches to solve this problem. (The value of $m$ in particular is very
large.) The approach used by~\citet{CCDJS10} was to essentially design an online
algorithm in the i.i.d. model: choose vertices from $L$ uniformly at random and
assign them to vertices in $R$ in an online fashion.  The online algorithm is
guaranteed to be close to optimal, as long as sufficiently many samples are
drawn.  Therefore it can be used to solve the original problem (approximately):
the online algorithm gets an almost satisfying assignment if and only if the
original graph has a satisfying assignment (with high probability). 

{\em The third result in this paper is a generalization of this result to get
fast approximation algorithms for a wide class of mixed packing and covering
integer programs (IPs)
inspired by problems in the resource allocation framework (Theorem
\ref{online:thm:coveringPacking})}.  Problems in the resource allocation
framework where the instances are too large to use traditional algorithms occur
fairly often, in particular in the management of display advertising systems, where these algorithms are being used. 
Formal statements and a more detailed discussion are presented  in Section~\ref{online:subsec:intro_covering_packing}.

\paragraph*{High Level Description of Techniques} The underlying idea used for
all these results can be summarized at a high level as thus: consider a
hypothetical algorithm called {\em Hypothetical-Oblivious} that knows the
distribution from which the input is drawn and uses an optimal solution w.r.t.
this distribution.  Now suppose that we can analyze the performance of
Hypothetical-Oblivious by considering a potential function and showing that it
decreases by a certain amount in each step. 
%(A caveat  is that this has to be somewhat independent of what the algorithm
%has done so far.)
Now we can design an algorithm that does not know the distribution as follows:
consider the same potential function, and in every step choose the option that
minimizes the potential function.  Since the algorithm minimizes the potential
in each step, the decrease in the potential for this algorithm is better than
that for Hypothetical-Oblivious and hence we obtain the same guarantee as that
for Hypothetical-Oblivious. The choice of potential function varies across the
results; also, whether we minimize or maximize the the potential function
varies across the results.  

For instance, our first result (Theorem~\ref{online:thm:online_ra}), the
performance of Hypothetical-Oblivious is analyzed using Chernoff bounds.  The
Chernoff bounds are proven by showing bounds on the expectation of the moment
generating function of a random variable.  Thus the potential function is the
sum of the moment generating functions for all the random variables that we
apply the Chernoff bounds to. The proof shows that in each step this potential
function decreases by some multiplicative factor. The algorithm is then designed
to achieve the same decrease in the potential function.  A  particularly
pleasing aspect about this technique is that we obtain very simple proofs. 
E.g., the proof of the second result mentioned above (that greedy is $1-1/e$ 
competitive, Theorem \ref{online:thm:adwords_greedy} ) is extremely
simple: the potential function in this case is simply the total amount of unused
budgets and we show that this amount (in expectation) decreases by a factor of
$1-1/m$ in each step where there are $m$ steps in all. 

\paragraph*{Multiplicative-Weight Updates} Our techniques and the resulting
algorithms for our first and third results (Theorem~\ref{online:thm:online_ra}
and Theorem~\ref{online:thm:coveringPacking}) are similar to the algorithms 
of~\citet{Y95,Y01} for derandomizing
randomized rounding and the fast approximation algorithms for solving
covering/packing LPs of~\citet*{PST91},~\citet{GK98},~\citet{F00}.  In 
fact~\citet{AHK05} showed that all these algorithms are related to
the multiplicative weights update method for solving the {\em experts} problem
and especially highlighted the similarity between the potential function used in
the analysis of the multiplicative update method and the moment generating
function used in the proof of Chernoff bounds and Young's algorithms.  Hence it
is no surprise that our algorithm which uses Chernoff bounds is also a
multiplicative update algorithm.  Our algorithm is closer in spirit to Young's
algorithms than others.  The main difference is that our algorithm solves an online problem, rather than an offline one, and hence will run short of essential distribution dependent parameters to run the multiplicative weights based algorithm directly: we show that these parameters can be estimated near optimally. And further, we introduce more adversarial models of online input, namely, the varying ASI models, and come up with varying levels of knowledge of the distribution that are sufficient to be able to design good algorithms for these models. And for the offline case, a basic difference of our algorithm from this previous set of results is that our algorithm uses the special structure of the
polytope $\sum_k \xjk \leq 1$ (as against the more general polytopes in these
works) in giving a more efficient solution.  For instance, for our offline problem the number of
oracle calls required will have a quadratic dependence on $\gamma m$ if we used
the~\citet{PST91} algorithm, where as using the special structure of the
polytope, we obtain a linear dependence on $\gamma m$. 

%A basic difference of our algorithm from this previous
%set of results is that in all these works, every single iteration of the
%algorithm involves changing the entire solution vector $x$, while our algorithm
%changes only a single coordinate of the vector $x$ per iteration. In other
%words, our algorithm uses the special structure of the polytope $\pj$ in giving
%a more efficient solution.  
It is possible that our algorithm can also be interpreted as an algorithm for
the experts problem.  In fact~\citet{MSVV05}  asked if there is a
$1-o(1)$ competitive algorithm for Adwords in the i.i.d model with small bid to
budget ratio, and in particular if the algorithms for experts could be used.
They also conjectured that such an algorithm would iteratively adjust a budget
discount factor based on the rate at which the budget is spent.  Our algorithms
for resource allocation problem when specialized for Adwords look exactly like
that, but we do not provide formal connections to the experts framework. 
This was done in follow-up works~\cite{AD15,MG14} which showed that 
essentially the same algorithm as ours can be thought of as using a subroutine 
of Multiplicative-weight updates on a suitably defined learning with experts problem. 

% and with the connections to the experts framework, we answer the questions in~\cite{MSVV05} in the positive.  

\paragraph{Follow-up work} There has been a number of follow-up papers since 
the conference version of this paper has been published.~\citet{AHL12} show 
that for the Adwords problem with a {\em known} distribution, it is enough for 
$\gamma$ to be $O(\epsilon^2)$ to get a $1-\epsilon$ approximation. 
Simultaneously,~\citet{DSA12} showed the same dependence 
of $\gamma = O(\epsilon^2)$ for the Adwords problem, but requiring only a few 
parameters from the distribution.~\citet{KPV13} study a 
generalized version of the adwords problem where, an advertiser's profit, 
instead of being budget-additive, could be an arbitrary submodular function of 
the queries assigned to him. For this problem in the worst case setting, they 
show that no algorithm can obtain better than a $\frac{1}{2}$ approximation, 
which the greedy algorithm already achieves. For the same problem in the i.i.d. 
setting, they show, using techniques we develop in this work, that the greedy 
algorithm obtains a $1-\frac{1}{e}$ approximation. 
~\citet{KRTV14} gave similar guarantees as us, for the random permutation model 
(i.i.d. without replacement), and also get the improved bound of $\gamma = 
O(\epsilon^2)$ for the special case of the Adwords problem. 
%We note that the random permutation model is a special case of the ASI model we introduce, and thus, the improvement in their work lies in not requiring for any distribution dependent parameter at all. While our work also doesn't require any distribution dependent parameter for the unknown i.i.d. model, for the general ASI model it is clear that without asking for any distribution dependent parameter, no algorithm can perform better than in the worst case, and thus we need to ask for a few parameters. 
On the other hand, the algorithms of~\cite{KRTV14} are computationally 
expensive, requiring to solve a linear program for serving every single 
request, where as our algorithm performs a much simpler optimization in every 
step: for the adwords problem for instance, it takes only a linear time to 
perform each step's optimization.  Both~\citet{AD15} and~\citet{MG14} showed 
that essentially the same algorithm as ours also works for the random 
permutation model, with the same guarantees, while also relating it formally to 
the learning from experts framework.~\citet{AD15} also greatly generalize the 
resource allocation framework, to handle arbitrary concave objectives and 
convex constraints.~\citet{ESF14} interpret our algorithm as an exponentiated 
sub-gradient algorithm, show that it works for the random permutation model, 
and give a slight generalization to handle additively separable concave reward 
functions.

\section{Preliminaries \& Main Results}\label{online:sec:prelim}
\subsection{Resource Allocation Framework}\label{online:subsec:ra} We consider
the following framework of optimization problems.  There are $n$ resources, with
resource $i \in \reso$ having a capacity of $c_i$.
%%\footnote{We pick $\reso$ to denote the set of resources to be 
%indicative of an important special case of the resource allocation framework 
%where the set of 
%resources is the set of advertisers. Although $\reso$ could in general be any 
%set, like the edges in a graph in the network routing special case, or the 
%items for sale in the combinatorial auctions special case. See 
%Section~\ref{online:sec:special_cases} for special cases of the framework.}  
There are $m$ requests; 
each request
$j \in \req$ can be satisfied by a vector $\xj \in 
\{0,1\}^K$, with 
coordinates 
$\xjk$,
such that $\sum_k \xjk \leq 1$. Think of vector $\xj$ as picking a single 
option to
satisfy a request from a total of $K$ options. We use $\optn$ to denote
the set of options. The vector $\xj$ consumes $\aij\cdot\xj$
amount of resource $i$, and gives $\wij\cdot\xj$ amount of type $i$ 
profit\footnote{While this notation seems to imply that the number of 
resource-types is equal to the number/set of profit types, namely $n$, this 
choice was made purely to reduce clutter in notation. In general the number/set 
of resource-types could be different from that of the number of profit-types, 
and it's straight-forward to verify that our proofs go through for the general 
case.}. The
$\aij$'s and $\wij$'s are non-negative vectors of length $K$ (and so are the
$\xj$'s). The co-ordinates of the vectors $\aij$ and $\wij$ will be denoted by
$\aijk$ and $\wijk$ respectively, i.e., the $k^{\rm th}$ option consumes $\aijk$
amount of resource $i$ and gives a type $i$ profit of $\wijk$.  The objective is
to maximize the minimum among all types of profit subject to the capacity
constraints on the resources. The following is the linear program relaxation of
the resource allocation problem:
\begin{eqnarray*}\label{online:lp:online_ra}
&&\text{Maximize } \min_{i\in\reso} \sum_{j\in\req}\wij\cdot\xj \text{ s.t.}\\
&& \forall ~i\in\reso, \sum_{j\in\req}\aij\cdot\xj \leq c_i \nonumber\\ 
&& \forall ~j\in\req, \sum_{k\in \optn} \xjk \leq 1 \nonumber \\
&& \forall ~j\in\req,k\in \optn, \xjk \geq 0
\end{eqnarray*}
%Let $\gamma = \max\left(\left\{\frac{\aijk}{c_i}\right\}_{i,j,k} \cup
%\left\{\frac{\wijk}{\opte}\right\}_{i,j,k}\right) $ be the notion corresponding to the 
%bid to budget ratio for Adwords. Here $\opte$ is the optimal offline fractional objective
%value achievable.
%on the expected instance (see
%Section~\ref{online:subsec:online_ra} for a definition of the expected instance). 

%For the ease of exposition in the rest of the paper, we will consider a slightly special 
%case of the resource allocation framework

%We will focus on the canonical case is where each $\pj = \{ \xj \in \Real^K: \sum_k \xjk \leq 1 \}$. 
%This captures the case where there are $K$ discrete options, each with a given profit and consumption. 
%This case captures most of the applications we are interested in, which are
%described in Section \ref{online:sec:special_cases}. 
%All the proofs will be presented for this special case, for ease of exposition. 

%For an example of an application that needs more general polytopes see Section \ref{sec:selective_call_out}. 

Note that dropping a request by not picking any option at all is feasible too. 
For expositional convenience, we will denote not picking any option at all as
having picked the $\bot$ option ($\bot$ may not be in the set $\optn$) for which
$a_{ij\bot} = 0$ and $w_{ij\bot} = 0$ for all $i,j$. 

We consider two versions of the above problem. The first is an online version
with stochastic input: requests are drawn from an unknown distribution.  The
second is an offline problem when the number of requests is much larger than the
number of resources, and our goal is to design a fast PTAS for the problem. 

\subsection{Near-Optimal Online Algorithm for Resource Allocation}\label{online:subsec:online_ra}
We now consider an online version of the resource allocation framework.
Here requests arrive online. We consider the i.i.d.  model, where each request 
is drawn independently from a given distribution. The distribution is unknown to the algorithm. 
The algorithm  knows $m$, the total number of requests. To define our benchmark,
we now define the expected instance. 
%The approximation
%factors we give for resource allocation problems with bounded $\gamma$ use
%a benchmark that is even larger than the expected optimal offline solution benchmark
%used in the denominator of expression~\eqref{intro:eq:PRAS}. Our benchmark is
%the fractional optimal solution of the expected instance, defined below. 
%However, for the 
%adwords problem with unbounded $\gamma$, our competitive ratios are 
%with respect to the optimal integral solution to the distribution instance
%which is incomparable to the expected optimal integral solution. This is elaborated
%further below. 
\paragraph*{Expected Instance} Consider the following {\em expected instance}
of the problem, where everything happens as per expectation.  It is a single
offline instance which is a function of the given distribution over requests and
the total number of requests $m$. Every request in the support of the
distribution is also a request in this instance. The capacities of the resources
in this instance are the same as in the original instance.   Suppose request $j$
has a probability $p_j$ of arriving in the given distribution. 
The resource consumption of $j$ in the expected instance is given
by $m p_j \aij$ for all $i$ and the type $i$ profit is $m p_j \wij$.  The
intuition is that if the requests were drawn from this distribution then the
expected number of times request $j$ is seen is $m p_j$.
%and this is represented in the distribution instance by scaling the consumption and the profit vectors by $m p_j$. 
To summarize, the {\em LP relaxations} of a random instance with set of
requests $R$, and the expected
instance $E$ are as follows (slightly rewritten for convenience). 
%We write the LP relaxation instead of the integer
%program because our benchmark is an upper bound on the expected optimal
%fractional solution. 
%For convenience, we have slightly rewritten the LP when compared
%to the resource allocation LP. 

\begin{table}[H]
\begin{center}
\begin{align}
\label{online:lp:ei}
\text{LP relaxations for random and expected instances}
\end{align} 
\begin{tabular}{l|l}
$\underline{\textbf{Random Instance } \rand}$ & $\underline{\textbf{Expected Instance}  \ E}$ \\\\
Maximize $\lambda \qquad$ s.t. & Maximize $\lambda \qquad$  s.t.\\\\
$\forall ~i\in\reso, \sum_{j\in R,k\in \optn} \wijk\xjk \geq \lambda$ & 
$\forall 
~i\in\reso, \sum_{j\in\req,k\in \optn} mp_j\wijk\xjk \geq \lambda$\\\\
$\forall ~i\in\reso, \sum_{j\in R,k\in \optn} \aijk\xjk \leq {c_i}$ & $\forall 
~i\in\reso, \sum_{j\in\req,k\in \optn}mp_j\aijk\xjk \leq c_i$\\\\ 
$\forall ~j\in R, \sum_{k\in \optn} \xjk \leq 1$ & $\forall ~j\in\req, 
\sum_{k\in 
\optn} \xjk \leq 
1$\\\\
$\forall ~j\in R,k\in \optn, \xjk \geq 0.$  & $\forall ~j\in\req,k\in \optn, 
\xjk 
\geq 
0.$
\end{tabular}
\end{center}
%\caption{LP relaxations for random and expected instances}
%\label{online:lp:ei}
\end{table}

%\begin{eqnarray*}
%&&\text{maximize }\sum_{j \text{ in the support} } m p_j \wj. \xj \text{ s.t.}\\
%&& \forall ~i, \sum_{j} m p_j \aij . \xj \leq {c_i} \\
%&& \forall ~j, \xj \in \pj.
%\end{eqnarray*}
We now prove that the fractional optimal solution to
the expected instance $\opte$ is an upper bound on the expectation of $\optr$, 
where $\optr$ is the offline fractional optimum of the actual sequence of
requests in a random instance $R$.
\begin{lemma}\label{online:lem:benchmark}
$\opte \geq \ex[\optr]$
\end{lemma}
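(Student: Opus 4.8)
The plan is to exhibit one explicit feasible fractional solution to the expected instance whose objective value is at least $\ex[\optr]$; since $\opte$ is by definition the optimum of the expected-instance LP, this yields the lemma immediately. The solution will be obtained by averaging the offline optima of random instances, and the whole argument is just linearity of expectation together with the observation that the minimum of expectations dominates the expectation of the minimum.

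First I would fix notation by indexing a random instance by \emph{positions} rather than by requests, which cleanly handles a request appearing several times: $R$ consists of i.i.d. draws $J_1,\dots,J_m$ with $\Pr[J_t=j]=p_j$, and an optimal fractional solution of $R$ is some $x^R$ assigning position $t$ a vector $x^R_t\in[0,1]^K$ with $\sum_k x^R_{t,k}\le 1$ (almost surely feasible for the random-instance LP). For each request $j$ in the support and each option $k\in\optn$, define
\[
\hat{x}_{jk} \;=\; \frac{1}{m p_j}\,\ex\!\left[\sum_{t=1}^{m}\mathbf{1}[J_t=j]\,x^R_{t,k}\right].
\]
I would then check feasibility of $\hat x$ for the expected-instance LP. Non-negativity is clear. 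Summing over $k$ and using $\sum_k x^R_{t,k}\le1$ gives $\sum_k\hat x_{jk}\le \tfrac{1}{mp_j}\ex[\sum_t\mathbf{1}[J_t=j]]=1$, so the option constraints hold. For the capacity constraints, linearity of expectation gives $\sum_{j,k}mp_j\,a_{ijk}\hat x_{jk}=\ex[\sum_{t}\sum_k a_{iJ_tk}x^R_{t,k}]\le c_i$, since $x^R$ respects capacity $c_i$ in $R$ almost surely.

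Finally, for the objective, note that for every type $i$ the type-$i$ profit of $x^R$ in $R$ is one of the terms in the minimum defining $\optr$, hence is at least $\optr$; the same linearity computation then gives $\sum_{j,k}mp_j\,w_{ijk}\hat x_{jk}=\ex[\sum_t\sum_k w_{iJ_tk}x^R_{t,k}]\ge\ex[\optr]$. Taking the minimum over $i$ shows $\hat x$ attains objective value at least $\ex[\optr]$ in the expected instance, so $\opte\ge\ex[\optr]$. I do not anticipate a real obstacle; the only point needing care is the bookkeeping between positions and requests, so that the multiplicity of a request in $R$ is correctly absorbed by the $1/(mp_j)$ normalization. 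If one wants to permit continuous distributions with infinite support, the sums over the support become integrals and $x^R$ should be taken as a measurable selection of optimal solutions, but this is routine and does not change the argument.
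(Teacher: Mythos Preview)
Your proposal is correct and takes essentially the same approach as the paper: construct a feasible solution to the expected-instance LP by averaging the offline optima over random instances, then invoke optimality of $\opte$. The paper's proof is a two-sentence sketch of exactly this idea; you have simply filled in the bookkeeping (indexing by position to handle repeated requests, and explicitly using $\min_i \ex[\cdot]\ge\ex[\min_i\cdot]$ for the objective, which the paper glosses over by saying the averaged solution has profit ``equal to'' $\ex[\optr]$ when it should really say ``at least'').
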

\begin{proof}
The average of optimal solutions for all possible sequences of requests is a feasible solution to the 
expected instance with a profit equal to $\ex[\optr]$. Thus the optimal profit for
the expected instance could only be larger. 
\end{proof}

%We note here that for Lemma~\ref{lem:benchmark} hold it is necessary to allow for 
%fractional solutions to the distribution instance. Thus, to yield competitive ratios
%against the expected value of the optimal fractional solution, our algorithms
%compete with fractional optimal solutions
%to the distribution instance. 
%For the adwords problem with unbounded $\gamma$, 
%discussed in Section~\ref{sec:adwords}, our algorithms compete with optimal integral solutions to the 
%distribution instance, and thus the competitive ratio obtained is against the optimal integral solution to the distribution instance, 
%which is incomparable to the expected value of the optimal integral solution. 

The approximation factor of an algorithm in the i.i.d. model is defined as 
the ratio of the expected profit of the algorithm to the fractional optimal
profit $\opte$ for the expected instance. Let $\gamma = \max\left(\left\{\frac{\aijk}{c_i}\right\}_{i,j,k} \cup
\left\{\frac{\wijk}{\opte}\right\}_{i,j,k}\right)$ be the parameter capturing
the significance of any one request when compared to the total set of requests
that arrive online. The main result is that as $\gamma$ tends to zero, the
approximation factor ratio tends to 1. In fact, we give the almost optimal
trade-off. 

\begin{theorem}\label{online:thm:online_ra}
For any $\epsilon \geq 1/m$, Algorithm~\ref{online:alg:online_ra} achieves an objective value of
$\opte(1-O(\epsilon))$ for the online resource allocation problem with
probability at least $1-\epsilon$, assuming $\gamma =
O(\frac{\epsilon^2}{\log(n/\epsilon)})$.
Algorithm~\ref{online:alg:online_ra} does not require
any knowledge of the distribution at all.
%For any $\epsilon > 0$,  we give an algorithm such that if $\gamma = O\left(\frac{\epsilon^2}{\log(n/\epsilon)}\right)$ then  the competitive ratio of the algorithm is $1 - O(\epsilon)$. 
\end{theorem}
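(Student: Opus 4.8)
The plan is to run the ``potential function'' strategy sketched in the introduction: first analyze a fictitious algorithm \textsc{Hyp-Obl} that knows the distribution, then show that a distribution-free algorithm that greedily minimizes the same potential does at least as well in expectation. Let $\{\xjko\}$ be an optimal solution of the expected-instance LP, of value $\opte$, and let \textsc{Hyp-Obl} be the algorithm that, upon seeing request $j$, plays option $k\in\optn$ with probability $(1-\epsilon)\xjko$ and plays $\bot$ with the remaining probability (scaling down by $1-\epsilon$ manufactures the slack needed for concentration). Writing $A_i(t)$ for the consumption of resource $i$ and $W_i(t)$ for the type-$i$ profit accumulated after $t$ requests, the LP feasibility of $\{\xjko\}$ gives, for \textsc{Hyp-Obl} over $m$ i.i.d.\ steps, $\ex[A_i(m)]\le (1-\epsilon)c_i$ and $\ex[W_i(m)]\ge(1-\epsilon)\opte$; since one request moves $A_i$ by at most $\gamma c_i$ and $W_i$ by at most $\gamma\opte$, Chernoff bounds already show \textsc{Hyp-Obl} is feasible with objective $(1-O(\epsilon))\opte$ with probability $1-\epsilon$ whenever $\gamma=O(\epsilon^2/\log(n/\epsilon))$.

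To make this distribution-free, introduce the potential, after $t$ requests,
\[
\Phi_t \;=\; \sum_{i\in\reso}\exp\!\Big(\tfrac{\lambda}{c_i}\big(A_i(t)-\tfrac{t}{m}(1-\epsilon)c_i\big)\Big)\;+\;\sum_{i\in\reso}\exp\!\Big(-\tfrac{\lambda}{\opte}\big(W_i(t)-\tfrac{t}{m}(1-\epsilon)\opte\big)\Big),
\]
with $\lambda=\Theta(\epsilon^{-1}\log(n/\epsilon))$, so that $\Phi_0=2n$. Since $\opte$ is itself unknown, the algorithm first spends an $O(\epsilon)$ fraction of the requests to compute an estimate $\hoe$ of $\opte$ (accurate to a $1\pm O(\epsilon)$ factor with high probability — this is a routine sampling/LP argument that again uses $\gamma$ small), and then uses $\hoe$ in place of $\opte$; on each of the remaining requests $j_t$ it plays the option $k$ that minimizes the resulting value of $\Phi_t$. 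This is the only place distribution-dependence enters, and it enters only through an estimated scalar.

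The heart of the argument is the one-step inequality $\ex[\Phi_t\mid \mathcal F_{t-1}]\le e^{\lambda^2\gamma/m}\,\Phi_{t-1}$ for every history $\mathcal F_{t-1}$. Because the algorithm minimizes $\Phi_t$ over the finitely many options, its conditional expectation of $\Phi_t$ is at most what it would get by instead playing the \textsc{Hyp-Obl} rule $k\sim(1-\epsilon)\xjko$; for that rule we bound each of the $2n$ exponential terms separately, using the elementary moment bound $\ex[e^{\theta Y}]\le\exp(\theta\mu+\theta^2 B\mu)$ for $0\le Y\le B$, $\ex[Y]=\mu$ (and its mirror for the profit terms), together with the facts that the \textsc{Hyp-Obl} rule's expected one-step increment of $A_i$ is at most the target slope $(1-\epsilon)c_i/m$, its expected increment of $W_i$ is at least $(1-\epsilon)\opte/m$, and every increment is bounded by $\gamma c_i$ respectively $\gamma\opte$. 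This yields a per-term, per-step multiplicative factor $e^{\lambda^2\gamma/m}$, hence the same bound for the sum; iterating over all $m$ steps gives $\ex[\Phi_m]\le 2n\,e^{\lambda^2\gamma}$. Applying Markov's inequality once to the nonnegative sum $\Phi_m$, with probability $\ge 1-\epsilon$ we get $\Phi_m\le 2n e^{\lambda^2\gamma}/\epsilon$, and since each term is nonnegative each term is then $\le 2n e^{\lambda^2\gamma}/\epsilon$; taking logarithms and plugging in $\lambda=\Theta(\epsilon^{-1}\log(n/\epsilon))$ and $\gamma=O(\epsilon^2/\log(n/\epsilon))$ gives $A_i(m)\le(1-\epsilon)c_i+c_i\big(\tfrac1\lambda\log(2n/\epsilon)+\lambda\gamma\big)\le c_i$ (feasibility) and $W_i(m)\ge(1-\epsilon)\opte-\opte\big(\tfrac1\lambda\log(2n/\epsilon)+\lambda\gamma\big)\ge(1-O(\epsilon))\opte$ for every $i$ simultaneously; the $O(\epsilon m)$ requests used for estimation cost at most an $O(\epsilon)$ fraction of profit and are absorbed into the $O(\epsilon)$. (The hypothesis $\epsilon\ge 1/m$ is what makes the estimation phase and the choice of $\lambda$ meaningful.)

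\textbf{Main obstacle.} The delicate point is the one-step inequality: coupling ``the algorithm minimizes the sum'' with a drift/moment bound that must be applied term by term, and in particular dealing with profit types $i$ for which $\ex[W_i(m)]$ under \textsc{Hyp-Obl} substantially overshoots $\opte$ — there the naive bound on the per-step factor of the $i$-th profit term degrades with the overshoot, and one must either truncate $W_i(t)$ at the target in the potential (so its increments and the relevant moment are controlled, the ``already-tiny-term'' case being handled separately) or normalize that term by the actual mean rather than by $\opte$. A secondary obstacle is propagating the error of the estimate $\hoe$ for $\opte$ cleanly through the potential so that it only costs an extra $O(\epsilon)$ in the final guarantee.
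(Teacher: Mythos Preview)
Your potential-function approach and the hybrid comparison against a hypothetical algorithm that plays the scaled LP optimum are exactly what the paper does; the form of the potential (you use $e^{\lambda(\cdot)}$, the paper uses $(1\pm\epsilon)^{(\cdot)}$) and the one-step argument are equivalent. The ``overshoot'' issue you flag is a non-problem in the paper's formulation: with the convexity inequality $(1-\epsilon)^x\le 1-\epsilon x$ for $x\in[0,1]$ the per-step factor for a profit term is at most $1-\epsilon\,\ex[Y]/(\gamma\opte)$, which only \emph{improves} when $\ex[Y]$ exceeds the target; no truncation or renormalization is needed.

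The genuine gap is your estimation step. You assert that spending an $O(\epsilon)$ fraction of the requests yields an estimate of $\opte$ accurate to $1\pm O(\epsilon)$, calling this ``routine.'' It is not: with $t$ samples the sampled-LP optimum concentrates around $(t/m)\opte$ with relative error $\Theta\!\big(\sqrt{\gamma m\log(n/\delta)/t}\big)$ (this is the content of the paper's Lemma~\ref{online:lem:concOfer}). Plugging in $t=\epsilon m$ and $\gamma=\Theta(\epsilon^2/\log(n/\epsilon))$ gives relative error $\Theta(\sqrt\epsilon)$, not $\Theta(\epsilon)$. Running your potential argument with a $\Theta(\sqrt\epsilon)$-accurate stand-in for $\opte$ produces only a $(1-O(\sqrt\epsilon))\opte$ guarantee, which under the stated assumption on $\gamma$ is strictly weaker than the theorem. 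Conversely, to get $O(\epsilon)$ accuracy from a single estimation phase you would need $t=\Theta(m)$ samples, wasting a constant fraction of the profit.

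The paper resolves this with a doubling schedule rather than a single estimation phase: it runs $l=\log(1/\epsilon)$ stages of geometrically increasing lengths $t_r=\epsilon m\,2^r$, re-estimating $\opte$ after each stage and using stage-specific slacks $\epsilon_{x,r}=\Theta(\sqrt{\epsilon/2^r})$. Early stages have large error but few requests; the total loss is $\sum_r (t_r/m)\,\epsilon_{x,r}=\Theta(\epsilon^{3/2})\sum_{r<l}2^{r/2}=\Theta(\epsilon)$, which is what rescues the claimed bound. Your proposal is missing this idea; once you add the multi-stage re-estimation (and run your potential argument separately within each stage with the stage's own $\epsilon_{x,r},\epsilon_{y,r}$), the rest of your outline goes through essentially as written.
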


\begin{theorem}\label{online:thm:lb}
There exist instances with $\gamma = \frac{\epsilon^2}{\log (n)}$ such that no
algorithm, even with complete knowledge of the distribution, can get a
$1-o(\epsilon)$ approximation factor. 
%expected objective value of $\opte(1- o(\epsilon))$.
\end{theorem}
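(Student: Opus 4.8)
\noindent
The plan is to prove impossibility by showing something stronger and more robust than a bound against online algorithms: I will exhibit, for each large $n$ and each sufficiently small $\epsilon$, a single instance together with a distribution known to the algorithm for which the \emph{realized} offline optimum $\optr$ is itself only a $(1-\Omega(\epsilon))$ fraction of the expected-instance optimum $\opte$, with probability $1-o(1)$. Since any algorithm whatsoever — online, offline, or one that knows the distribution exactly — produces a feasible solution to the realized instance and hence earns at most $\optr$, it follows immediately that no algorithm can reach $(1-o(\epsilon))\,\opte$ on these instances. In other words, the lower bound is not really about online-ness at all: it shows that when $\gamma$ is as large as $\epsilon^2/\log n$, the benchmark $\opte$ already overstates what is feasible in the realized instance by a $\Theta(\epsilon)$ factor.

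\noindent
The construction is $n$ independent unit-capacity packing problems glued together by the $\min$ objective. Take $n$ resources each of capacity $C:=\mu$, and $n$ profit types, pairing resource $i$ with profit type $i$; set $\mu := \lceil \log n/\epsilon^2\rceil$ and let there be $m := n\mu$ requests. Each request independently chooses a uniformly random target $t\in\{1,\dots,n\}$ and has two options: $\bot$, which consumes nothing and contributes nothing, or ``serve'', which consumes one unit of resource $t$ and contributes one unit of profit type $t$. On the expected instance, request-type $i$ (there are $n$ of them, each with probability $1/n$) scales to $\mu$ copies, so profit type $i$ can be driven up to $\min(C,\mu)=\mu$ while respecting the capacity of resource $i$; hence $\opte=\mu$. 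Each request contributes at most one unit to one resource and at most one unit to one profit type, so $a_{ijk}/c_i\le 1/C=1/\mu$ and $w_{ijk}/\opte\le 1/\mu$, giving $\gamma=1/\mu=\epsilon^2/\log n$ as required. On a realized instance, if $N_i$ is the number of requests with target $i$, the best achievable objective is $\min_i\min(C,N_i)=\min_i N_i$ (using $\min_i N_i\le$ the average $\mu=C$), so $\optr=\min_i N_i$.

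\noindent
It remains to show $\min_i N_i \le (1-\Omega(\epsilon))\,\mu$ with probability $1-o(1)$. The vector $(N_1,\dots,N_n)$ is multinomial, with each $N_i\sim\mathrm{Bin}(m,1/n)$ of mean $\mu$; this is exactly the minimum-load question for $m=n\mu$ balls in $n$ bins, in the heavily-loaded regime $\mu=\omega(\log n)$ (which holds because $\mu=\log n/\epsilon^2$ and $\epsilon=o(1)$), for which the minimum load is $\mu-(1-o(1))\sqrt{2\mu\log n}$ with high probability. Since $\sqrt{\log n/\mu}=\epsilon$, this is $\mu\bigl(1-(1-o(1))\sqrt{2}\,\epsilon\bigr)$, so $\optr\le(1-\Omega(\epsilon))\opte$ with probability $1-o(1)$, and no algorithm can do better than that.

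\noindent
The one step that needs care — and the main obstacle — is establishing that minimum-load bound with a $\Theta(\epsilon\mu)=\Theta(\sqrt{\mu\log n})$ deficit (rather than the weaker $\Theta(\sqrt{\mu n})/n=\Theta(\sqrt{\mu/n})$ deficit that a crude per-bin fluctuation argument would give, which is only $\Theta(\epsilon/\sqrt{\log n})\cdot\mu$ and not enough). I would do it in two lines: first, a reverse-Chernoff/Stirling (moderate-deviation) estimate for the binomial gives $\Pr[N_1\le \mu-c\sqrt{\mu\log n}]\ge n^{-c^2/2+o(1)}$ for any fixed $c<\sqrt 2$ — valid since the next-order term in the exponent is $O((\log n)^{3/2}/\sqrt{\mu})=o(\log n)$ whenever $\epsilon=o(1)$ — so this probability is $n^{-1+\Omega(1)}$; second, since the multinomial counts $(N_1,\dots,N_n)$ are negatively associated, the increasing events $\{N_i>\mu-c\sqrt{\mu\log n}\}$ satisfy $\Pr[\forall i:\,N_i>\mu-c\sqrt{\mu\log n}]\le\prod_i\Pr[N_i>\mu-c\sqrt{\mu\log n}]\le(1-n^{-1+\Omega(1)})^n\to 0$. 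The subtleties to watch are exactly these: keeping the moderate-deviation regime legitimate (only $\epsilon\to 0$ is needed), and using negative association of the multinomial rather than independence to decouple the bins in the union step. Everything else — feasibility within the resource-allocation framework, the value $\opte=\mu$, and the value $\gamma=1/\mu$ — is routine bookkeeping.
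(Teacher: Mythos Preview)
Your proof is correct, and it takes a genuinely different route from the paper's.

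The paper (Section~\ref{online:sec:lb}) follows the construction of Agrawal--Wang--Ye: a single \emph{sum} objective, $n=2^z$ resources each of capacity $B$, and $z$ request categories built from complementary binary vectors, with each category containing requests of profit $4\alpha$, $3$, $2$, and $1$. The key lemma shows that any near-optimal algorithm must serve almost exactly $B/z$ of the $w_i$-type requests for every $i$; the loss then comes from the online algorithm's inability to know, when a profit-$2$ request arrives, whether enough profit-$3$ requests of that type will arrive later. This is a genuine online-versus-offline argument: in the paper's instance the realized offline optimum $\optr$ is essentially $7B=\opte$, but no online rule can match it to within $o(\sqrt{Bz})$.

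Your construction is far more elementary: you exploit the \emph{max--min} objective over $n$ profit types to manufacture $n$ independent balls-and-bins subproblems, and the deficit comes entirely from the fact that $\optr=\min_i N_i$ already falls $\Theta(\sqrt{\mu\log n})=\Theta(\epsilon\mu)$ below $\opte=\mu$ with high probability. So your bound is not about online decision-making at all --- it shows the benchmark $\opte$ is unattainable even offline on these instances.

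What each buys: your argument is a few lines, needing only the standard min-load estimate for multinomial bins (and you correctly flag the two delicate points: moderate deviations require $\epsilon=o(1)$, and negative association replaces independence in the union step). The paper's argument is longer but establishes the stronger statement that the $\Omega(\epsilon^2/\log n)$ barrier holds already for the single-profit (sum) objective --- in particular for Adwords-type problems --- where there is no $\opte$-vs-$\optr$ gap to exploit and the hardness is intrinsically about online uncertainty.
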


\newcommand{\yj}{\mathbf{y}_{j}}

\paragraph*{Oracle Assumption} We assume that we have the following oracle
available to us: given a request $j$ and a vector $\mathbf{v}$, the oracle
returns the vector $\xj$ that maximizes $\mathbf{v}.\xj$ among all $\xj$s in
$\{0,1\}^K$ satisfying $\sum_{k\in\optn} \xjk \leq 1$. 
This assumption boils down to being able to find the
maximum among $K$ numbers, but $K$ may be exponential in some cases. 
%Such an oracle can be implemented
%in polynomial time for most interesting special cases including Adwords, display
%ads, network routing. 
For the Adwords and display ads
problem (described below), $K$ is actually equal to $n$, and this is trivial. 
For network routing (described in
Section~\ref{online:sec:special_cases}), $K$ could be exponential in the size of the network, 
and this assumption corresponds to being
able to find the shortest path in a graph in polynomial time. 
For combinatorial
auctions (described in Section~\ref{online:sec:special_cases}), this corresponds
to the demand query assumption: given prices on various items, the buyer should
be able to decide in polynomial time which bundle gives her the maximum utility.
(While this is not always achievable in polynomial time, there cannot be any hope
of a posted pricing solution for combinatorial auction without this minimum
assumption. )

\paragraph*{Extensions and Special Cases} The extensions of Theorem~\ref{online:thm:online_ra} to the various
generalizations of the i.i.d. model, including the \asi\ model are presented in
Section~\ref{online:sec:asiGWL}. We refer the reader to
Section~\ref{online:sec:special_cases} for a discussion on several problems
that are special cases of the resource allocation framework and have been
previously considered. Here, we discuss two special cases --- the Adwords
problem and display ads problem.

\begin{enumerate}
\item {\bf Adwords.} In the adwords problem, there are $n$ advertisers with a
daily budget of $B_i$ for advertiser $i$. There are $m$ keywords/queries that
arrive online, and advertiser $i$ has a bid of $\bij$ for query $j$. This 
is a special case of the resource allocation framework where the set of options 
$\optn$ matches the set of resources/advertisers 
$\reso$, i.e., each query can be given to at most one advertiser, and will 
consume budget just from that advertiser.  Let $\xij$
denote the indicator variable for whether or not query $j$ was allocated to
agent $i$. After all allocation is over, agent $i$ pays $ 
\min(\sum_{j\in\req}\bij\xij,
B_i)$, i.e., the minimum of the sum of the bids for queries allocated to $i$ and
his budget $B_i$. The objective is to maximize the sum of the payments from all
advertisers --- this is again a special case of the resource allocation 
framework where this only a single profit type, and we just want to maximize 
it. One could raise a technical objection that this is not a special case of 
the resource allocation
framework because the budget constraint is not binding: the value of the
allocated bids to an advertiser can exceed his budget, although the total
payment from the advertiser will be at most the budget. But it is not difficult
to see that the LP relaxation of the offline problem can be written as in
LP~\eqref{online:lp:aDa}, which is clearly a special case of resource allocation
framework LP. Note that the benchmark is anyway an upper
bound even on the expected optimal fractional solution.  Therefore, any
algorithm that gets an $\alpha$ approximation factor for resource allocation is
also guaranteed to get the same approximation factor for Adwords.  The only
notable thing being that an algorithm for resource allocation when used for
adwords will treat the budget constraints as binding, and obtain the guarantee
promised in Theorem~\ref{online:thm:online_ra} (In our $1-1/e$ approximation
algorithm for adwords in Section~\ref{online:sec:adwordsGreedy} that holds
for all values of $\gamma$ ($\leq 1$ of course), we use this facility to exceed
budget).

\item {\bf Display Ads.} In the display ads problem, there are $n$ advertisers
and $m$ impressions arrive online. Advertiser $i$ has wants $c_i$ impressions in
total and pays $\vij$ for impression $j$, and will get paid a penalty of $\rho_i$
for every undelivered impression. If over-delivered, he will pay his bid for the
first $c_i$ impressions delivered. Letting $\bij = \vij +\rho_i$, we can write the
LP relaxation of the offline display ads problem as in LP~\eqref{online:lp:aDa},
which is clearly a special case of the resource allocation
LP, where just like the Adwords special case the set of options $\optn$ is 
equal to 
the set of resources/advertisers $\reso$, and there is only a single profit 
type.  
\end{enumerate}

\begin{table}[H]
\begin{center}
\begin{align}
\label{online:lp:aDa}
\text{LP relaxations for Adwords and Display Ads}
\end{align} 
\begin{tabular}{l|l}
$\underline{\textbf{Adwords}}$ & $\underline{\textbf{Display Ads}}$ \\\\
Maximize $\sum_{i\in \reso,j\in\req}\bij \xij$ s.t. & Maximize 
$\sum_{i\in\reso,j\in\req}\bij\xij$  s.t.\\\\
$\forall ~i\in\reso, \sum_{j\in\req} \bij\xij \leq B_i$ & $\forall ~i\in\reso, 
\sum_{j\in\req}\xij \leq c_i$\\\\
$\forall ~j\in\req, \sum_{i\in\reso} \xij \leq 1$ & $\forall ~j\in\req, 
\sum_{i\in\reso}\xij\leq 1$\\\\ 
$\forall ~i\in\reso,j\in\req, \xij \geq 0.$  & $\forall ~i\in\reso,j\in\req, 
\xij \geq 0.$
\end{tabular}
\end{center}
%\caption{LP relaxations for Adwords and Display Ads}
%\label{online:lp:aDa}
\end{table}

\subsection{Greedy Algorithm for Adwords}\label{online:subsubsec:intro_adwords}
%\newcommand{\btbr}{the bid to budget ratio}
%In the i.i.d {\em Adwords} problem, there are $n$ bidders,  and each bidder $i$ has a daily budget of $B_i$ dollars. 
%{\em Keywords} arrive online with keyword $j$ having an (unknown) probability $p_j$  of arriving in any given step. 
%%A total of $m$ keywords arrive. 
%%We assume that $p_j \geq 1/m$ for each $j$.  
%For every keyword $j$, each bidder submits a bid, $\bij$, which 
%is the profit obtained by the algorithm on allocating keyword $j$ to bidder $i$. 
%The objective is to maximize the profit, subject to the constraint that no bidder is charged more than his budget. 
%Here, the resources are the daily budgets of the bidders, the requests are the keywords, and the options are once again 
%the bidders. The amount of resource consumed and the profit are both $\bij$. 
%For this problem, with no bounds on $\gamma$, 
%we show that the greedy algorithm has a competitive ratio of $1-1/e$.
%For our results for the adwords problem with bounded $\gamma$, see
%Section~\ref{online:subsec:smallAdwords}
As noted in the introduction, the greedy algorithm is widely implemented due to
its simplicity, but its performance was known to be only a $1/2$ approximation
even in stochastic models. We show that the greedy algorithm obtains a $1-1/e$
approximation for all $\gamma$, i.e., $0\leq\gamma\leq 1$. 
\begin{theorem}\label{online:thm:adwords_greedy}
The greedy algorithm achieves an approximation factor of $1-1/e$ 
for the Adwords problem in the i.i.d. unknown distributions model 
for all $\gamma$, i.e., $0\leq\gamma\leq 1$.  
\end{theorem}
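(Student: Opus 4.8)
The plan is to carry out the potential-function argument sketched in the introduction, with the ``unused budget'' potential made precise. Recall that for Adwords $\gamma=\max_{i,j}b_{ij}/B_i$, so the hypothesis $\gamma\le 1$ is exactly the statement $b_{ij}\le B_i$ for all $i,j$; I will use only this. Fix an optimal fractional solution $x^*$ of the expected instance, so that $\sum_i x^*_{ij}\le 1$ for each request $j$, $\beta_i:=\sum_j mp_jb_{ij}x^*_{ij}\le B_i$ for each advertiser $i$, and $\opte=\sum_i\beta_i$. Let $R_i^t$ be the budget remaining to advertiser $i$ after the first $t$ requests are processed, and let $C^t:=\sum_i(B_i-R_i^t)$ be the total effective bid greedy has collected so far; a short case analysis (split on whether advertiser $i$ ever exhausts its budget) shows that the final profit of greedy equals $C^m$, so it suffices to lower bound $\E[C^m]$.

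The heart of the argument is the one-step estimate $\E\bigl[\,C^{t+1}-C^t \mid \mathcal{F}_t\,\bigr]\ge \tfrac1m\bigl(\opte-C^t\bigr)$, where $\mathcal{F}_t$ denotes the history through step $t$. To prove it I would condition on $\mathcal{F}_t$ and let $j$ be the random request at step $t+1$. Greedy collects $g:=\max_i\min(b_{ij},R_i^t)$, which, because $\sum_i x^*_{ij}\le 1$, is at least the weighted average $\sum_i x^*_{ij}\min(b_{ij},R_i^t)$; and since $0\le R_i^t\le B_i$ and $b_{ij}\le B_i$, one has the pointwise bound $\min(b_{ij},R_i^t)\ge \tfrac{R_i^t}{B_i}b_{ij}$. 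Combining these and taking expectation over $j$ (which is a given request with probability $p_j$) gives
\[
\E\bigl[\,g\mid\mathcal{F}_t\,\bigr]\ \ge\ \sum_j p_j\sum_i \frac{R_i^t}{B_i}b_{ij}x^*_{ij}\ =\ \frac1m\sum_i \frac{R_i^t}{B_i}\beta_i ,
\]
and the right-hand side is at least $\tfrac1m(\opte-C^t)$ because $\sum_i\tfrac{R_i^t}{B_i}\beta_i-(\opte-C^t)=\sum_i(B_i-R_i^t)\bigl(1-\tfrac{\beta_i}{B_i}\bigr)\ge 0$ by $R_i^t\le B_i$ and $\beta_i\le B_i$. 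Since $C^{t+1}=C^t+g$, this is the claimed estimate.

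To conclude, rewrite the estimate as $\E[\opte-C^{t+1}\mid\mathcal{F}_t]\le (1-\tfrac1m)(\opte-C^t)$, take total expectations, and iterate from $C^0=0$ to obtain $\E[\opte-C^m]\le (1-\tfrac1m)^m\opte\le e^{-1}\opte$, hence $\E[C^m]\ge (1-1/e)\opte$, which is the theorem.

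The only step I expect to require real thought is the choice of potential. The naive ``unused budget'' $\sum_i R_i^t$ does not shrink by a factor $1-1/m$ in general — insisting that it does would amount to comparing greedy against $\sum_i B_i$, which can be far larger than $\opte$. The right object is $\opte-C^t=\sum_i R_i^t-\bigl(\sum_i B_i-\opte\bigr)$, the unused budget net of the benchmark's slack; once one commits to this, the per-step inequality goes through, with the pointwise estimate $\min(b_{ij},R_i^t)\ge (R_i^t/B_i)b_{ij}$ being the single place where $\gamma\le 1$ is used — so, as promised, no smallness of $\gamma$ is required.
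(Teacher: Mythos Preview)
Your argument is correct and tracks the paper's closely: the potential $\opte-C^t$ is exactly the paper's $f(t)=f(0)-\sum_i\sum_{r\le t}w_i(r)$ with $f(0)=\opte$, and the recursion $\E[f(t)\mid f(t-1)]\le(1-1/m)f(t-1)$ together with the final telescoping are identical. The only real difference is in how the one-step bound is established. The paper first analyzes a hypothetical algorithm that serves $j$ to $i$ with probability $x^*_{ij}$, proves the per-advertiser estimate $\E[w_i^h(t)\mid f_i^h(t-1)]\ge f_i^h(t-1)/m$ via a case split on whether $\sum_{j\in\mathrm{EXCEED}_i(t)}x_{ij}y_j\ge 1$, and then transfers the bound to greedy by observing that greedy dominates the hypothetical step by step. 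You instead fold the greedy-dominates-hypothetical step in directly (max $\ge$ convex combination) and replace the case split by the single pointwise inequality $\min(b_{ij},R_i^t)\ge(R_i^t/B_i)\,b_{ij}$, followed by the algebraic identity $\sum_i(R_i^t/B_i)\beta_i-(\opte-C^t)=\sum_i(B_i-R_i^t)(1-\beta_i/B_i)\ge 0$. The two routes are equivalent in strength; yours is tidier and isolates more transparently that $b_{ij}\le B_i$ (i.e.\ $\gamma\le 1$) is used exactly once.
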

We note here that the competitive ratio of $1-1/e$ is tight for the greedy 
algorithm~\cite{GM08}. It is however not known to be tight for an arbitrary algorithm.

\subsection{Fast Approximation Algorithms for Large Mixed Packing and
Covering Integer Programs}\label{online:subsec:intro_covering_packing}

~\citet{CCDJS10} consider the following problem: given a  bipartite graph 
$G=(L,R,E)$
 where $m = |L| \gg |R| =n$, does there exist an assignment $M: L \rightarrow R$ 
with $(j,M(j)) \in E $ for all $j \in L$, and such that for every vertex $i \in R$, $|M^{-1}(i)| \geq B_i$ 
for some given values $B_i$. Since $m$ is very large classic matching algorithms
are not useful.~\citet{CCDJS10} gave an algorithm that runs in time
linear\footnote{In fact, the algorithm makes a single pass through this graph.}
in the number of edges of an induced subgraph obtained by taking a random sample
from $L$ of size $\bigo{\frac{m\log n}{\min_i\{B_i\}\epsilon^2}}$, for a
gap-version of the problem with gap $\epsilon$.  Such an algorithm is very useful in a variety of applications
involving ad assignment for online advertising, particularly when ${\min_i\{B_i\}}$ is
large. 

We consider a generalization of the above problem inspired by the resource
allocation framework.  In fact, we consider the following mixed 
covering-packing integer program. 
Suppose that there are $n$ packing constraints, one for each $i\in [n]$ of the
form $\sum_{j=1}^m \aij \cdot\xj \leq  c_i$ and 
$n$ covering constraints, one for each $i \in [n]$ of the form
$\sum_{j=1}^m \wij \cdot \xj \geq d_i$. 
Each $\xj$ (with coordinates $\xjk$) is constrained to be in $\{0,1\}^K$ and satisfy $\sum_k \xjk \leq
1$. The $\aij$'s and $\wij$'s (and hence
$\xj$'s) are non-negative vectors of length $K$ with coordinates $\aijk$ and
$\wijk$. Does there exist a feasible solution to this system of constraints? 
The gap-version of this problem is as follows. Distinguish between the two cases, with a high probability, say $1-\delta$: 
\begin{itemize}
\item{YES:} There is a feasible solution. 
\item{NO:} There is no feasible solution even if all the $c_i$'s are multiplied by $1+ \epsilon$ and all
the $d_i$'s are multiplied by $1-\epsilon$. 
\end{itemize}
We note that solving (offline) an optimization problem in the resource allocation framework can be reduced to 
the above problem through a binary search on the objective function value. 

%Suppose as in \cite{CCDJS10} that $m$ is much larger than $n$. 
%\paragraph*{Oracle Assumption}
%We assume that we have the following oracle available to us:
%given a request $j$ and a vector $v$, the oracle returns the vector $\xj$ that
%maximizes $v\cdot\xj$ among all $\xj$'s in $\{0, 1\}^K$ satisfying $\sum_k \xjk
%\leq 1$.

%Assume that solving the following oracle call costs unit time: given $j$ and $\mathbf{v}$,
%find $\xj \in \{0,1\}^K$ that maximizes 
%$\mathbf{v}.\xj$.  %Let $\gamma = \max \{ i\in [n_1], j \in [m]: \frac{\aij.\xj}{c_i} \} 
%\cup  \{ i\in [n_2], j \in [m]: \frac{\bbij.\xj}{d_i} \} .$
Let $\gamma = \max\left(\left\{\frac{\aijk}{c_i}\right\}_{i,j,k} \cup
\left\{\frac{\wijk}{d_i}\right\}_{i,j,k}\right)$. 
\begin{theorem}\label{online:thm:coveringPacking}
For any $\epsilon > 0$, assuming $\gamma =
O(\frac{\epsilon^2}{\log(n/\epsilon)})$, Algorithm~\ref{online:alg:offline_ra} 
solves the gap version of the mixed covering-packing integer program  
with $\Theta(\frac{\gamma m \log (n/\delta) }{\epsilon^2})$ oracle calls. 
\end{theorem}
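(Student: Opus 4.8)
The plan is to reduce the offline feasibility question to running an online resource-allocation algorithm on a small uniform sample, and then to use concentration in \emph{both} directions. Algorithm~\ref{online:alg:offline_ra} will draw $N=\Theta\!\big(\tfrac{\gamma m\log(n/\delta)}{\epsilon^2}\big)$ requests $j_1,\dots,j_N$ i.i.d.\ uniformly (with replacement) from $\req$ and feed them one at a time to the online algorithm for the \emph{scaled} instance that keeps the same option sets and vectors $\aij,\wij$ but replaces each capacity $c_i$ by $c_iN/m$ and each demand $d_i$ by $d_iN/m$. Exactly as in the high-level description behind Theorem~\ref{online:thm:online_ra}, the algorithm maintains one moment-generating-function term per constraint, namely $\exp\!\big(\eta\cdot\tfrac{m}{Nc_i}\sum_{t\le\tau}\aij\!\cdot\!\xj\big)$ for packing constraint $i$ and $\exp\!\big(-\eta'\cdot\tfrac{m}{Nd_i}\sum_{t\le\tau}\wij\!\cdot\!\xj\big)$ for covering constraint $i$, with $\eta,\eta'=\Theta(\log(n/\delta)/\epsilon)$; at step $t$ it calls the oracle with value vector equal to the negative gradient of this potential to pick the option $\xj$ that, up to the second-order terms, minimizes the potential increase, and at the end it outputs YES iff the resulting assignment of the $N$ sampled requests satisfies $\sum_t\aij\!\cdot\!\xj\le(1+\epsilon)c_iN/m$ and $\sum_t\wij\!\cdot\!\xj\ge(1-\epsilon)d_iN/m$ for every $i$. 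The running time is $\Theta(N)=\Theta(\gamma m\log(n/\delta)/\epsilon^2)$ oracle calls.

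\textbf{Completeness (YES).} A feasible integral solution to the input system is in particular a feasible fractional one, which by uniformity is \emph{exactly} a feasible solution to the expected instance of the scaled online problem (request $j$ contributes $\tfrac{N}{m}\aij$ and $\tfrac{N}{m}\wij$ against capacities $c_iN/m$ and demands $d_iN/m$, in the spirit of Lemma~\ref{online:lem:benchmark}); moreover the ratio governing Theorem~\ref{online:thm:online_ra} on this induced instance is $\gamma m/N=O(\epsilon^2/\log(n/\delta))$ by the choice of $N$. Hence completeness is essentially Theorem~\ref{online:thm:online_ra} itself: the ``Hypothetical-Oblivious'' strategy that replays this fractional solution keeps $\E[\Phi_N]$ below the threshold that $\eta,\eta'$ were tuned against, the small-$\gamma$ hypothesis bounding the second-order terms accumulated over the $N$ steps by $O(\epsilon)$; since our algorithm's per-step increase of $\Phi$ never exceeds Hypothetical-Oblivious's, a Markov bound on $\Phi_N$ plus a union bound over the $2n$ terms shows that with probability at least $1-\delta/2$ every scaled constraint holds within the stated $(1\pm\epsilon)$ tolerance, so the algorithm outputs YES.

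\textbf{Soundness (NO).} I would argue the contrapositive: if the scaled system \emph{on the sample} admits an assignment (integral, hence in particular fractional) meeting the relaxed constraints the algorithm tests for, then with probability at least $1-\delta/2$ the full system admits a fractional assignment meeting the constraints relaxed by only $O(\epsilon)$, contradicting the NO promise. Since the input is fixed, if it is a NO instance then LP duality (Farkas' lemma over the polytope $\{\xj\in[0,1]^K:\sum_k\xjk\le1\}$, with the $\bot$ option available) yields a \emph{fixed} infeasibility certificate for the $O(\epsilon)$-relaxed system: nonnegative prices $u_i$ on packing and $v_i$ on covering constraints such that, after normalization, $\sum_{j\in\req}\max_{\xj}\big(\sum_i v_i\tfrac{\wij\cdot\xj}{d_i}-\sum_i u_i\tfrac{\aij\cdot\xj}{c_i}\big)<\sum_i v_i-\sum_i u_i-\Omega(\epsilon)$. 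Evaluating the left side on $j_1,\dots,j_N$ is a sum of $N$ i.i.d.\ terms, each of magnitude $O(\gamma(\sum_i u_i+\sum_i v_i))$ because every $\aijk/c_i$ and $\wijk/d_i$ is at most $\gamma$, with mean equal to the per-request average above; a Chernoff bound with $N=\Omega(\gamma m\log(n/\delta)/\epsilon^2)$ puts the sample average below $\tfrac1m(\sum_i v_i-\sum_i u_i)-\Omega(\epsilon/m)$ with probability $1-\delta/2$. Scaling by $N/m$, this certifies that the sampled system is itself $O(\epsilon)$-relaxed-infeasible, so no assignment of the tested form exists and the algorithm outputs NO.

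\textbf{Main obstacle.} The step I expect to be hardest is the soundness direction. It does not follow from Theorem~\ref{online:thm:online_ra}, since it must push a property of the random sample \emph{back} onto the fixed full instance, and the naive ``blow each sampled assignment up by $m/N$'' lifting fails --- unsampled requests contribute nothing to covering constraints, while multiply-sampled requests can violate $\sum_k\xjk\le1$. Routing through a duality certificate avoids this, but then the delicate point is bounding the certificate's $\ell_1$-norm (equivalently the range of the i.i.d.\ terms) tightly enough that the Chernoff bound needs only $\Theta(\gamma m\log(n/\delta)/\epsilon^2)$ samples and not more; this is exactly where the small-$\gamma$ hypothesis earns its keep, mirroring how it controls the second-order terms of the potential in the completeness argument.
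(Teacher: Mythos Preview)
Your overall architecture matches the paper's: sample $T=\Theta(\gamma m\log(n/\delta)/\epsilon^2)$ requests, run the potential-based online algorithm, and for soundness route through an LP-dual (Farkas) certificate whose per-request terms are bounded by $\gamma$ after normalizing $\sum_i u_i+\sum_i v_i\le 1$, then apply Chernoff to the sampled sum. The YES analysis is essentially Theorem~\ref{online:thm:online_ra} reused on the uniform distribution over $\req$, exactly as you say, and the paper's NO argument (Lemma~\ref{lem:no_instance}) is precisely your ``evaluate the fixed dual optimum $(\alpha^*,\rho^*,\beta^*)$ on the random sample and concentrate $\sum_{j\in R}\beta_j^*$'' idea.

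There is one genuine gap in your soundness direction. You write that a fractional assignment for the full system with $O(\epsilon)$ slack ``contradicts the NO promise'', and equivalently that a NO instance immediately yields a Farkas certificate for the $O(\epsilon)$-relaxed \emph{LP}. But the NO promise is about the \emph{integer} program: it says no $\{0,1\}$ solution exists even with the stated slack. The LP relaxation could in principle be feasible while the IP is not, so Farkas does not fire automatically. The paper closes this gap with Lemma~\ref{lem:no_case_ip_lp}: if the LP is feasible with slack $s$, then randomized rounding plus Chernoff (each coordinate bounded by $\gamma$, and $\gamma=O(\epsilon^2/\log(n/\epsilon))$) produces an \emph{integer} solution with slack $s(1+\epsilon)+\epsilon$. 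Taking the contrapositive (Corollary~\ref{cor:mixed_pc}), a NO instance with slack $3\epsilon(1+\epsilon)$ forces the LP to be infeasible with slack $2\epsilon$, which is what gives you the certificate with value $\Omega(\epsilon)$ that your Chernoff step needs. This is also why the paper states the NO case with slack $3\epsilon(1+\epsilon)$ rather than $\epsilon$. Once you insert this rounding lemma, your argument is complete and coincides with the paper's.
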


\subsection{Chernoff Bounds} We present here the form of Chernoff bounds that 
we use
throughout the rest of this paper. Let $X = \sum_i X_i$, where $X_i \in [0,B]$ are i.i.d random
variables. Let $\ex[X]=\mu$. Then, for all $\epsilon > 0$, 
$$ \Pr[X < \mu(1-\epsilon)] < \exp\left(\frac{-\epsilon^2\mu}{2B}\right).$$
Consequently, for all $\delta > 0$, with probability at least $1-\delta$, 
$$X -\mu \geq -\sqrt{2\mu B\ln(1/\delta)}.$$
Similarly, for all $\epsilon \in [0, 2e-1]$,
$$ \Pr[X > \mu(1+\epsilon)] < \exp\left(\frac{-\epsilon^2\mu}{4B}\right).$$
Consequently, for all $\delta > \exp(\frac{-(2e-1)^2\mu}{4B})$, with probability at least $1-\delta$, 
$$X -\mu \leq \sqrt{4\mu B\ln(1/\delta)}.$$ 
For $\epsilon > 2e-1$,
$$ \Pr[X > \mu(1+\epsilon)] < 2^{-(1+\epsilon)\mu/B}.$$

%In all our applications of this bound, we ensure
%that $\epsilon \in [0,2e-1]$. 

\section{Near-Optimal Prior Robust Online Algorithms 
for Resource Allocation}\label{online:sec:online}
%\label{sec:online}
For convenience, we begin by rewriting the LP relaxation of a random instance 
$\rand$ of the online resource allocation problem
and the expected instance (already defined in
Section~\ref{online:subsec:online_ra} as LP~\eqref{online:lp:ei}).
\begin{table}[H]
\begin{center}
\begin{align}
\label{online:lp:eiRi}
\text{LPs for random and expected instances}
\end{align}
\begin{tabular}{l|l}
$\underline{\textbf{Random Instance } \rand}$ & $\underline{\textbf{Expected Instance} \ E}$ \\\\
Maximize $\lambda\qquad$ s.t. & Maximize $\lambda\qquad$  s.t.\\\\
$\forall ~i\in \reso, \sum_{j\in R,k\in \optn} \wijk\xjk \geq \lambda$ & 
$\forall 
~i \in 
\reso, 
\sum_{j\in\req,k\in \optn}mp_j\wijk\xjk \geq \lambda$\\\\
$\forall ~i \in \reso, \sum_{j\in R,k \in \optn} \aijk\xjk \leq {c_i}$ & 
$\forall 
~i \in 
\reso, 
\sum_{j\in\req,k\in \optn}mp_j\aijk\xjk \leq c_i$\\\\ 
$\forall ~j\in R, \sum_{k\in \optn} \xjk \leq 1$ & $\forall ~j \in \req, 
\sum_{k\in 
\optn} 
\xjk \leq 
1$\\\\
$\forall ~j\in R,k \in \optn, \xjk \geq 0.$  & $\forall ~j \in \req,k \in 
\optn, 
\xjk 
\geq 0.$
\end{tabular}
\end{center}
%\caption{LPs for random and expected instances}
%\label{online:lp:eiRi}
\end{table}

We showed in Lemma~\ref{online:lem:benchmark} that $\opte \geq \ex[\optr]$. All
our approximation guarantees are w.r.t. the stronger benchmark of $\opte$ which
is the optimal fractional solution of the expected instance. We would like to
remind the reader that while the benchmark is allowed to be fractional, the
online algorithm of course is allowed to find only integral solutions. 

We divide the rest of this section into four subsections. The subsections progressively
weaken the assumptions on knowledge of the distribution of the input. 
\begin{enumerate}
\item In section~\ref{online:sec:HOGWL} we develop a hypothetical algorithm called
Hypothetical-Oblivious-Conservative, denoted by $\hoe$, that achieves an objective value of
$\opte(1-2\epsilon)$ w.p. at least $1-\epsilon$ assuming $\gamma = O\left(\frac{\epsilon^2}{\log(n/\epsilon)}\right)$. 
Theorem~\ref{online:thm:HOGWL} is the main result of this section.
The algorithm is hypothetical because it assumes knowledge of the entire distribution, where as the goal
of this paper is to develop algorithms that work without distributional knowledge. 
\item In section~\ref{online:sec:noDoublingGWL} we design an algorithm for the online resource
allocation problem that achieves the same guarantee as the Hypothetical-Oblivious-Conservative algorithm $\hoe$, 
without any knowledge of the distribution except for a single parameter
of the distribution --- the value of $\opte$. Theorem~\ref{online:thm:noDoublingGWL} is the main result of this section. 
\item In section~\ref{online:sec:doublingGWL} we design an algorithm for the online resource allocation
problem that achieves an objective value of at least $\opte(1-O(\epsilon))$ w.p. at least $1-\epsilon$ 
assuming $\gamma = O(\frac{\epsilon^2}{\log(n/\epsilon)})$ without any knowledge
at all about the distribution. The algorithm in Section~\ref{online:sec:noDoublingGWL}
serves as a good warm-up for the algorithm in this section. Theorem~\ref{online:thm:online_ra} is the main result
of this section. 
\item In section~\ref{online:sec:asiGWL} we relax the assumption that the
distribution from which the requests are drawn is i.i.d.; we give three
different generalizations of the i.i.d. model with strong revenue guarantees
as in the i.i.d. model. 
\end{enumerate}

\subsection{Completely Known Distributions}\label{online:sec:HOGWL}
When the distributions are completely known, we first compute the expected
instance and solve its LP relaxation (LP~\eqref{online:lp:eiRi}) optimally. Let 
$\xjko$
denote the optimal solution to the expected LP~\eqref{online:lp:eiRi}. The 
Hypothetical-Oblivious
algorithm $\ho$ works as follows: when request $j$ arrives, it serves it using
option $k$ with probability $\xjko$. Let $\xito$ denote the amount of resource $i$ consumed in
step $t$ for the algorithm $\ho$.  Thus the total amount of resource $i$
consumed over the entire $m$ steps of algorithm $\ho$ is $\sum_{t=1}^{m} \xito$.
Note that $\ex[\xito] = \sum_{j,k}p_j\aijk\xjko \leq
\frac{c_i}{m}$. Thus, we can bound the probability that
$\Pr[\sum_{t=1}^{m}\xito \geq c_i(1+\epsilon)]$ using Chernoff
bounds. We explicitly derive this bound here since we use this derivation in
designing the algorithm in Section~\ref{online:sec:noDoublingGWL}.

Since we cannot exceed $c_i$ amount of resource consumption by any non-zero amount, we need
to be more conservative than $\ho$. So we analyze
the following algorithm $\hoe$, called Hypothetical-Oblivious-Conservative,
instead of $\ho$: when request $j$ arrives, it serves it using
option $k$ with probability $\frac{\xjko}{1+\epsilon}$, where $\epsilon$ is an error
parameter of algorithm designer's choice. Let $\xitoe$ denote the amount of resource $i$ consumed in
step $t$ for the algorithm $\hoe$. Note that $\ex[\xitoe] \leq
\frac{c_i}{(1+\epsilon)m}$. Thus, even with a $(1+\epsilon)$ deviation using Chernoff bounds,
the resource consumption is at most $c_i$. 

We begin by noting that $\xitoe \leq \gamma c_i$ by the definition of $\gamma$. For all $\epsilon \in [0,1]$ we have,
{\allowdisplaybreaks\begin{align*}
\Pr\bigg[\sum_{t=1}^{m}\xitoe \geq \frac{c_i}{1+\epsilon}(1+\epsilon)\bigg]
&= \Pr\bigg[\frac{\sum_{t=1}^{m}\xitoe}{\gamma c_i} \geq \frac{1}{\gamma}\bigg]\\
& = \Pr\bigg[(1+\epsilon)^{\frac{\sum_{t=1}^{m}\xitoe}{\gamma c_i}} \geq (1+\epsilon)^{\frac{1}{\gamma}}\bigg]\\
&\leq \ex\bigg[(1+\epsilon)^{\frac{\sum_{t=1}^{m}\xitoe}{\gamma c_i}}\bigg]/(1+\epsilon)^{\frac{1}{\gamma}}\\
&= \ex\bigg[\prod_{t=1}^{m}(1+\epsilon)^{\frac{\xitoe}{\gamma c_i}}\bigg]/(1+\epsilon)^{\frac{1}{\gamma}}\\
&\leq \ex\bigg[\prod_{t=1}^{m}\left(1+\epsilon{\frac{\xitoe}{\gamma c_i}}\right)\bigg]/(1+\epsilon)^{\frac{1}{\gamma}}\\
&\leq \bigg[\prod_{t=1}^{m}\left(1+\frac{\epsilon}{(1+\epsilon)\gamma m}\right)\bigg]/(1+\epsilon)^{\frac{1}{\gamma}}\\
&\leq\left(\frac{e^{\epsilon}}{(1+\epsilon)^{1+\epsilon}}\right)^{\frac{1}{\gamma(1+\epsilon)}}\\
&\leq e^{\frac{-\epsilon^2}{4\gamma}\frac{1}{1+\epsilon}}\\
&\leq \frac{\epsilon}{2n}
\end{align*}}
where the first inequality follows from Markov's inequality, the second from convexity of exponential function together with 
with the fact that $\xitoe \leq \gamma c_i$, the third from $\ex[\xitoe]\leq \frac{c_i}{(1+\epsilon)m}$, and the fourth from $1+x \leq e^x$,
the fifth is standard for all $\epsilon \in [0,1]$, and the sixth follows from $\gamma = O(\epsilon^2/\log(n/\epsilon))$ for 
an appropriate choice of constant inside the big-oh coupled with $n \geq 2$. 

\begin{remark}
At first sight this bound might seem anomalous --- the bound $\frac{\epsilon}{2n}$ is increasing in $\epsilon$, i.e., 
the probability of a smaller deviation is smaller than the probability
of a larger deviation! The reason for this anomaly is that $\gamma$ is related to $\epsilon$ as 
$\gamma = O(\frac{\epsilon^2}{\log(n/\epsilon)})$, and smaller the $\gamma$, the better revenue we can get
(i.e., more granular requests leads to lesser wastage from errors, and hence
more revenue). Thus a small deviation for small $\gamma$ has 
a smaller probability than a larger deviation for a larger $\gamma$.
\end{remark}

Similarly let $\yitoe$ denote the revenue obtained from type $i$ profit in step $t$ for the 
algorithm $\hoe$. Note that $\ex[\yitoe] = \sum_{j,k}p_j\wijk\frac{\xjko}{1+\epsilon} \geq
\frac{\opte}{(1+\epsilon)m}$. By the definition of $\gamma$, we have
$\yitoe \leq \gamma\opte$. For all $\epsilon \in [0,1]$ we have,
{\allowdisplaybreaks\begin{align*}
\Pr\bigg[\sum_{t=1}^{m}\yitoe \leq \frac{\opte}{1+\epsilon}(1-\epsilon)\bigg]
&= \Pr\bigg[\frac{\sum_{t=1}^{m}\yitoe}{\gamma \opte} \leq \frac{1-\epsilon}{\gamma(1+\epsilon)}\bigg]\\
& = \Pr\bigg[(1-\epsilon)^{\frac{\sum_{t=1}^{m}\yitoe}{\gamma \opte}} \geq (1-\epsilon)^{\frac{1-\epsilon}{\gamma(1+\epsilon)}}\bigg]\\
&\leq \ex\bigg[(1-\epsilon)^{\frac{\sum_{t=1}^{m}\yitoe}{\gamma \opte}}\bigg]/(1-\epsilon)^{\frac{1-\epsilon}{\gamma(1+\epsilon)}}\\
&= \ex\bigg[\prod_{t=1}^{m}(1-\epsilon)^{\frac{\yitoe}{\gamma \opte}}\bigg]/(1-\epsilon)^{\frac{1-\epsilon}{\gamma(1+\epsilon)}}\\
&\leq \ex\bigg[\prod_{t=1}^{m}\left(1-\epsilon{\frac{\yitoe}{\gamma \opte}}\right)\bigg]/(1-\epsilon)^{\frac{1-\epsilon}{\gamma(1+\epsilon)}}\\
&\leq \bigg[\prod_{t=1}^{m}\left(1-\frac{\epsilon}{(1+\epsilon)\gamma m}\right)\bigg]/(1-\epsilon)^{\frac{1-\epsilon}{\gamma(1+\epsilon)}}\\
&\leq\left(\frac{e^{-\epsilon}}{(1-\epsilon)^{1-\epsilon}}\right)^{\frac{1}{\gamma(1+\epsilon)}}\\
&\leq e^{\frac{-\epsilon^2}{2\gamma}\frac{1}{1+\epsilon}}\\
&\leq \frac{\epsilon}{2n}
\end{align*}}

Thus, we have all the capacity constraints satisfied, (i.e., $\sum_i \xitoe \leq c_i$),
and all resource profits are at least $\frac{\opte}{1+\epsilon}(1-\epsilon)$ 
(i.e., $\sum_i \yitoe \geq \frac{\opte}{1+\epsilon}(1-\epsilon) \geq \opte(1-2\epsilon)$),
with probability at least $1-2n\cdot\epsilon/2n = 1-\epsilon$. This proves the following theorem:
\begin{theorem}\label{online:thm:HOGWL}
For any $\epsilon > 0$, the Hypothetical-Oblivious-Conservative algorithm $\hoe$
achieves an objective value of $\opte(1-2\epsilon)$ for the online resource
allocation problem with probability at least $1-\epsilon$, assuming $\gamma =
O(\frac{\epsilon^2}{\log(n/\epsilon)})$. 
\end{theorem}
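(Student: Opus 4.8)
The plan is to prove Theorem~\ref{online:thm:HOGWL} by two Chernoff-style tail bounds — one for each resource's capacity constraint and one for each profit type's revenue lower bound — and then to take a union bound over all $2n$ of these events. First I would record the relevant facts about the conservative algorithm $\hoe$. Since $\hoe$ independently serves each arriving request $j$ with option $k$ with probability $\xjko/(1+\epsilon)$, and requests are i.i.d., the per-step consumptions $\xitoe$ are i.i.d.\ across $t$ with $\ex[\xitoe] = \sum_{j,k} p_j \aijk \xjko/(1+\epsilon) \le c_i/((1+\epsilon)m)$, and likewise the per-step profits $\yitoe$ are i.i.d.\ with $\ex[\yitoe] \ge \opte/((1+\epsilon)m)$. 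The second ingredient is boundedness: directly from the definition of $\gamma$ we have $\xitoe \le \gamma c_i$ and $\yitoe \le \gamma\opte$. These are exactly the hypotheses one needs to run the moment-generating-function argument.

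Second, for each resource $i$ I would bound $\Pr\!\big[\sum_{t=1}^m \xitoe \ge c_i\big]$. Rather than optimizing a free parameter in the exponent, I would take the exponential-moment method with base $(1+\epsilon)$: rewrite the event as $(1+\epsilon)^{\sum_t \xitoe/(\gamma c_i)} \ge (1+\epsilon)^{1/\gamma}$, apply Markov, use independence to replace $\ex[\prod_t \cdot]$ by $\prod_t \ex[\cdot]$, and use convexity of $x \mapsto (1+\epsilon)^x$ on $[0,1]$ together with $\xitoe/(\gamma c_i) \in [0,1]$ to get $(1+\epsilon)^{\xitoe/(\gamma c_i)} \le 1 + \epsilon\,\xitoe/(\gamma c_i)$. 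Substituting the mean bound and then $1+x \le e^x$ collapses the product, leaving $\big(e^{\epsilon}/(1+\epsilon)^{1+\epsilon}\big)^{1/(\gamma(1+\epsilon))}$; the standard scalar estimate $e^{\epsilon}/(1+\epsilon)^{1+\epsilon} \le e^{-\epsilon^2/4}$ (valid for $\epsilon \in [0,1]$) turns this into $e^{-\epsilon^2/(4\gamma(1+\epsilon))}$, and finally plugging $\gamma = O(\epsilon^2/\log(n/\epsilon))$ with a suitable constant (using $n \ge 2$) makes it at most $\epsilon/(2n)$. The mirror-image computation for each profit type $i$ — now with base $(1-\epsilon)$, convexity giving $(1-\epsilon)^{\yitoe/(\gamma\opte)} \le 1 - \epsilon\,\yitoe/(\gamma\opte)$, and the estimate $e^{-\epsilon}/(1-\epsilon)^{1-\epsilon} \le e^{-\epsilon^2/2}$ — bounds $\Pr\!\big[\sum_t \yitoe \le \opte(1-\epsilon)/(1+\epsilon)\big]$ by $\epsilon/(2n)$ as well.

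Finally I would union-bound over the $n$ capacity-violation events and the $n$ revenue-shortfall events: with probability at least $1 - 2n \cdot \epsilon/(2n) = 1-\epsilon$, every resource satisfies $\sum_t \xitoe \le c_i$ — so the produced solution is genuinely feasible, which is precisely why the conservative scaling by $1/(1+\epsilon)$ was introduced, since unlike the profit side we cannot tolerate any overshoot on a capacity — and simultaneously every profit type achieves $\sum_t \yitoe \ge \opte(1-\epsilon)/(1+\epsilon) \ge \opte(1-2\epsilon)$, so the objective $\min_i \sum_t \yitoe$ is at least $\opte(1-2\epsilon)$. This is the claimed statement. There is no deep obstacle here; the only real work is the bookkeeping in the two tail bounds — in particular choosing the exponential base as $(1\pm\epsilon)$ so that the convexity step and the mean bound mesh cleanly, and verifying the two elementary scalar inequalities on $[0,1]$ that convert $e^{\pm\epsilon}/(1\pm\epsilon)^{1\pm\epsilon}$ into the clean form $e^{-c\epsilon^2/(\gamma(1+\epsilon))}$ — after which the union bound and the passage from $\opte(1-\epsilon)/(1+\epsilon)$ to $\opte(1-2\epsilon)$ are immediate.
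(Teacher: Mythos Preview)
Your proposal is correct and follows essentially the same approach as the paper's own proof: the same exponential-moment Chernoff argument with bases $(1\pm\epsilon)$, the same convexity step $(1\pm\epsilon)^x \le 1 \pm \epsilon x$ on $[0,1]$, the same scalar estimates $e^{\epsilon}/(1+\epsilon)^{1+\epsilon} \le e^{-\epsilon^2/4}$ and $e^{-\epsilon}/(1-\epsilon)^{1-\epsilon} \le e^{-\epsilon^2/2}$, and the same union bound over the $2n$ events to conclude. There is nothing to add.
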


\subsection{Unknown Distribution, Known $\opte$}\label{online:sec:noDoublingGWL}
We now design an algorithm\footnote{Note that the notation $\reso$ that we use 
for the set of advertisers/resources is different from the non-calligraphic $A$ 
that we use for an algorithm. Also, it is immediate from the context which one 
we are referring to.} $A$ without knowledge of the distribution, but just 
knowing 
a single parameter $\opte$. Let $A^s\hoe^{m-s}$ be a hybrid
algorithm that runs $A$ for the first $s$ steps and $\hoe$ for the remaining $m-s$ steps. 
Let $\epsilon \in [0,1]$ be the error parameter, which is the algorithm designer's choice. 
Call the algorithm a failure if at least one of the following fails:
\begin{enumerate}
\item For all $i$, $\sum_{t=1}^m \xita \leq c_i$.
%\item For all $i$, $\sum_{t=1}^m \xito \leq c_i(1-\epsilon^2)$.
\item For all $i$, $\sum_{t=1}^m \yita \geq \opte(1-2\epsilon)$.
\end{enumerate}

For any algorithm $A$, let the amount of resource $i$ consumed in the $t$-th step
be denoted by $\xita$ and the amount of resource $i$ profit be denoted by $\yita$. 
Let $\sxia = \sum_{t=1}^s \xita$ denote the amount of resource $i$ consumed in the first
$s$ steps, and let $\syia = \sum_{t=1}^s \yita$ denote the resource $i$ profit
in the first $s$ steps. Similar to the derivation in Section~\ref{online:sec:HOGWL} which bounded
the failure probability of $\hoe$, we can bound the failure probability of any
algorithm $A$, i.e.,

{\allowdisplaybreaks\begin{align}\label{online:eqn:fpNDx}
\Pr\bigg[\sum_{t=1}^{m}\xita \geq \frac{c_i}{1+\epsilon}(1+\epsilon)\bigg]
&= \Pr\bigg[\frac{\sum_{t=1}^{m}\xita}{\gamma c_i} \geq \frac{1}{\gamma}\bigg]\nonumber\\
& = \Pr\bigg[(1+\epsilon)^{\frac{\sum_{t=1}^{m}\xita}{\gamma c_i}} \geq (1+\epsilon)^{\frac{1}{\gamma}}\bigg]\nonumber\\
&\leq \ex\bigg[(1+\epsilon)^{\frac{\sum_{t=1}^{m}\xita}{\gamma c_i}}\bigg]/(1+\epsilon)^{\frac{1}{\gamma}}\nonumber\\
&= \ex\bigg[\prod_{t=1}^{m}(1+\epsilon)^{\frac{\xita}{\gamma c_i}}\bigg]/(1+\epsilon)^{\frac{1}{\gamma}}
\end{align}}
{\allowdisplaybreaks\begin{align}\label{online:eqn:fpNDy}
\Pr\bigg[\sum_{t=1}^{m}\yita \leq \frac{\opte}{1+\epsilon}(1-\epsilon)\bigg]
&= \Pr\bigg[\frac{\sum_{t=1}^{m}\yita}{\gamma \opte} \leq \frac{1-\epsilon}{\gamma(1+\epsilon)}\bigg]\nonumber\\
& = \Pr\bigg[(1-\epsilon)^{\frac{\sum_{t=1}^{m}\yita}{\gamma \opte}} \geq (1-\epsilon)^{\frac{1-\epsilon}{\gamma(1+\epsilon)}}\bigg]\nonumber\\
&\leq \ex\bigg[(1-\epsilon)^{\frac{\sum_{t=1}^{m}\yita}{\gamma \opte}}\bigg]/(1-\epsilon)^{\frac{1-\epsilon}{\gamma(1+\epsilon)}}\nonumber\\
&= \ex\bigg[\prod_{t=1}^{m}(1-\epsilon)^{\frac{\yita}{\gamma \opte}}\bigg]/(1-\epsilon)^{\frac{1-\epsilon}{\gamma(1+\epsilon)}}
\end{align}}
%\end{multicols}

In Section~\ref{online:sec:HOGWL} our algorithm $A$ was $\hoe$ $\bigg($and
therefore we can use $\ex[\xitoe] \leq \frac{c_i}{(1+\epsilon)m}$ and
$\ex[\yitoe] \geq \frac{\opte}{(1+\epsilon)m} \bigg)$, 
the total failure probability which is the sum of~\eqref{online:eqn:fpNDx}
and~\eqref{online:eqn:fpNDy} for all the $i$'s would have been
$n\cdot\left[\frac{\epsilon}{2n} + \frac{\epsilon}{2n}\right] = \epsilon$.
The goal is to design an algorithm $A$ for stage $r$ that, unlike $\ho$, does not know the distribution and knows just $\opte$, 
but obtains the same $\epsilon$ failure probability. That is we want to show
that the sum of~\eqref{online:eqn:fpNDx} and~\eqref{online:eqn:fpNDy} over all $i$'s is at most $\epsilon$: 
%In Section~\ref{sec:HOGWL} we went on to bound the above failure probabilities by $\frac{\epsilon}{2n}$ each, and thus a total failure
%probability of $\epsilon$. Here our goal is design an algorithm $A$ which also has a failure probability of at most $\epsilon$, i.e., we want
$$\sum_i\frac{\ex\bigg[\prod_{t=1}^{m}\left(1+\epsilon\right)^{\frac{\xita}{\gamma c_i}}\bigg]}{(1+\epsilon)^{\frac{1}{\gamma}}}+
\sum_i \frac{\ex\bigg[\prod_{t=1}^{m}\left(1-\epsilon\right)^{\frac{\yita}{\gamma \opte}}\bigg]}{(1-\epsilon)^{\frac{1-\epsilon}{\gamma(1+\epsilon)}}} \leq \epsilon$$
For the algorithm $A^s\hoe^{m-s}$, the above quantity can be rewritten as 
$$\sum_i\frac{\ex\bigg[\left(1+\epsilon\right)^{\frac{\sxia}{\gamma c_i}}
\prod_{t=s+1}^{m}\left(1+\epsilon\right)^{\frac{\xitoe}{\gamma c_i}}\bigg]}{(1+\epsilon)^{\frac{1}{\gamma}}}+
\sum_i \frac{\ex\bigg[\left(1-\epsilon\right)^{\frac{\syia}{\gamma \opte}}
\prod_{t=s+1}^{m}\left(1-\epsilon\right)^{\frac{\yitoe}{\gamma \opte}}\bigg]}{(1-\epsilon)^{\frac{1-\epsilon}{\gamma(1+\epsilon)}}},$$
which, by using $(1+\epsilon)^x \leq 1+\epsilon x$ for $0\leq x\leq 1$, is in turn upper bounded by
\begin{equation*}
\sum_i\frac{\ex\bigg[\left(1+\epsilon\right)^{\frac{\sxia}{\gamma c_i}}
\prod_{t=s+1}^{m}\left(1+\epsilon{\frac{\xitoe}{\gamma c_i}}\right)\bigg]}{(1+\epsilon)^{\frac{1}{\gamma}}}+
\sum_i \frac{\ex\bigg[\left(1-\epsilon\right)^{\frac{\syia}{\gamma \opte}}
\prod_{t=s+1}^{m}\left(1-\epsilon{\frac{\yitoe}{\gamma \opte}}\right)\bigg]}{(1-\epsilon)^{\frac{1-\epsilon}{\gamma(1+\epsilon)}}}.
\end{equation*} 
Since for all $t$, the random variables $\xitoe$, $\xita$, $\yitoe$ and $\yita$ are all independent, and 
$\ex[\xitoe]\leq \frac{c_i}{(1+\epsilon)m}$ and $\ex[\yitoe] \geq \frac{\opte}{(1+\epsilon)m}$, the above is in turn upper bounded by
\begin{equation}\label{online:eqn:FAsPm-s}
\sum_i\frac{\ex\bigg[\left(1+\epsilon\right)^{\frac{\sxia}{\gamma c_i}}
\left(1+\frac{\epsilon}{(1+\epsilon)\gamma m}\right)^{m-s}\bigg]}{(1+\epsilon)^{\frac{1}{\gamma}}}+
\sum_i \frac{\ex\bigg[\left(1-\epsilon\right)^{\frac{\syia}{\gamma \opte}}
\left(1-\frac{\epsilon}{(1+\epsilon)\gamma m}\right)^{m-s}\bigg]}{(1-\epsilon)^{\frac{1-\epsilon}{\gamma(1+\epsilon)}}}.
\end{equation}

Let $\uf[A^s\hoe^{m-s}]$ denote the quantity in	~\eqref{online:eqn:FAsPm-s}, which is
an upper bound on failure probability of the hybrid algorithm $A^s\hoe^{m-s}$.
By Theorem~\ref{online:thm:HOGWL}, we know that $\uf[\hoe^m] \leq \epsilon$. We now prove
that for all $s \in \{0,1,\dots,m-1\}$, $\uf[A^{s+1}\hoe^{m-s-1}] \leq
\uf[A^{s}\hoe^{m-s}]$, thus proving that $\uf[A^m]\leq \epsilon$, i.e., running
the algorithm $A$ for all the $m$ steps results in a failure with probability at
most $\epsilon$. To design such an $A$ we closely follow the derivation of
Chernoff bounds, which is what established that $\uf[\hoe^m]\leq \epsilon$ in
Theorem~\ref{online:thm:HOGWL}.  However the design process will reveal that unlike
algorithm $\hoe$ which needs the entire distribution, just the knowledge of
$\opte$ will do for bounding the failure probability by $\epsilon$. 

Assuming that for all $s < p$, the algorithm $A$ has been defined for the first
$s+1$ steps in such a way that $\uf[A^{s+1}\hoe^{m-s-1}] \leq
\uf[A^s\hoe^{m-s}]$,  we now define $A$ for the $p+1$-th step in a way that
will ensure that $\uf[A^{p+1}\hoe^{m-p-1}] \leq \uf[A^p\hoe^{m-p}]$.  We have
{\allowdisplaybreaks\begin{align}
\uf[A^{p+1}\hoe^{m-p-1}] &= \sum_i \frac{\ex\bigg[(1+\epsilon)^{\frac{\sxia[p+1]}{\gamma c_i}}
\left(1+\frac{\epsilon}{(1+\epsilon)\gamma m}\right)^{m-p-1}\bigg]}{(1+\epsilon)^{\frac{1}{\gamma}}}\qquad+\nonumber\\
&\qquad\qquad\qquad\sum_i \frac{\ex\bigg[(1-\epsilon)^{\frac{\syia[p+1]}{\gamma \opte}}
\left(1-\frac{\epsilon}{(1+\epsilon)\gamma m}\right)^{m-p-1}\bigg]}{(1-\epsilon)^{\frac{1-\epsilon}{\gamma(1+\epsilon)}}}\nonumber\\
&\leq \sum_i \frac{\ex\bigg[(1+\epsilon)^{\frac{\sxia[p]}{\gamma c_i}}\left(1+\epsilon{\frac{X_{i,p+1}^A}{\gamma c_i}}\right)
\left(1+\frac{\epsilon}{(1+\epsilon)\gamma m}\right)^{m-p-1}\bigg]}{(1+\epsilon)^{\frac{1}{\gamma}}}\qquad+\nonumber\\
&\qquad\qquad\qquad\sum_i \frac{\ex\bigg[(1-\epsilon)^{\frac{\syia[p]}{\gamma \opte}}\left(1-\epsilon{\frac{Y_{i,p+1}^A}{\gamma \opte}}\right)
\left(1-\frac{\epsilon}{(1+\epsilon)\gamma
m}\right)^{m-p-1}\bigg]}{(1-\epsilon)^{\frac{1-\epsilon}{\gamma(1+\epsilon)}}}\label{online:eqn:intermediateUF}
\end{align}}
Define
\begin{align*}
\phixi &= \frac{1}{c_i}\bigg[\frac{(1+\epsilon)^{\frac{\sxia}{\gamma c_i}}
\left(1+\frac{\epsilon}{(1+\epsilon)\gamma
m}\right)^{m-s-1}}{(1+\epsilon)^{\frac{1}{\gamma}}}\bigg]\\
\phiyi &= \frac{1}{\opte}\bigg[\frac{(1-\epsilon)^{\frac{\syia}{\gamma \opte}}
\left(1-\frac{\epsilon}{(1+\epsilon)\gamma m}\right)^{m-s-1}}{(1-\epsilon)^{\frac{1-\epsilon}{\gamma(1+\epsilon)}}}\bigg]
\end{align*}
Define the step $p+1$ of algorithm $A$ as picking the following option $k^*$ for request $j$, where:
\begin{equation}\label{online:eqn:algNoDoublingGWL}
k^* = \arg \min_k\left\{\sum_i\aijk\cdot\phixi[p] - \sum_i \wijk\cdot\phiyi[p]\right\}.
\end{equation}
For the sake of clarity, the entire algorithm is presented in
Algorithm~\ref{online:alg:online_ra_opte}. 
\begin{algorithm}[!h]
\caption{: Algorithm for stochastic online resource allocation with unknown distribution, known $\opte$}
\label{online:alg:online_ra_opte}
{\bf Input:} Capacities $c_i$ for $i\in[n]$, the total number of requests $m$, the values of $\gamma$ and
$\opte$, an error parameter $\epsilon >
0.$\\
\textbf{Output:} An online allocation of resources to requests\\
\begin{algorithmic}[1]
%\STATE Choose an error parameter $\epsilon \in (0,1)$
\STATE Initialize $\phixi[0] = \frac{1}{c_i}\bigg[\frac{
\left(1+\frac{\epsilon}{(1+\epsilon)\gamma m}\right)^{m-1}}{(1+\epsilon)^{\frac{1}{\gamma}}}\bigg]$, and, 
$\phiyi[0] = \frac{1}{\opte}\bigg[\frac{
\left(1-\frac{\epsilon}{(1+\epsilon)\gamma m}\right)^{m-1}}{(1-\epsilon)^{\frac{1-\epsilon}{\gamma(1+\epsilon)}}}\bigg]$
\FOR {$s$ = $1$ to $m$}
\STATE If the incoming request is $j$, use the following option $k^*$: 
\begin{equation*}
k^* = \arg \min_{k\in \optn\cup\{\bot\}} \left\{\sum_i\aijk\cdot\phixi[s-1] - 
\sum_i \wijk\cdot\phiyi[s-1]\right\}.
\end{equation*}
\STATE $\xita[s] = a_{ijk^*}$, $\yita[s]=w_{ijk^*}$
\STATE Update $\phixi = \phixi[s-1]\cdot\left[\frac{(1+\epsilon)^{\frac{\xita[s]}{\gamma c_i}}}{1+\frac{\epsilon}{(1+\epsilon)\gamma m}}\right]$, and, 
$\phiyi = \phiyi[s-1]\cdot\left[\frac{(1-\epsilon)^{\frac{\yita[s]}{\gamma\opte}}}{1-\frac{\epsilon}{(1+\epsilon)\gamma m}}\right]$
\ENDFOR
\end{algorithmic}
\end{algorithm}

By the definition of step $p+1$ of algorithm $A$ given in 
equation~\eqref{online:eqn:algNoDoublingGWL}, it follows that for any two
algorithms with the first $p$ steps being identical, and the last $m-p-1$ steps
following the Hypothetical-Oblivious-Conservative algorithm $\hoe$, algorithm
$A$'s $p+1$-th step is the one that minimizes
expression~\eqref{online:eqn:intermediateUF}. In particular it follows that
expression~\eqref{online:eqn:intermediateUF} is upper bounded by the same expression
where the $p+1$-the step is according to $\xitoe[p+1]$ and $\yitoe[p+1]$,
i.e., we replace $\xita[p+1]$ by $\xitoe[p+1]$ and $\yita[p+1]$ by
$\yitoe[p+1]$.  Therefore we have
{\allowdisplaybreaks\begin{align}
\uf[A^{p+1}\hoe^{m-p-1}] &\leq\sum_i \frac{\ex\bigg[(1+\epsilon)^{\frac{\sxia[p]}{\gamma c_i}}\left(1+\epsilon{\frac{\xitoe[p+1]}{\gamma c_i}}\right)
\left(1+\frac{\epsilon}{(1+\epsilon)\gamma m}\right)^{m-p-1}\bigg]}{(1+\epsilon)^{\frac{1}{\gamma}}}\qquad+\nonumber\nonumber\nonumber\\
&\qquad\qquad\qquad\sum_i \frac{\ex\bigg[(1-\epsilon)^{\frac{\syia[p]}{\gamma \opte}}\left(1-\epsilon{\frac{\yitoe[p+1]}{\gamma \opte}}\right)
\left(1-\frac{\epsilon}{(1+\epsilon)\gamma m}\right)^{m-p-1}\bigg]}{(1-\epsilon)^{\frac{1-\epsilon}{\gamma(1+\epsilon)}}}\nonumber\\
&\leq\sum_i \frac{\ex\bigg[(1+\epsilon)^{\frac{\sxia[p]}{\gamma c_i}}\left(1+\frac{\epsilon}{(1+\epsilon)\gamma m}\right)
\left(1+\frac{\epsilon}{(1+\epsilon)\gamma m}\right)^{m-p-1}\bigg]}{(1+\epsilon)^{\frac{1}{\gamma}}}\qquad+\nonumber\nonumber\\
&\qquad\qquad\qquad\sum_i \frac{\ex\bigg[(1-\epsilon)^{\frac{\syia[p]}{\gamma \opte}}\left(1-\frac{\epsilon}{(1+\epsilon)\gamma \opte}\right)
\left(1-\frac{\epsilon}{(1+\epsilon)\gamma m}\right)^{m-p-1}\bigg]}{(1-\epsilon)^{\frac{1-\epsilon}{\gamma(1+\epsilon)}}}\nonumber\\
&=\uf[A^p\hoe^{m-p}]\nonumber
\end{align}}

This completes the proof of the following theorem.
\begin{theorem}\label{online:thm:noDoublingGWL}
For any $\epsilon > 0$, Algorithm~\ref{online:alg:online_ra_opte} 
achieves an objective value of
$\opte(1-2\epsilon)$ for the online resource allocation problem with
probability at least $1-\epsilon$, assuming $\gamma =
O(\frac{\epsilon^2}{\log(n/\epsilon)})$.  The algorithm $A$ does not require
any knowledge of the distribution except for the single parameter $\opte$.
\end{theorem}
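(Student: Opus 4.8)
\noindent\emph{Proof plan.} The strategy is to run the standard moment–generating–function (Chernoff) argument, but to replace the hypothetical oblivious sampling step by the greedy step of Algorithm~\ref{online:alg:online_ra_opte}, and to show that this replacement can only decrease the resulting failure bound. First I would declare a run of any online algorithm $A$ a \emph{failure} if some packing constraint is violated ($\sum_{t=1}^m \xita > c_i$ for some $i\in\reso$) or some profit type falls short ($\sum_{t=1}^m \yita < \opte(1-2\epsilon)$ for some $i\in\reso$). Because $(1-\epsilon)/(1+\epsilon)\ge 1-2\epsilon$, it suffices to control $\Pr[\sum_t\xita\ge c_i]$ and $\Pr[\sum_t\yita\le \opte(1-\epsilon)/(1+\epsilon)]$; applying Markov's inequality to $(1+\epsilon)^{\sum_t\xita/(\gamma c_i)}$ and to $(1-\epsilon)^{\sum_t\yita/(\gamma\opte)}$ exactly as in~\eqref{online:eqn:fpNDx}--\eqref{online:eqn:fpNDy} bounds the total failure probability by a sum over $i$ of expectations of products of $(1\pm\epsilon)$-powers.

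Next I would introduce the hybrid algorithm $A^s\hoe^{m-s}$ (run $A$ for $s$ steps, then the conservative–oblivious algorithm $\hoe$ for the remaining $m-s$ steps) and let $\uf[A^s\hoe^{m-s}]$ be the associated bound, with the $m-s$ oblivious steps already collapsed via $\ex[\xitoe]\le c_i/((1+\epsilon)m)$, $\ex[\yitoe]\ge \opte/((1+\epsilon)m)$ and $1+x\le e^x$ into the closed form~\eqref{online:eqn:FAsPm-s}. The base case $\uf[\hoe^m]\le\epsilon$ is precisely Theorem~\ref{online:thm:HOGWL} (this is where the hypothesis $\gamma=O(\epsilon^2/\log(n/\epsilon))$ is consumed). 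The crux is the monotonicity claim $\uf[A^{s+1}\hoe^{m-s-1}]\le\uf[A^s\hoe^{m-s}]$ for every $s\in\{0,\dots,m-1\}$, which telescopes to $\uf[A^m]\le\uf[\hoe^m]\le\epsilon$; on the complementary event every capacity is respected and every profit type is at least $\opte(1-2\epsilon)$, which is the theorem.

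To prove the monotonicity claim at stage $p+1$ I would: (i) linearize the newly exposed factor with $(1+\epsilon)^x\le 1+\epsilon x$ and $(1-\epsilon)^x\ge 1-\epsilon x$, which is legitimate because $\xita[p+1]/(\gamma c_i)\le 1$ and $\yita[p+1]/(\gamma\opte)\le 1$ by the definition of $\gamma$; this brings the bound to~\eqref{online:eqn:intermediateUF}, whose dependence on the step-$(p{+}1)$ choice $k$ is, up to the step-independent nonnegative weights $\phixi[p]$ and $\phiyi[p]$, exactly $\sum_i\aijk\cdot\phixi[p]-\sum_i\wijk\cdot\phiyi[p]$. Since Algorithm~\ref{online:alg:online_ra_opte} picks the $k^\ast$ minimizing this expression over $\optn\cup\{\bot\}$ (see~\eqref{online:eqn:algNoDoublingGWL}), the value of~\eqref{online:eqn:intermediateUF} under $A$'s step is at most its value under any alternative choice, in particular under the randomized option $\xitoe[p+1],\yitoe[p+1]$ used by $\hoe$. (ii) Substituting that oblivious choice and using that $\xitoe[p+1],\yitoe[p+1]$ are independent of the first $p$ steps, together with the expectation bounds above, folds step $p+1$ back into the product and reproduces $\uf[A^p\hoe^{m-p}]$ exactly. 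Finally I would observe that the potentials $\phixi,\phiyi$ are updated multiplicatively from the realized consumptions and profits and the numbers $c_i,\opte,\gamma,m,\epsilon$ alone, so no knowledge of the distribution beyond $\opte$ is ever needed, as claimed.

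I expect the only real subtlety to be step (i): verifying that after linearization the quantity the algorithm greedily minimizes is the \emph{same} — up to the fixed factors $\phixi[p],\phiyi[p]$ — as the one controlling the inductive bound, with the one-sided linearization pointing in the right direction for each constraint family, i.e.\ upper-bounding the $(1+\epsilon)$-powers for the packing (capacity) terms and lower-bounding the $(1-\epsilon)$-powers for the covering (profit) terms. Everything else is routine manipulation of the Chernoff product and an appeal to Theorem~\ref{online:thm:HOGWL}.
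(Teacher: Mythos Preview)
Your approach is essentially identical to the paper's: the same hybrid $A^s\hoe^{m-s}$, the same potential $\uf$ in~\eqref{online:eqn:FAsPm-s}, the base case via Theorem~\ref{online:thm:HOGWL}, and the same inductive comparison of the greedy step to the oblivious one at~\eqref{online:eqn:intermediateUF}.

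One slip in step~(i): both linearizations must be \emph{upper} bounds, because both the packing and the covering terms contribute positively to the failure-probability sum you are trying to dominate. The inequality you wrote for the covering side, $(1-\epsilon)^x\ge 1-\epsilon x$, is false for $x\in(0,1)$; the correct (and needed) direction is $(1-\epsilon)^x\le 1-\epsilon x$ for $x\in[0,1]$, which follows from convexity of $t\mapsto(1-\epsilon)^t$ just as $(1+\epsilon)^x\le 1+\epsilon x$ does. With that correction your argument matches the paper verbatim.
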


\subsection{Completely Unknown Distribution}\label{online:sec:doublingGWL}
We first give a high-level overview of this section before going into the details. In this
section, we design an algorithm $A$ without any knowledge of the distribution at
all. The algorithm is similar in spirit to the one in Section~\ref{online:sec:noDoublingGWL} except that since we do not have knowledge of
$\opte$, we divide the algorithm into many stages.  In each stage, we run an
algorithm similar to the one in Section~\ref{online:sec:noDoublingGWL} except that
instead of $\opte$, we use an estimate of $\opte$ that gets increasingly
accurate with each successive stage. 

More formally, the algorithm runs in $l$ stages $\{0,1,\dots,l-1\}$, where $l$
is such that $\epsilon 2^l = 1$, and $\epsilon \in [1/m,1/2]$ (we need $\epsilon\leq 1/2$ so that
$l$ is at least $1$) is the error
parameter of algorithm designer's choice. Further we need $m\geq\frac{1}{\epsilon}$ so that $\epsilon m \geq 1$. 
We assume that $\epsilon m$ is an integer for clarity of exposition. Stage $\stage$ handles $\tr =
\epsilon m2^{\stage}$ requests for $r\in\{0,\dots l-1\}$. The first $\epsilon m$ requests are used just
for future estimation, and none of them are served. For convenience we
sometimes call this pre-zero stage as stage $-1$, and let $t_{-1} = \epsilon m$.  Stage $r\geq 0$ serves $t
\in [\tr+1,\tr[r+1]]$. Note that in the optimal solution to the expected
instance of stage $\stage$, no resource $i$ gets consumed by more than
$\frac{\tr c_i}{m}$, and every resource $i$ gets a profit of $\frac{\tr
\opte}{m}$, i.e., consumption and profit have been scaled down by a factor of
$\frac{\tr}{m}$. As in the previous sections, with a high probability, we can
only reach close to $\frac{\tr\opte}{m}$.  Further, since stage $r$
consists of only $\tr$ requests, which is much smaller than $m$ for small $r$,
it follows that for small $r$, our error in how close to we get to $\frac{\tr\opte}{m}$ will
be higher. Indeed, instead of having the same error parameter of $\epsilon$ in
every stage, we set stage-specific error parameters which get progressively
smaller and become close to $\epsilon$ in the final stages. These parameters
are chosen such that the overall error is still $O(\epsilon)$ because the later
stages having more requests matter more than the former. There are two
sources of error/failure which we detail below.
\begin{enumerate}
\item The first source of failure stems from not knowing $\opte$. Instead
we estimate a quantity $\zr$ which is an approximation we use for $\opte$ in stage $\stage$, and
the approximation gets better as $\stage$ increases. We use $\zr$ to set a profit target of $\frac{\tr\zr}{m}$ for 
stage $\stage$. Since $\zr$ could be much smaller than $\opte$ our algorithm could become
very suboptimal. We prove that with a probability of at least $1-2\delta$ we have $\opte(1-3\epsx[r-1]) \leq \zr \leq
\opte$ (see next for what $\epsx$ is), where $\delta = \frac{\epsilon}{3l}$. Thus for all the $l$ stages, these bounds are violated with probability
at most $2l\delta = 2\epsilon/3$.
\item The second source of failure stems from achieving this profit target in every stage. 
We set error parameters $\epsx$ and $\epsy$ such that for every $i$
stage $\stage$ consumes at most $\frac{\tr c_i}{m}(1+\epsx)$ amount of resource
$i$, and for every $i$ we get a profit of at least $\frac{\tr
\zr}{m}(1-\epsy)$, with probability at least $1-\delta$. 
Thus the overall failure probability, as regards falling
short of the target $\frac{\tr\zr}{m}$ by more than $\epsy$ and exceeding $\frac{\tr c_i}{m}$ 
by more than $\epsx$, for all the $l$
stages together is at most $\delta\cdot l = \epsilon/3$. 
\end{enumerate}

Thus summing over the failure probabilities we get $\epsilon/3 + 2\epsilon/3 =
\epsilon$.  We have that with probability at least $1-\epsilon$, for every $i$,
the total consumption of resource $i$ is at most
$\sum_{\stage=0}^{l-1}\frac{\tr c_i}{m}(1+\epsx)$, and total profit from
resource $i$ is at least $\sum_{\stage =0}^{l-1}\frac{\tr
\opte}{m}(1-3\epsx[r-1])(1-\epsy)$.  We set $\epsx = \sqrt{\frac{4\gamma
m\ln(2n/\delta)}{\tr}}$ for $r \in \{-1,0,1,\dots,l-1\}$, and $\epsy = \sqrt{\frac{2\wm m
\ln(2n/\delta)}{\tr\zr}}$ for $r \in \{0,1,\dots,l-1\}$ (we define $\epsx$ starting from $r=-1$, with $t_{-1}=\epsilon m$, just for technical convenience).
From this it follows that $\sum_{\stage=0}^{l-1}\frac{\tr c_i}{m}(1+\epsx) \leq c_i$ and $\sum_{\stage =0}^{l-1}\frac{\tr
\opte}{m}(1-3\epsx[r-1])(1-\epsy) \geq \opte(1-O(\epsilon))$, assuming $\gamma =
O(\frac{\epsilon^2}{\log(n/\epsilon)})$. The algorithm is described in
Algorithm~\ref{online:alg:online_ra}. 
This completes the high-level overview of the proof. All that is left to prove is the points 1 and 2 above, upon which
we would have proved our main theorem, namely,
Theorem~\ref{online:thm:online_ra}, which we recall below.
\begin{oneshot}{Theorem~\ref{online:thm:online_ra}}
%\begin{theorem}\label{online:thm:online_ra}
For any $\epsilon \geq 1/m$, Algorithm~\ref{online:alg:online_ra} achieves an objective value of
$\opte(1-O(\epsilon))$ for the online resource allocation problem with
probability at least $1-\epsilon$, assuming $\gamma =
O(\frac{\epsilon^2}{\log(n/\epsilon)})$.
Algorithm~\ref{online:alg:online_ra} does not require
any knowledge of the distribution at all.
%\end{theorem}
\end{oneshot}

\begin{algorithm}[!h]
\caption{: Algorithm for stochastic online resource allocation with unknown distribution}
\label{online:alg:online_ra}
{\bf Input:} Capacities $c_i$ for $i\in[n]$, the total number of requests $m$,
the value of $\gamma$, an error parameter $\epsilon >
1/m.$\\
\textbf{Output:} An online allocation of resources to requests\\

\begin{algorithmic}[1]
\STATE %Choose an error parameter $\epsilon \in [1/m,1/2]$, and 
Set $l = \log(1/\epsilon)$
\STATE Initialize $t_{-1}: t_{-1} \leftarrow \epsilon m$ 
%$X_i^{t_0} \leftarrow 0$, $S_i^{t_0}(0) \leftarrow 0$ \ for\ all $i$, $Y^{t_0} \leftarrow 0$, $V^{t_0}(0) \leftarrow 0$.
\FOR {$\stage$ = 0 to $l-1$}
\STATE Compute $\er[r-1]$: the optimal solution to the $\tr[r-1]$ requests of stage $\stage-1$ with capacities capped at $\frac{\tr[r-1] c_i}{m}$. 
\STATE Set $\zr = \frac{\er[r-1]}{1+\epsx[r-1]}\cdot\frac{m}{\tr[r-1]}$\\
\STATE Set $\epsx = \sqrt{\frac{4\gamma m\ln(2n/\delta)}{\tr}}$, $\epsy = \sqrt{\frac{2\wm m\ln(2n/\delta)}{\tr\zr}}$\\
%\STATE Set $\sxiar[0] = 0$, $\syiar[0]=0$\\
\STATE Set $\phixir[0] = \frac{\epsx}{\gamma c_i}\bigg[\frac{
\left(1+\frac{\epsx}{m\gamma}\right)^{\tr-1}}{(1+\epsx)^{(1+\epsx)\frac{\tr}{m\gamma}}}\bigg]$, and, 
$\phiyir[0] = \frac{\epsy}{\wm}\bigg[\frac{
\left(1-\frac{\epsy}{m\gamma}\right)^{\tr-1}}{(1-\epsy)^{(1-\epsy)\frac{\tr\zr}{m\wm}}}\bigg]$\\
\FOR {$s$ = $1$ to $\tr$}
\STATE If the incoming request is $j$, use the following option $k^*$: 
\begin{equation*}
k^* = \arg \min_{k\in \optn \cup\{\bot\}}\left\{\sum_i\aijk\cdot\phixir[s-1] - 
\sum_i \wijk\cdot\phiyir[s-1]\right\}.
%\arg\min_k \left\{
% \frac{\epsc(\stage)/\gamma}{\left(1 + \frac{\epsc(\stage)}{\gamma m}\right)} \sum_i \phi_i^{init}(\stage) \frac{\aijk}{c_i} 
%  -\ \frac{\epso(\stage)/\wm}{\left(1 - \frac{\epso(\stage) Z(\stage)}{\wm m}\right)} \phi_{\text{obj}}^{init}(\stage) \wjk 
%\right\}.
\end{equation*}
\STATE $\xita[\tr+s] = a_{ijk^*}$, $\yita[\tr+s]=w_{ijk^*}$
%\STATE Update $\sxiar = \sxiar[s-1]+\xita[\tr+s]$, and, $\syiar = \syia[r-1] + \yita[\tr+s]$
\STATE Update $\phixir = \phixir[s-1]\cdot\left[\frac{(1+\epsx)^{\frac{\xita[\tr+s]}{\gamma c_i}}}{1+\frac{\epsx}{m\gamma}}\right]$, and, 
$\phiyir = \phiyir[s-1]\cdot\left[\frac{(1-\epsy)^{\frac{\yita[\tr+s]}{\wm}}}{1-\frac{\epsy}{m\gamma}}\right]$
\ENDFOR
\ENDFOR
\end{algorithmic}
\end{algorithm}

\paragraph*{Detailed Description and Proof}
We begin with the first point in our high-level overview above, namely by describing how $\zr$ is estimated and proving
its concentration around $\opte$. After stage $r$ (including stage $-1$), the algorithm
computes the optimal fractional objective value $\er$ to the following instance $\ir$: the instance
has the $\tr$ requests of stage $\stage$, and the capacity of resource $i$ is
capped at $\frac{\tr c_i}{m}$. Using the optimal fractional objective value $\er$ of this instance,
we set $\zr[r+1] = \frac{\er}{1+\epsx}\cdot\frac{m}{\tr}.$ The first task now is to prove that
$\zr[r+1]$ as estimated above is concentrated enough around $\opte$. This basically requires proving
concentration of $\er$. 

\begin{lemma}\label{online:lem:concOfer}
With a probability at least $1-2\delta$, we have $$ \frac{\tr\opte}{m}(1-2\epsx)\leq \er \leq
\frac{\tr \opte}{m}(1+\epsx).$$
\end{lemma}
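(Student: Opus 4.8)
The plan is to prove the two-sided bound on $\er$ by relating the capped instance $\ir$ to a scaled-down copy of the expected instance, and then applying Chernoff bounds in both directions. For the upper bound, I would observe that $\er$ is the optimal fractional value of an LP on $\tr$ i.i.d.\ requests with each capacity scaled to $\tfrac{\tr}{m}c_i$; by the analogue of Lemma~\ref{online:lem:benchmark} applied at scale $\tr$ (i.e.\ the expected instance with $\tr$ requests has optimum exactly $\tfrac{\tr}{m}\opte$, and it upper-bounds $\ex[\er]$), the ``typical'' value of $\er$ is around $\tfrac{\tr}{m}\opte$. To turn this into a high-probability statement I would take the optimal solution $\xjko$ of the full expected instance, scale it down by $\tfrac{\tr}{m}$, and use it as a feasible fractional solution for $\ir$ — it satisfies the capped capacity constraints in expectation, so one direction of Chernoff (the lower tail on the profit earned by this fixed solution, applied to each of the $n$ profit types) gives $\er \geq \tfrac{\tr}{m}\opte(1-\epsx)$ up to the small additive slack; I would in fact only need $1-2\epsx$, which gives room to absorb the fact that the scaled solution only satisfies the capacities in expectation rather than surely (one can instead scale by $\tfrac{\tr}{m(1+\epsx)}$ and use the upper-tail Chernoff bound on consumption, exactly mirroring the $\hoe$ computation in Section~\ref{online:sec:HOGWL}).

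For the upper bound $\er \leq \tfrac{\tr}{m}\opte(1+\epsx)$, the key point is that this must hold \emph{simultaneously for every} feasible solution of $\ir$, i.e.\ no allocation of the $\tr$ realized requests can earn more than $\tfrac{\tr}{m}\opte(1+\epsx)$ in every profit coordinate while respecting the capped capacities. Here I would argue via the dual (or via a direct charging argument): any solution feasible for $\ir$ uses at most $\tfrac{\tr}{m}c_i$ of resource $i$; if I had such a solution achieving objective strictly more than $\tfrac{\tr}{m}\opte(1+\epsx)$, I could ``lift'' it — by looking at the empirical distribution of the $\tr$ requests and the probability each request is drawn — to a solution of the expected instance with objective slightly more than $\opte$, contradicting optimality of $\opte$, \emph{provided} the empirical request frequencies are close to their expectations. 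This closeness is exactly where Chernoff enters again: with probability $\geq 1-2\delta$ (union over a small number of bad events), the realized sums $\sum_{t}\xita$-type quantities deviate from their means by at most the $\epsx$-factor, and the choice $\epsx = \sqrt{4\gamma m\ln(2n/\delta)/\tr}$ is precisely calibrated (via $X_i\in[0,\gamma c_i]$, $\mu \le \tfrac{\tr}{m}c_i \le \tr\gamma c_i/(\gamma m)$, and the Chernoff form $X-\mu \le \sqrt{4\mu B\ln(1/\delta)}$ from Section~\ref{online:subsec:online_ra}'s Chernoff subsection) to make this deviation at most an $\epsx$ multiplicative factor, with the $2n$ in the log coming from a union bound over the $n$ resources and $n$ profit types, and the $2$ in $2\delta$ from the two directions.

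Concretely I would (i) fix the optimal solution of the expected instance and scale it, apply the lower-tail Chernoff bound to each of the $n$ profit coordinates of the random instance $\ir$ to get the $\geq \tfrac{\tr}{m}\opte(1-2\epsx)$ bound, absorbing capacity slack as above; (ii) for the other direction, bound the value of \emph{the} optimal solution of $\ir$ by noting that LP duality gives a single dual certificate, and that the random instance's constraint sums concentrate, so that $\er$ cannot exceed the expected-instance optimum by more than the $\epsx$ factor; (iii) union-bound the $O(n)$ Chernoff failures to get total failure probability $\le 2\delta$. The main obstacle I anticipate is step (ii): the upper bound on $\er$ is a statement about the \emph{worst case over all feasible solutions} of a random LP, not about a single fixed solution, so a naive per-solution Chernoff bound does not suffice. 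The clean way around this is to pass to the dual — the optimal dual of the capped instance $\ir$ has value $\er$, and bounding $\er$ from above amounts to exhibiting \emph{any} dual feasible point of small value, for which the (scaled) optimal dual of the expected instance is the natural candidate, and only its objective (a fixed linear combination of the realized requests' contributions) needs to concentrate. I expect the bookkeeping of which Chernoff form to invoke and how $\epsx$'s definition makes the additive slack multiplicative to be routine once this structure is in place.
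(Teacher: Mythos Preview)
Your plan matches the paper's proof: the lower bound uses a primal witness (run $\hoe$ with parameter $\epsx$ on the $\tr$ sampled requests, i.e.\ assign $x^*_{jk}/(1+\epsx)$ per request, and apply Chernoff to both consumption and profit), and the upper bound uses a dual witness (the optimal dual $(\alpha^*,\beta^*,\rho^*)$ of the full expected instance is feasible for the dual of $\ir$, so $\er \le \sum_{j\in\ir}\beta_j^* + \tfrac{\tr}{m}\sum_i\alpha_i^*c_i$, and Chernoff on the random sum $\sum_{j\in\ir}\beta_j^*$ with $\beta_j^*\le \wm$ finishes it). One correction: do not scale $x^*_{jk}$ by $\tr/m$ --- the $\tr/m$ factor in the target $\tfrac{\tr}{m}\opte$ already comes from summing over only $\tr$ requests rather than $m$, so the per-request assignment stays $x^*_{jk}/(1+\epsx)$, exactly as your reference to ``mirroring the $\hoe$ computation'' indicates.
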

\begin{proof}
We prove that the lower and upper bound hold with probability $1-\delta$ each, thus proving the
lemma. 

We begin with the lower bound on $\er$. Note that the expected instance
of the instance $\ir$ has the same optimal solution $\xjko$ as the optimal
solution to the full expected instance (i.e., the one without scaling down by
$\frac{\tr}{m}$). Now consider the algorithm $\hoe(r)$, which is the same as
the $\hoe$ defined in Section~\ref{online:sec:HOGWL} except that $\epsilon$ is
replaced by $\epsx$, i.e., it serves request $j$ with option $k$ with
probability $\frac{\xjko}{1+\epsx}$. When $\hoe(r)$ is run on instance $\ir$,
with a probability at least $1-\frac{\delta}{2n}$, at most $\frac{\tr c_i}{m}$
amount of resource $i$ is consumed, and with probability at least
$1-\frac{\delta}{2n}$, at least $\frac{\tr \opte}{m}\frac{1-\epsx}{1+\epsx}$
resource $i$ profit is obtained. Thus with a probability at least
$1-2n\cdot\frac{\delta}{2n} = 1 -\delta$, $\hoe(r)$ achieves an objective value of at least
$\frac{\tr\opte}{m}(1-2\epsx)$. Therefore the optimal objective value $\er$ will also be at least
$\frac{\tr\opte}{m}(1-2\epsx)$. 

To prove the upper bound, we consider the primal and dual LPs that define $\er$
in LP~\ref{online:lp:dual_er}
and the primal and dual LPs defining the expected instance in 
LP~\eqref{online:lp:dual_ei}. In the latter,
for convenience, we use $mp_j\beta_j$ as the dual multiplier instead of just $\beta_j$. 

\begin{table}[!h]
\begin{center}
\begin{align}
\label{online:lp:dual_er}
\text{Primal and dual LPs defining $\er$}
\end{align}
%\caption{Primal and dual LPs defining $\er$}{
\begin{tabular}{l|l}
\textbf{Primal defining } $\er$ & $\textbf{Dual defining } \er$\\\\
Maximize $\lambda\qquad$ s.t. & Minimize $\sum_{j\in\ir} \beta_j + \frac{\tr}{m}\sum_i \alpha_i c_i\qquad$ s.t.\\\\
$\forall ~i,  ~\sum_{j\in\ir,k} \wijk\xjk \geq \lambda$ & $\forall ~j\in\ir, k, ~\beta_j + \sum_i (\alpha_i\aijk - \rho_i\wijk) \geq 0$\\\\
$\forall ~i, ~\sum_{j\in\ir,k} \aijk\xjk \leq \frac{\tr c_i}{m}$ & $\sum_i \rho_i \geq 1$\\\\
$\forall ~j\in \ir, ~\sum_k \xjk \leq 1$ &  $\forall ~i, ~\rho_i \geq 0, \alpha_i \geq 0$\\\\
$\forall ~j\in \ir,k, ~\xjk \geq 0.$ & $\forall ~j\in\ir, ~\beta_j \geq 0$
\end{tabular}
%\label{online:lp:dual_er}
\end{center}
\end{table}
%%\begin{multicols}{2}
%\begin{align*}
%%\label{lp:primal_er}
%&\textbf{Primal defining } \er\\
%&\text{maximize } \lambda \text{ s.t.}\nonumber\\
%& \forall ~i,  ~\sum_{j\in\ir,k} \wijk\xjk \geq \lambda\nonumber\\
%& \forall ~i, ~\sum_{j\in\ir,k} \aijk\xjk \leq \frac{\tr c_i}{m}\nonumber \\
%& \forall ~j\in \ir, ~\sum_k \xjk \leq 1,\nonumber\\ 
%& \forall ~j,k, ~\xjk \geq 0.\nonumber 
%\end{align*}
%\begin{align}\label{lp:dual_er}
%&\textbf{Dual defining } \er\\
%&\text{minimize } \sum_j \beta_j + \frac{\tr}{m}\sum_i \alpha_i c_i \nonumber\\
%& \forall ~j\in\ir, ~\beta_j + \sum_i (\alpha_i\aijk - \rho_i\wijk) \geq 0\nonumber\\
%& \sum_i \rho_i \leq 1\nonumber\\
%& \forall ~i, ~\rho_i \geq 0, \alpha_i \geq 0\nonumber\\ 
%& \forall ~j\in\ir, ~\beta_j \geq 0\nonumber
%\end{align}
%%\end{multicols}
\begin{table}[!h]
\begin{center}
\begin{align}
\label{online:lp:dual_ei}
\text{Primal and dual LPs defining the expected instance}
\end{align}
\begin{tabular}{l|l}\textbf{Primal for the expected instance} & \textbf{Dual 
for the expected instance}\\\\
Maximize $\lambda\qquad$ s.t. & Minimize $\sum_j mp_j\beta_j + \sum_i \alpha_i c_i\qquad$ s.t.\\\\
$\forall ~i,  ~\sum_{j,k} mp_j\wijk\xjk \geq \lambda$ & $\forall ~j,k, ~mp_j\bigg(\beta_j + \sum_i (\alpha_i\aijk - \rho_i\wijk)\bigg) \geq 0$\\\\
$\forall ~i, ~\sum_{j,k} mp_j\aijk\xjk \leq c_i$ & $\sum_i \rho_i \geq 1$\\\\
$\forall ~j, ~\sum_k \xjk \leq 1$ & $\forall ~i, ~\rho_i \geq 0, \alpha_i \geq 0$\\\\
$\forall ~j,k, ~\xjk \geq 0.$ & $\forall ~j, ~\beta_j \geq 0$\\\\
\end{tabular}
\end{center}
\end{table}
%%\begin{multicols}{2}
%\begin{align*}
%%\label{lp:primal_ei}
%&\textbf{Primal for the expected instance}\\
%&\text{maximize } \lambda \text{ s.t.}\nonumber\\
%& \forall ~i,  ~\sum_{j,k} mp_j\wijk\xjk \geq \lambda\nonumber\\
%& \forall ~i, ~\sum_{j,k} mp_j\aijk\xjk \leq c_i\nonumber \\
%& \forall ~j, ~\sum_k \xjk \leq 1,\nonumber\\ 
%& \forall ~j,k, ~\xjk \geq 0.\nonumber 
%\end{align*}
%\begin{align}\label{lp:dual_ei}
%%\nonumber\\
%%\nonumber\\
%%\nonumber\\
%&\textbf{Dual for the expected instance}\\
%&\text{minimize } \sum_j mp_j\beta_j + \sum_i \alpha_i c_i \nonumber\\
%& \forall ~j\in\ir, ~mp_j\bigg(\beta_j + \sum_i (\alpha_i\aijk - \rho_i\wijk)\bigg) \geq 0\nonumber\\
%& \sum_i \rho_i \leq 1\nonumber\\
%& \forall ~i, ~\rho_i \geq 0, \alpha_i \geq 0\nonumber\\ 
%& \forall ~j, ~\beta_j \geq 0\nonumber
%\end{align}
%%\end{multicols}
Note that the set of constraints in the dual of LP~\eqref{online:lp:dual_ei} is 
a superset of the set of constraints in the dual of 
LP~\eqref{online:lp:dual_er}, making any feasible solution to dual of
LP~\eqref{online:lp:dual_ei} also feasible to dual of 
LP~\eqref{online:lp:dual_er}. 
In particular, the optimal solution to dual of LP~\eqref{online:lp:dual_ei} 
given by $\beta_j^*$'s, $\alpha_i^*$'s and $\rho_i^*$'s is feasible
for dual of LP~\eqref{online:lp:dual_er}. Hence the value of $\er$ is upper
bounded the objective of dual of LP~\eqref{online:lp:dual_er} at $\beta_j^*$'s, 
$\alpha_i^*$'s
and $\rho_i^*$'s. That is we have
$$\er \leq \sum_{j\in\ir} \beta_j^* + \frac{\tr}{m}\sum_i \alpha_i^* c_i.$$
We now upper bound the RHS by applying Chernoff bounds on $\sum_{j\in\ir} \beta_j^*$. 
Since the dual LP in LP~\eqref{online:lp:dual_ei} is a minimization LP, the 
constraints there imply that $\beta_j^* \leq \wm$. 
Applying Chernoff bounds we have,
\begin{eqnarray*}
\er &&\leq \tr\sum_jp_j\beta_j^* + \sqrt{4\tr(\sum_j p_j\beta_j^*)\wm \ln(1/\delta)} + \frac{\tr}{m}\sum_i \alpha_i^* c_i\\
&&\leq\frac{\tr\opte}{m} + \frac{\tr\opte}{m}\epsx
\end{eqnarray*}
where the first inequality holds with probability at least $1-\delta$ and the second inequality uses the fact
the optimal value of the expected instance (dual of 
LP~\eqref{online:lp:dual_ei}) is $\opte$. 
This proves the required upper bound on $\er$ that $\er \leq \frac{\tr\opte}{m}(1+\epsx)$ with probability at least $1-\delta$. 

Going back to our application of Chernoff bounds above, in order to apply it in
the form above, we require that the multiplicative deviation from mean
$\sqrt{\frac{4\wm\ln(1/\delta)}{\tr\sum_j p_j\beta_j^*}} \in [0,2e-1]$. If
$\sum_j p_j \beta_j^* \geq \frac{\epsilon\opte}{m}$, then this requirement
would follow. Suppose on the other hand that $\sum_j p_j \beta_j^* <
\frac{\epsilon\opte}{m}$.  Since we are happy if the excess over mean is at most
$\frac{\tr\opte}{m}\epsx$, let us look for a multiplicative error of
$\frac{\frac{\tr\opte\epsx}{m}}{\tr\sum_jp_j\beta_j^*}$. Based on the fact
that $\sum_j p_j \beta_j^* < \frac{\epsilon\opte}{m}$ and that $\epsx >
\epsilon$ for all $\stage$, the multiplicative error can be seen to be at least
a constant, and can be made larger than $2e-1$ depending on the constant inside
the big O of $\gamma$. We now use the version of Chernoff bounds for
multiplicative error larger than $2e-1$, which gives us that a deviation of
$\frac{\tr\opte}{m}\epsx$ occurs with a probability at most
$2^{-\left(1+\frac{\frac{\tr\opte\epsx}{m}}{\tr\sum_jp_j\beta_j^*}\right)\frac{\tr\sum_jp_j\beta_j^*}{\wm}}$,
where the division by $\wm$ is because of the fact that $\beta_j^* \leq \wm$.
Noting that $\wm \leq \gamma \opte$, we get that this probability is at most
$\delta/n$ which is at most $\delta$.

\end{proof}

Lemma~\ref{online:lem:concOfer} implies that $\opte(1-3\epsx[r-1]) \leq \zr \leq
\opte,\ \forall r\in\{0,1,\dots,l-1\}$. The rest of the proof is similar to
that of Section~\ref{online:sec:noDoublingGWL}, and is focused on the second point in the 
high-level overview we gave in the beginning of this
section~\ref{online:sec:doublingGWL}. In Section~\ref{online:sec:noDoublingGWL}
we knew $\opte$ and obtained a
$\opte(1-2\epsilon)$ approximation with no resource $i$ consumed beyond $c_i$
with probability $1-\epsilon$. Here, instead of $\opte$ we have an
approximation for $\opte$ in the form of $\zr$ which gets increasingly accurate
as $r$ increases. We set a target of $\frac{\tr\zr}{m}$ for stage $r$, and
show that with a probability of at least $1-\delta$ we get a profit of
$\frac{\tr\zr}{m}(1-\epsy)$ from every resource $i$ and no resource $i$
consumed beyond $\frac{\tr c_i}{m}(1+\epsx)$ capacity.\footnote{Note that we are allowed
to consume a bit beyond $\frac{\tr c_i}{m}$ because our goal is just that over
all we don't consume beyond $c_i$, and not that for every stage we respect
the $\frac{\tr c_i}{m}$ constraint. In spite of this $(1+\epsx)$ excess
consumption in all stages, since stage $-1$ consumes nothing at all, we will
see that no excess consumption occurs at the end. 
%Thus since $(1+\epsx)$ excess
%is allowed in every stage, instead of using $\hoe(\stage)$ we use $\ho$, i.e., there is
%no need to be conservative in these stages. Because we are not conservative we are able
%to get a profit of $\frac{\tr \zr}{m}(1-\epsy)$, instead of $\frac{\tr \zr}{m}\frac{1-\epsy}{1+\epsx}$.
}

As in Section~\ref{online:sec:noDoublingGWL}, call stage $r$ of algorithm $A$ a failure if at least one of the following fails:
\begin{enumerate}
\item For all $i$, $\sum_{t=\tr+1}^{\tr[r+1]} \xita \leq \frac{\tr c_i}{m}(1+\epsx)$.
%\item For all $i$, $\sum_{t=1}^m \xito \leq c_i(1-\epsilon^2)$.
\item For all $i$, $\sum_{t=\tr+1}^{\tr[r+1]} \yita \geq \frac{\tr \zr}{m}(1-\epsy)$.
\end{enumerate}

Let $\sxir = \sum_{t=\tr+1}^{\tr+s} \xit$ denote the amount of resource $i$ consumed in the first
$s$ steps of stage $\stage$, and let $\syir = \sum_{t=\tr+1}^{\tr+s} \yit$ denote the resource $i$ profit
in the first $s$ steps of stage $\stage$. 

{\allowdisplaybreaks\begin{align}\label{online:eqn:stagerFpx}
\Pr\bigg[\sum_{t=\tr+1}^{\tr[r+1]}\xita \geq \frac{\tr c_i}{m}(1+\epsx)\bigg]
&= \Pr\bigg[\frac{\sum_{t=\tr+1}^{\tr[r+1]}\xita}{\gamma c_i} \geq \frac{\tr}{m\gamma}(1+\epsx)\bigg]\nonumber\\
& = \Pr\bigg[(1+\epsx)^{\frac{\sum_{t=\tr+1}^{\tr[r+1]}\xita}{\gamma c_i}} \geq (1+\epsx)^{(1+\epsx)\frac{\tr}{m\gamma}}\bigg]\nonumber\\
&\leq \ex\bigg[(1+\epsx)^{\frac{\sum_{t=\tr+1}^{\tr[r+1]}\xita}{\gamma c_i}}\bigg]/(1+\epsx)^{(1+\epsx)\frac{\tr}{m\gamma}}\nonumber\\
&= \frac{\ex\bigg[\prod_{t=\tr+1}^{\tr[r+1]}(1+\epsx)^{\frac{\xita}{\gamma c_i}}\bigg]}{(1+\epsx)^{(1+\epsx)\frac{\tr}{m\gamma}}}
\end{align}}

{\allowdisplaybreaks\begin{align}\label{online:eqn:stagerFpy}
\Pr\bigg[\sum_{t=\tr+1}^{\tr[r+1]}\yita \leq \frac{\tr\zr}{m}(1-\epsy)\bigg]
&= \Pr\bigg[\frac{\sum_{t=\tr+1}^{\tr[r+1]}\yita}{\wm} \leq \frac{\tr\zr}{m\wm}(1-\epsy)\bigg]\nonumber\\
%&\leq \Pr\bigg[\frac{\sum_{t=\tr+1}^{\tr[r+1]}\yita}{\wm} \leq \frac{\tr\zr}{m\wm}(1-\epsy)\bigg]\nonumber\\%(\text{Since Lemma}~\ref{lem:concOfer} \text {implies }\zr\leq \opte \text{ w.h.p.})\\
& = \Pr\bigg[(1-\epsy)^{\frac{\sum_{t=\tr+1}^{\tr[r+1]}\yita}{\wm}} \geq (1-\epsy)^{(1-\epsy)\frac{\tr\zr}{m\wm}}\bigg]\nonumber\\
&\leq \ex\bigg[(1-\epsy)^{\frac{\sum_{t=\tr+1}^{\tr[r+1]}\yita}{\wm}}\bigg]/(1-\epsy)^{(1-\epsy)\frac{\tr\zr}{m\wm}}\nonumber\\
&= \frac{\ex\bigg[\prod_{t=\tr+1}^{\tr[r+1]}(1-\epsy)^{\frac{\yita}{\wm}}\bigg]}{(1-\epsy)^{(1-\epsy)\frac{\tr\zr}{m\wm}}}
\end{align}}

If our algorithm $A$ was $\ho$ $\bigg($and therefore we can use $\ex[\xito] \leq \frac{c_i}{m}$ and $\ex[\yito] \geq \frac{\opte}{m}$ $\geq \frac{\zr}{m}\bigg)$, 
the total failure probability for each stage $\stage$ which is the sum
of~\eqref{online:eqn:stagerFpx} and~\eqref{online:eqn:stagerFpy} for all the $i$'s would have been
$n\cdot\left[e^{\frac{-\epsx^2}{4\gamma}\frac{\tr}{m}} + e^{\frac{-\epsy^2}{2}\frac{\tr\zr}{m\wm}}\right] = n\cdot\left[\frac{\delta}{2n} + \frac{\delta}{2n}\right] = \delta$.
The goal is to design an algorithm $A$ for stage $r$ that, unlike $\ho$, does not know the distribution but also obtains the same $\delta$ failure probability, just as we did
in Section~\ref{online:sec:noDoublingGWL}. That is we want to show that the sum
of~\eqref{online:eqn:stagerFpx} and~\eqref{online:eqn:stagerFpy} over all $i$'s is at most $\delta$: 
%By the derivation on Section~\ref{sec:HOGWL} we know that the failure probability
%of an algorithm that consumes $\xit$ amount of resource $i$ and gives
%a $\yit$ amount of resource $i$ profit at time instance $t$ is upper bounded by
$$\sum_i\frac{\ex\bigg[\prod_{t=\tr+1}^{\tr[r+1]}(1+\epsx)^{\frac{\xita}{\gamma c_i}}\bigg]}{(1+\epsx)^{(1+\epsx)\frac{\tr}{m\gamma}}}+
\sum_i \frac{\ex\bigg[\prod_{t=\tr+1}^{\tr[r+1]}(1-\epsy)^{\frac{\yita}{\wm}}\bigg]}{(1-\epsy)^{(1-\epsy)\frac{\tr\zr}{m\wm}}}
\leq \delta.$$
For the algorithm $A^s\ho^{\tr-s}$, the above quantity can be rewritten as 
\begin{align*}
&\sum_i\frac{\ex\bigg[\left(1+\epsx\right)^{\frac{\sxiar}{\gamma c_i}}
\prod_{t=\tr+s+1}^{\tr[r+1]}\left(1+\epsx\right)^{\frac{\xito}{\gamma
c_i}}\bigg]}{(1+\epsx)^{(1+\epsx)\frac{\tr}{m\gamma}}}+\\
&\qquad\qquad\qquad\sum_i \frac{\ex\bigg[\left(1-\epsy\right)^{\frac{\syiar}{\wm}}
\prod_{t=\tr+s+1}^{\tr[r+1]}\left(1-\epsy\right)^{\frac{\yito}{\wm}}\bigg]}{(1-\epsy)^{(1-\epsy)\frac{\tr\zr}{m\wm}}}
\end{align*}
which, by using $(1+\epsilon)^x \leq 1+\epsilon x$ for $0\leq x\leq 1$, is in turn upper bounded by
\begin{align*}
&\sum_i\frac{\ex\bigg[\left(1+\epsx\right)^{\frac{\sxiar}{\gamma c_i}}
\prod_{t=\tr+s+1}^{\tr[r+1]}\left(1+\epsx{\frac{\xito}{\gamma
c_i}}\right)\bigg]}{(1+\epsx)^{(1+\epsx)\frac{\tr}{m\gamma}}}+\\
&\qquad\qquad\qquad\sum_i \frac{\ex\bigg[\left(1-\epsy\right)^{\frac{\syiar}{\wm}}
\prod_{t=\tr+s+1}^{\tr[r+1]}\left(1-\epsy{\frac{\yito}{\wm}}\right)\bigg]}{(1-\epsy)^{(1-\epsy)\frac{\tr\zr}{m\wm}}}.
\end{align*}

Since for all $t$, the random variables $\xito$, $\xita$, $\yito$ and $\yita$ are all independent, and 
$\ex[\xito]\leq \frac{c_i}{m}$, $\ex[\yito] \geq \frac{\opte}{m}$, and $\frac{\opte}{\wm} \geq \frac{1}{\gamma}$, the above is in turn upper bounded by
\begin{equation}\label{online:eqn:FAsPtr-sr}
\sum_i\frac{\ex\bigg[\left(1+\epsx\right)^{\frac{\sxiar}{\gamma c_i}}
\left(1+\frac{\epsx}{m\gamma}\right)^{\tr-s}\bigg]}{(1+\epsx)^{(1+\epsx)\frac{\tr}{m\gamma}}}+
\sum_i \frac{\ex\bigg[\left(1-\epsy\right)^{\frac{\syiar}{\wm}}
\left(1-\frac{\epsy}{m\gamma}\right)^{\tr-s}\bigg]}{(1-\epsy)^{(1-\epsy)\frac{\tr\zr}{m\wm}}}.
\end{equation}

Let $\ufr[A^s\ho^{\tr-s}]$ denote the quantity in ~\eqref{online:eqn:FAsPtr-sr}, which is
an upper bound on failure probability of the hybrid algorithm $A^s\ho^{\tr-s}$ for stage $\stage$.
We just showed that $\ufr[\ho^{\tr}] \leq \delta$. We now prove
that for all $s \in \{0,1,\dots,\tr-1\}$, $\ufr[A^{s+1}\ho^{\tr-s-1}] \leq
\ufr[A^{s}\ho^{\tr-s}]$, thus proving that $\ufr[A^{\tr}]\leq \delta$, i.e., running
the algorithm $A$ for all the $\tr$ steps of stage $\stage$ results in a failure with probability at
most $\delta$. 
%To design such an $A$ we closely follow the derivation of
%Chernoff bounds, which is what established that $\ufr[\ho^{\tr}]\leq \delta$ in
%Theorem~\ref{thm:HOGWL}.  However the design process will reveal that unlike
%algorithm $\ho$ which needs the entire distribution, just the knowledge of
%$\opte$ will do for bounding the failure probability by $\delta$. 

Assuming that for all $s < p$, the algorithm $A$ has been defined for the first
$s+1$ steps in such a way that $\ufr[A^{s+1}\ho^{\tr-s-1}] \leq
\ufr[A^s\ho^{\tr-s}]$,  we now define $A$ for the $p+1$-th step of stage $\stage$ in a way that
will ensure that $\ufr[A^{p+1}\ho^{\tr-p-1}] \leq \ufr[A^p\ho^{\tr-p}]$.  We have
{\allowdisplaybreaks\begin{align}
\ufr[A^{p+1}\ho^{m-p-1}] &= \sum_i \frac{\ex\bigg[(1+\epsx)^{\frac{\sxiar[p+1]}{\gamma c_i}}
\left(1+\frac{\epsx}{m\gamma}\right)^{\tr-p-1}\bigg]}{(1+\epsx)^{(1+\epsx)\frac{\tr}{m\gamma}}}\qquad+ \nonumber\\
&\qquad\qquad\qquad\sum_i \frac{\ex\bigg[(1-\epsy)^{\frac{\syiar[p+1]}{\wm}}
\left(1-\frac{\epsy}{m\gamma}\right)^{\tr-p-1}\bigg]}{(1-\epsy)^{(1-\epsy)\frac{\tr\zr}{m\wm}}}\nonumber\\
&\leq \sum_i \frac{\ex\bigg[(1+\epsx)^{\frac{\sxiar[p]}{\gamma c_i}}\left(1+\epsx{\frac{X_{i,\tr+p+1}^A}{\gamma c_i}}\right)
\left(1+\frac{\epsx}{m\gamma}\right)^{\tr-p-1}\bigg]}{(1+\epsx)^{(1+\epsx)\frac{\tr}{m\gamma}}}\qquad+\nonumber\\
&\qquad\qquad\qquad\sum_i \frac{\ex\bigg[(1-\epsy)^{\frac{\syiar[p]}{\wm}}\left(1-\epsy{\frac{Y_{i,\tr+p+1}^A}{\wm}}\right)
\left(1-\frac{\epsy}{m\gamma}\right)^{\tr-p-1}\bigg]}{(1-\epsy)^{(1-\epsy)\frac{\tr\zr}{m\wm}}}\label{online:eqn:intermediateUFr}
\end{align}}
Define
\begin{eqnarray*}
\phixir &= \frac{\epsx}{\gamma c_i}\bigg[\frac{(1+\epsx)^{\frac{\sxiar}{\gamma c_i}}
\left(1+\frac{\epsx}{m\gamma}\right)^{\tr-s-1}}{(1+\epsx)^{(1+\epsx)\frac{\tr}{m\gamma}}}\bigg]\label{online:eqn:phixir}\\
\phiyir &= \frac{\epsy}{\wm}\bigg[\frac{(1-\epsy)^{\frac{\syiar}{\wm}}
\left(1-\frac{\epsy}{m\gamma}\right)^{\tr-s-1}}{(1-\epsy)^{(1-\epsy)\frac{\tr\zr}{m\wm}}}\bigg]\label{online:eqn:phixir}
\end{eqnarray*}
Define the step $p+1$ of algorithm $A$ as picking the following option $k^*$ for request $j$:
\begin{equation*}%\label{eqn:algNoDoublingGWL}
k^* = \arg \min_k\left\{\sum_i\aijk\cdot\phixir[p] - \sum_i \wijk\cdot\phiyir[p]\right\}.
\end{equation*}
By the above definition of step $p+1$ of algorithm $A$ (for stage $\stage$), it follows that for any two
algorithms with the first $p$ steps being identical, and the last $\tr-p-1$ steps
following the Hypothetical-Oblivious algorithm $\ho$, algorithm
$A$'s $p+1$-th step is the one that minimizes
expression~\eqref{online:eqn:intermediateUFr}. In particular it follows that
expression~\eqref{online:eqn:intermediateUFr} is upper bounded by the same expression
where the $p+1$-the step is according to $\xito[\tr+p+1]$ and $\yito[\tr+p+1]$,
i.e., we replace $\xita[\tr+p+1]$ by $\xito[\tr+p+1]$ and $\yita[\tr+p+1]$ by
$\yito[\tr+p+1]$.  Therefore we have
{\allowdisplaybreaks\begin{align*}
\ufr[A^{p+1}\ho^{m-p-1}] &\leq\sum_i \frac{\ex\bigg[(1+\epsx)^{\frac{\sxiar[p]}{\gamma c_i}}\left(1+\epsx{\frac{\xito[\tr+p+1]}{\gamma c_i}}\right)
\left(1+\frac{\epsx}{m\gamma}\right)^{\tr-p-1}\bigg]}{(1+\epsx)^{(1+\epsx)\frac{\tr}{m\gamma}}}\qquad+\nonumber\\
&\qquad\qquad\qquad\sum_i \frac{\ex\bigg[(1-\epsy)^{\frac{\syiar[p]}{\wm}}\left(1-\epsy{\frac{\yito[\tr+p+1]}{\wm}}\right)
\left(1-\frac{\epsy}{m\gamma}\right)^{\tr-p-1}\bigg]}{(1-\epsy)^{(1-\epsy)\frac{\tr\zr}{m\wm}}}\nonumber\\
&\leq\sum_i \frac{\ex\bigg[(1+\epsx)^{\frac{\sxiar[p]}{\gamma c_i}}\left(1+\frac{\epsx}{m\gamma}\right)
\left(1+\frac{\epsx}{m\gamma}\right)^{\tr-p-1}\bigg]}{(1+\epsx)^{(1+\epsx)\frac{\tr}{m\gamma}}}\qquad+\nonumber\\
&\qquad\qquad\qquad\sum_i \frac{\ex\bigg[(1-\epsy)^{\frac{\syiar[p]}{\wm}}\left(1-\frac{\epsy}{m\gamma}\right)
\left(1-\frac{\epsy}{m\gamma}\right)^{\tr-p-1}\bigg]}{(1-\epsy)^{(1-\epsy)\frac{\tr\zr}{m\wm}}}\nonumber\\
&=\ufr[A^p\ho^{\tr-p}]\nonumber
\end{align*}}

This completes the proof of Theorem~\ref{online:thm:online_ra}. 

\subsection{Approximate Estimations}\label{online:sec:approxE}
Our Algorithm~\ref{online:alg:online_ra} in Section~\ref{online:sec:doublingGWL}
required periodically computing the optimal solution to an offline instance.
Similarly, our Algorithm~\ref{online:alg:online_ra_opte} in
Section~\ref{online:sec:noDoublingGWL} requires the value of $\opte$ to be
given. Suppose we could only approximately estimate these quantities, do our results
carry through approximately? That is, suppose the solution to the offline
instance is guaranteed to be at least $\frac{1}{\alpha}$ of the optimal, and the
stand-in that we are given for $\opte$ is guaranteed to be at least
$\frac{1}{\alpha}$ of $\opte$. Both our Theorem~\ref{online:thm:online_ra} and
Theorem~\ref{online:thm:noDoublingGWL} go through with just the $\opte$
replaced by $\opte/\alpha$. Every step of the proof of the exact version goes
through in this approximate version, and so we skip the formal proof for this
statement. 

\subsection{Adversarial Stochastic Input}\label{online:sec:asiGWL}
In this section, we relax the assumption that requests are drawn i.i.d. every time step. Namely,  the distribution for each time step need not be identical, but an adversary gets to decide which distribution to sample a request from. The adversary could even use how the algorithm has performed in the first $t-1$ steps in picking the distribution for a given time step $t$. The relevance of this model for the real world is that for settings like display ads, the distribution fluctuates over the day. In general a day is divided into many chunks and within a chunk, the distribution is assumed to remain i.i.d. This is exactly captured by this model. 

We give algorithms that give guarantees against three different benchmarks in the three models below. The benchmarks get successively stronger, and hence the information sought by the algorithm also increases successively.  

\subsubsection{ASI Model 1}

In this model, the guarantee we give is against the worst distribution over all time steps picked by the adversary. More formally, let $\opte(t)$ denote the optimal profit for the expected instance of distribution of time step $t$. Our benchmark will be $\opte = \min_t \opte(t)$. Given just the single number $\opte$, our Algorithm~\ref{online:alg:online_ra_opte} in Section~\ref{online:sec:noDoublingGWL} will guarantee a revenue of $\opte(1-2\epsilon)$
with a probability of at least $1-\epsilon$ assuming $\gamma = O(\frac{\epsilon^2}{\log(n/\epsilon)})$, just
like the guarantee in Theorem~\ref{online:thm:noDoublingGWL}. 

Algorithm~\ref{online:alg:online_ra_opte} works for this ASI model because, the
proof did not use the similarity of the distributions beyond the fact that
$\ex[\xito|\xito[t'] \forall t' < t] \leq \frac{c_i}{m}$ for all values of
$\xito[t']$, and $\ex[\yito|\yito[t'] \forall t' < t] \geq \frac{\opte}{m}$
for all values of $\yito[t']$ (Here $\xito$ and $\yito$ denote the random
variables for resource consumption and profit at time $t$ following from allocation according the optimal solution to the
expected instance of the distribution used in stage $t$).  In other words,
distributions being identical and independent is not crucial, but the fact that
the expected instances of these distributions have a minimum profit guarantee
in spite of all the dependencies between the distributions is sufficient. Both of
these inequalities remain true in this model of ASI also, and thus it easy to
verify that Algorithm~\ref{online:alg:online_ra_opte} works for this model.

\subsubsection{ASI Model 2}

In this model, which is otherwise identical to model 1, our benchmark is stronger, namely, $\opte = \frac{\sum_{t=1}^{m} \opte(t)}{m}$: this is clearly a much stronger benchmark than $\min_t \opte(t)$. Correspondingly, our algorithm requires more information than in model 1: we ask for $\opte(t)$ for every $t$, at the beginning of the algorithm.

A slight modification of our Algorithm~\ref{online:alg:online_ra_opte} in
Section~\ref{online:sec:noDoublingGWL} will give a revenue of
$\frac{\sum_{t=1}^{m} \opte(t)}{m}(1-2\epsilon)$ with probability at least $1-\epsilon$, i.e., $\opte(1-2\epsilon)$ w.p. at least
$(1-\epsilon)$. Among the two potential functions $\phixi$ and $\phiyi$, we modify $\phiyi$ in the most natural way 
to account for the fact that distributions change every step. 

Define
\begin{align*}
\phixi &= \frac{1}{c_i}\bigg[\frac{(1+\epsilon)^{\frac{\sxia}{\gamma c_i}}
\left(1+\frac{\epsilon}{(1+\epsilon)\gamma m}\right)^{m-s-1}}{(1+\epsilon)^{\frac{1}{\gamma}}}\bigg]\\
\phiyi &= \frac{1}{\opte}\bigg[\frac{(1-\epsilon)^{\frac{\syia}{\gamma \opte}}
\prod_{t=s+2}^{m}\left(1-\frac{\epsilon\opte(t)}{(1+\epsilon)\opte\gamma m}\right)}{(1-\epsilon)^{\frac{1-\epsilon}{\gamma(1+\epsilon)}}}\bigg]
\end{align*}
Note that when $\opte(t) = \opte$ for all $t$, then we get precisely the
$\phiyi$ defined in Section~\ref{online:sec:noDoublingGWL} for
Algorithm~\ref{online:alg:online_ra_opte}. 
We present our algorithm below in Algorithm~\ref{online:alg:online_ra_asi}. 

Algorithm~\ref{online:alg:online_ra_asi} works for this ASI model much for the same reason why Algorithm~\ref{online:alg:online_ra_opte} worked for ASI model 2: all the proof needs is that $\ex[\xito|\xito[t'] \forall t' < t] \leq \frac{c_i}{m}$ for all values of
$\xito[t']$, and $\ex[\yito|\yito[t'] \forall t' < t] = \frac{\opte(t)}{m}$
for all values of $\yito[t']$ (Here $\xito$ and $\yito$ denote the random
variables for resource consumption and profit at time $t$ following from allocation according the optimal solution to the
expected instance of the distribution used in stage $t$).

\begin{algorithm}[!h]
\caption{: Algorithm for stochastic online resource allocation in ASI model 2}
\label{online:alg:online_ra_asi}
{\bf Input:} Capacities $c_i$ for $i\in[n]$, the total number of requests $m$, the values of $\gamma$ and
$\opte(t)$ for $t \in [m]$, an error parameter $\epsilon >
0.$\\
\textbf{Output:} An online allocation of resources to requests\\

\begin{algorithmic}[1]
%\STATE $\opte(0) = \opte$
%\STATE Choose an error parameter $\epsilon \in (0,1)$
\STATE Initialize $\phixi[0] = \frac{1}{c_i}\bigg[\frac{
\left(1+\frac{\epsilon}{(1+\epsilon)\gamma m}\right)^{m-1}}{(1+\epsilon)^{\frac{1}{\gamma}}}\bigg]$, and, 
$\phiyi[0] = \frac{1}{\opte}\bigg[\frac{
\prod_{t=2}^m \left(1-\frac{\epsilon\opte(t)}{(1+\epsilon)\opte\gamma m}\right)}{(1-\epsilon)^{\frac{1-\epsilon}{\gamma(1+\epsilon)}}}\bigg]$
\FOR {$s$ = $1$ to $m$}
\STATE If the incoming request is $j$, use the following option $k^*$: 
\begin{equation*}
k^* = \arg \min_{k\in \optn\cup\{\bot\}}\left\{\sum_i\aijk\cdot\phixi[s-1] - 
\sum_i \wijk\cdot\phiyi[s-1]\right\}.
\end{equation*}
\STATE $\xita[s] = a_{ijk^*}$, $\yita[s]=w_{ijk^*}$
\STATE Update $\phixi = \phixi[s-1]\cdot\left[\frac{(1+\epsilon)^{\frac{\xita[s]}{\gamma c_i}}}{1+\frac{\epsilon}{(1+\epsilon)\gamma m}}\right]$, and, 
$\phiyi = \phiyi[s-1]\cdot\left[\frac{(1-\epsilon)^{\frac{\yita[s]}{\gamma\opte}}}{1-\frac{\epsilon}{(1+\epsilon)\gamma m}}\right]$
\ENDFOR
\end{algorithmic}
\end{algorithm}

We skip the proof for the profit guarantee of $\opte(1-2\epsilon)$ since it is
almost identical to the proof in Section~\ref{online:sec:noDoublingGWL} for
Algorithm~\ref{online:alg:online_ra_opte}. 

\subsubsection{ASI Model 3}

In this model, which is otherwise identical to models 1 and 2, our benchmark is 
even stronger: namely, the optimal profit of the expected instance with all the 
time varying distributions (explicitly spelled out in 
LP~\eqref{online:lp:ASIProphet}). This benchmark $\opte$ is the strongest 
benchmark possible. Correspondingly, our algorithm requires more information 
than in model 2: we ask for $\optei(t)$ for every $i$ and $t$, and $c_i(t)$ for 
every $i$ and $t$ at the beginning of the algorithm, where $\optei(t)$ and 
$c_i(t)$ are the amount of type $i$ profit obtained and type $i$ resource 
consumed by the optimal solution to the expected instance in 
LP~\eqref{online:lp:ASIProphet} at step $t$. Namely, $\optei(t) = \sum_{j,k} 
p_{j,t}\wijk\xjkt^*$, and $c_i(t) = \sum_{j,k} p_{j,t}\aijk\xjkt^*$, where 
$\xjkt^*$'s are the optimal solution to LP~\eqref{online:lp:ASIProphet}. 

\begin{table}[!h]
\begin{center}
\begin{align}
\label{online:lp:ASIProphet}
\text{Primal and dual LPs defining the expected instance}
\end{align}
%\caption{Primal and dual LPs defining the expected instance}{
\begin{tabular}{l|l}\textbf{Primal for ASI model 3} & 
\textbf{Dual for ASI model 3}\\\\
Maximize $\lambda\qquad$ s.t. & Minimize $\sum_{j,t} p_{j,t}\beta_j,t + \sum_i \alpha_i c_i\qquad$ s.t.\\\\
$\forall ~i,  ~\sum_{t,j,k} p_{j,t}\wijk\xjkt \geq \lambda$ & $\forall ~j,k, ~p_{j,t}\bigg(\beta_{j,t} + \sum_i (\alpha_i\aijk - \rho_i\wijk)\bigg) \geq 0$\\\\
$\forall ~i, ~\sum_{t,j,k} p_{j,t}\aijk\xjkt \leq c_i$ & $\sum_i \rho_i \geq 1$\\\\
$\forall ~j,t, ~\sum_k \xjkt \leq 1$ & $\forall ~i, ~\rho_i \geq 0, \alpha_i \geq 0$\\\\
$\forall ~j,k,t ~\xjkt \geq 0.$ & $\forall ~j, ~\beta_j \geq 0$\\\\
\end{tabular}
\end{center}
\end{table}

A slight modification of our Algorithm~\ref{online:alg:online_ra_opte} in
Section~\ref{online:sec:noDoublingGWL} will give a revenue of
$\opte(1-2\epsilon)$ with probability at least $1-\epsilon$. We modify the two 
potential functions $\phixi$ and $\phiyi$ in the most natural way  to account 
for the fact that distributions change every step.  Let $\optei = 
\sum_{t=1}^{m} \optei(t)$, and thus, our benchmark $\opte$ is simply $\min_i 
\optei$.  Note also that $\sum_{t=1}^m c_i(t)$, call it $c_i^*$, is at most 
$c_i$ by the feasibility of the optimal solution to 
LP~\eqref{online:lp:ASIProphet}. 

Define
\begin{align*}
%\phixi &= \frac{1}{c_i}\bigg[\frac{(1+\epsilon)^{\frac{\sxia}{\gamma c_i}}
%\left(1+\frac{\epsilon}{(1+\epsilon)\gamma m}\right)^{m-s-1}}{(1+\epsilon)^{\frac{1}{\gamma}}}\bigg]\\
\phixi &= \frac{1}{c_i}\bigg[\frac{(1+\epsilon)^{\frac{\sxia}{\gamma c_i}}
\prod_{t=s+2}^{m}\left(1+\frac{\epsilon c_i(t)}{(1+\epsilon)c_i\gamma }\right)}{(1+\epsilon)^{\frac{1}{\gamma}}}\bigg]\\
\phiyi &= \frac{1}{\optei}\bigg[\frac{(1-\epsilon)^{\frac{\syia}{\gamma \optei}}
\prod_{t=s+2}^{m}\left(1-\frac{\epsilon \optei(t)}{(1+\epsilon)\optei\gamma }\right)}{(1-\epsilon)^{\frac{1-\epsilon}{\gamma(1+\epsilon)}}}\bigg]
\end{align*}
%Note that when $\opte(t) = \opte$ for all $t$, then we get precisely the
%$\phiyi$ defined in Section~\ref{online:sec:noDoublingGWL} for
%Algorithm~\ref{online:alg:online_ra_opte}. 
We present our algorithm below in Algorithm~\ref{online:alg:online_ra_asi_prophet}.

Algorithm~\ref{online:alg:online_ra_asi_prophet} works for this ASI model much for the same reason why Algorithm~\ref{online:alg:online_ra_asi} worked for ASI model 2: all the proof needs is that $\ex[\xito|\xito[t'] \forall t' < t] = \frac{c_i(t)}{m}$ for all values of
$\xito[t']$, and $\ex[\yito|\yito[t'] \forall t' < t] = \frac{\optei(t)}{m}$
for all values of $\yito[t']$ (Here $\xito$ and $\yito$ denote the random
variables for resource consumption and profit at time $t$ following from 
allocation according the optimal solution to the expected instance captured by 
LP~\eqref{online:lp:ASIProphet}).

\begin{algorithm}[!h]
\caption{: Algorithm for stochastic online resource allocation in ASI model 3}
\label{online:alg:online_ra_asi_prophet}
{\bf Input:} Capacities $c_i(t)$ and $c_i$, and profits $\optei(t)$ for $i\in[n],\ $ $t \in [m]$, the total number of requests $m$, the values of $\gamma$ and an error parameter $\epsilon >
0.$\\
\textbf{Output:} An online allocation of resources to requests\\

\begin{algorithmic}[1]
%\STATE $\opte(0) = \opte$
%\STATE Choose an error parameter $\epsilon \in (0,1)$
\STATE Initialize %$\phixi[0] = \frac{1}{c_i}\bigg[\frac{
%\left(1+\frac{\epsilon}{(1+\epsilon)\gamma m}\right)^{m-1}}{(1+\epsilon)^{\frac{1}%{\gamma}}}\bigg]$, and, 
$\phixi[0] = \frac{1}{c_i}\bigg[\frac{\prod_{t=2}^m \left(1+\frac{\epsilon c_i(t)}{(1+\epsilon)c_i\gamma}\right)}{(1+\epsilon)^{\frac{1}{\gamma}}}\bigg]$,
and,
$\phiyi[0] = \frac{1}{\optei}\bigg[\frac{
\prod_{t=2}^m \left(1-\frac{\epsilon \optei(t)}{(1+\epsilon)\optei\gamma}\right)}{(1-\epsilon)^{\frac{1-\epsilon}{\gamma(1+\epsilon)}}}\bigg]$
\FOR {$s$ = $1$ to $m$}
\STATE If the incoming request is $j$, use the following option $k^*$: 
\begin{equation*}
k^* = \arg \min_{k\in \optn\cup\{\bot\}}\left\{\sum_i\aijk\cdot\phixi[s-1] - 
\sum_i \wijk\cdot\phiyi[s-1]\right\}.
\end{equation*}
\STATE $\xita[s] = a_{ijk^*}$, $\yita[s]=w_{ijk^*}$
\STATE Update $\phixi = \phixi[s-1]\cdot\left[\frac{(1+\epsilon)^{\frac{\xita[s]}{\gamma c_i}}}{1+\frac{\epsilon}{(1+\epsilon)\gamma m}}\right]$, and, 
$\phiyi = \phiyi[s-1]\cdot\left[\frac{(1-\epsilon)^{\frac{\yita[s]}{\gamma\optei}}}{1-\frac{\epsilon}{(1+\epsilon)\gamma m}}\right]$
\ENDFOR
\end{algorithmic}
\end{algorithm}

We skip the proof for the profit guarantee of $\opte(1-2\epsilon)$ since it is
almost identical to the proof in Section~\ref{online:sec:noDoublingGWL} for
Algorithm~\ref{online:alg:online_ra_opte}.

\section{Proof of Near-Optimality of Online Algorithm for Resource
Allocation}\label{online:sec:lb}
%\label{sec:lb}
In this section, we construct a family of instances of the resource allocation
problem in the i.i.d. setting for which $\gamma = \omega(\epsilon^2/\log n)$
will rule out a competitive ratio of $1-O(\epsilon)$. The construction closely
follows the  construction by~\citet{AWY09} for proving a similar
result in the random-permutation model. 

The instance has $n = 2^z$ resources with $B$ units of each resource, 
and $Bz(2+1/\alpha) + \sqrt{Bz}$ requests where $\alpha < 1$ is some scalar. Each request has only
one ``option", i.e., each request can either be dropped, or if served, consumes 
the same number of units of a specific subset of resources (which we construct below). This means that a request 
is simply a scalar times a binary string of length $2^z$, with the ones (or the scalars) representing the coordinates
of resources that are consumed by this request, if served. 

The requests are classified into $z$ categories. Each category in expectation
consists of $m/z = B(2+1/\alpha) +\sqrt{B/z}$ requests. A category, indexed by
$i$, is composed of two different binary vectors $v_i$ and $w_i$ (each of length
$2^z$). The easiest way to visualize these vectors is to construct two
$2^z\times z$ $0-1$ matrices, with each matrix consisting of all possible binary
strings of length $z$, written one string in a row. The first matrix lists the
strings in ascending order and the second matrix in descending order. The $i$-th
column of the first matrix multiplied by the scalar $\alpha$ is the vector $v_i$
and the $i$-th column of the second matrix is the vector $w_i$. There are two
properties of these vectors that are useful for us:
\begin{enumerate}
\item The vectors $v_i/\alpha$ and $w_i$ are complements of one another
\item Any matrix of $z$ columns, with column $i$ being either $v_i/\alpha$ 
or $w_i$ has exactly one row with all ones in it.
\end{enumerate}

We are ready to construct the i.i.d. instance now. Each request is drawn from the following
distribution. A given request could be, for each $i$, of type:  

\begin{enumerate}
\item $v_i$ and profit 4$\alpha$ with probability $\frac{B}{\alpha zm}$
\item $w_i$ and profit 3 with probability $\frac{B}{zm}$
\item $w_i$ and profit 2 with probability $\sqrt{\frac{B}{zm}}$
\item $w_i$ and profit 1 with probability $\frac{B}{zm}$
\item Zero vector with zero profit with probability $1 - \frac{2B}{zm} - \sqrt{\frac{B}{zm}} - \frac{B}{\alpha zm} $
\end{enumerate}

We use the following notation for request types: a $(2,w_i)$ request stands for a $w_i$ type request of profit
$2$. Observe that the expected instance has an optimal profit of $\opt = 7B$. This is obtained by picking
for each $i$, the $\frac{B}{\alpha z}$ vectors of type $v_i$ and profit
4$\alpha$, along with $\frac{B}{z}$ vectors of type
$w_i$ with profit 3. Note that this exhausts every unit of every item, and thus, combined
with the fact that the most profitable requests have been served, the value of $7B$ is indeed
the optimal value. This means, any algorithm that obtains a $1-\epsilon$ competitive ratio
must have an expected profit of at least $7B - 7\epsilon B$. 

Let $r_i(w)$ and $r_i(v)$ be the random variables denoting the number of vectors of type $w_i$ and $v_i$ 
picked by some $1-\epsilon$ competitive algorithm $ALG$. Let $a_i(v)$ denote the
total number of vectors of type $v_i$ that arrived in this instance. 

\begin{lemma}\label{lem:noChoiceWithW}
For some constant $k$, the $r_i(w)$'s satisfy
\begin{align*}
\sum_i \ex\bigg[|r_i(w) - B/z|\bigg] \leq 7\epsilon B + 4\sqrt{\alpha k Bz}. 
\end{align*}
\end{lemma}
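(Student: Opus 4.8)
\medskip\noindent\emph{Proof proposal.}
The plan is to play $ALG$'s near-optimality against two capacity constraints that the combinatorial structure of the $v_i,w_i$'s hands us for free, and then to ``localize'' the global near-optimality by invoking exactly one more, cleverly chosen, capacity constraint. Throughout write $\Pi:=\alpha\sum_i r_i(v)$, let $r_i(w,p)$ be the number of served $w_i$-requests that carried profit $p$ (so $r_i(w)=r_i(w,3)+r_i(w,2)+r_i(w,1)$), and set $a(v):=\sum_i a_i(v)$.

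\medskip\noindent\textbf{Step 1 (what competitiveness buys).}
Since $ALG$ is feasible it obeys all $2^z$ capacity constraints; in particular the constraint indexed by the all-ones row of the first matrix reads $\Pi\le B$, and the one indexed by the all-zeros row reads $\sum_i r_i(w)\le B$. Put $g:=B-\sum_i r_i(w)\ge 0$. Substituting $\sum_i r_i(w,3)=(B-g)-\sum_i r_i(w,2)-\sum_i r_i(w,1)$ into $\text{profit}(ALG)=4\Pi+3\sum_i r_i(w,3)+2\sum_i r_i(w,2)+\sum_i r_i(w,1)$ and collecting terms gives the exact identity
\[
\text{profit}(ALG)=7B-4(B-\Pi)-3g-\sum_i r_i(w,2)-2\sum_i r_i(w,1),
\]
in which every subtracted quantity is $\ge 0$. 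Since $\opt=7B$, a $1-\epsilon$ competitive $ALG$ has $\ex[\text{profit}(ALG)]\ge 7B(1-\epsilon)$, so $\ex\!\left[4(B-\Pi)+3g+\sum_i r_i(w,2)+2\sum_i r_i(w,1)\right]\le 7\epsilon B$; in particular $\ex[B-\Pi]\le\tfrac74\epsilon B$ and $\ex[g]\le\tfrac73\epsilon B$.

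\medskip\noindent\textbf{Step 2 (localized bound on the over-counts --- the hard part).}
Fix a sample path and let $T=\{i:r_i(w)>B/z\}$. By the second structural property of the $v_i,w_i$ (equivalently, because the first matrix lists \emph{all} binary strings) there is a resource $k$ whose row in the first matrix is the indicator of $[z]\setminus T$; its capacity constraint reads $\alpha\sum_{i\notin T}r_i(v)+\sum_{i\in T}r_i(w)\le B$. Using $\sum_{i\notin T}r_i(v)\ge\sum_{i\notin T}a_i(v)-\bigl(a(v)-\sum_i r_i(v)\bigr)$ and then rewriting $\alpha a(v)=\alpha\sum_{i\in T}a_i(v)+\alpha\sum_{i\notin T}a_i(v)$, the two copies of $\alpha a(v)$ cancel and one is left with the pathwise inequality
\[
\sum_i\Bigl(r_i(w)-\tfrac{B}{z}\Bigr)^{+}=\sum_{i\in T}\Bigl(r_i(w)-\tfrac{B}{z}\Bigr)\ \le\ (B-\Pi)+\sum_i\Bigl(\alpha a_i(v)-\tfrac{B}{z}\Bigr)^{+}.
\]
I expect this to be the crux: the point is that a \emph{single} capacity constraint is used --- the one selected by $ALG$'s own over-counting pattern $T$, which exists precisely by property~(2) --- so no factor of $z$ appears in front of $B-\Pi$. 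The naive route of bounding each $r_i(w)$ from the ``all-ones-except-$i$'' resource would multiply that term by $z$ and wreck the bound.

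\medskip\noindent\textbf{Step 3 (assembling).}
Use the elementary identity $\sum_i\lvert r_i(w)-B/z\rvert=2\sum_i\bigl(r_i(w)-B/z\bigr)^{+}+g$, which reduces the lemma to Steps 1 and 2. Taking expectations and noting that $\alpha a_i(v)$ is a scaled binomial with mean $B/z$ and variance at most $\alpha B/z$, so $\ex\bigl[(\alpha a_i(v)-B/z)^{+}\bigr]\le\tfrac12\sqrt{\alpha B/z}$ and hence $\sum_i\ex\bigl[(\alpha a_i(v)-B/z)^{+}\bigr]\le\tfrac12\sqrt{\alpha Bz}$, one gets
\[
\sum_i\ex\bigl[\lvert r_i(w)-B/z\rvert\bigr]\ \le\ 2\Bigl(\tfrac74\epsilon B+\tfrac12\sqrt{\alpha Bz}\Bigr)+\tfrac73\epsilon B\ =\ \tfrac{35}{6}\epsilon B+\sqrt{\alpha Bz},
\]
which is of the claimed form $7\epsilon B+4\sqrt{\alpha k Bz}$ for any absolute constant $k\ge\tfrac1{16}$ (the slack between $\tfrac{35}{6}$ and $7$, and in the binomial-tail constant, absorbs the looseness hidden in the unspecified $k$).
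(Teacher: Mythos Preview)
Your proof is correct, and it uses the same pivotal combinatorial observation as the paper---namely, the single capacity constraint indexed by the complement of the over-count set $T$ (the paper calls it $Y$), which property~(2) guarantees exists. Beyond that shared crux, however, your route is genuinely different and in some respects cleaner.

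The paper upper-bounds $\ex[ALG]$ directly: it splits the $v$-contribution across $X=[z]\setminus Y$ and $Y$, uses the capacity constraint to cap the $v$'s from $X$ by $(B-\sum_{i\in Y}r_i(w))/\alpha$, handles the $w$-contribution via the bookkeeping quantity $\text{best}w_i(r_i(w))$ and the inequalities $\text{best}w_i(r_i(w))\le \text{best}w_i(B/z)+3P$ on $Y$ and $\le \text{best}w_i(B/z)-M$ on $X$, and then extracts $\ex[P+M]$ from the gap $\opt-\ex[ALG]$. The binomial fluctuation term $\sum_{i\in Y}(a_i(v)-B/(\alpha z))$ is bounded by invoking the Central Limit Theorem. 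By contrast, you first write an exact algebraic identity for $\text{profit}(ALG)$ to read off $\ex[B-\Pi]$ and $\ex[g]$ separately from competitiveness, then apply the key capacity constraint as a standalone pathwise inequality on $\sum_i(r_i(w)-B/z)^+$, and finally reassemble via $|x|=2x^+ +(-x)$ summed. Your binomial-tail step uses only the second moment ($\ex[(X-\ex X)^+]\le\tfrac12\sqrt{\mathrm{Var}(X)}$), which is more elementary than the CLT and yields an explicit constant. What your approach buys is modularity and sharper constants; what the paper's approach buys is that the bound on $P+M$ falls out of a single chain of inequalities rather than three separate pieces.
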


\begin{proof}
Let $Y$ denote the set of indices $i$ for which $r_i(w) > B/z$. One way to upper
bound the total number of vectors of type $v$ picked by $ALG$ is the following. Split the set of
indices into $Y$ and $X = [z] \setminus Y$. The number of $v$'s from $Y$ is, by
chosen notation, $\sum_{i \in Y}r_i(v)$.  The number of $v$'s from $X$, we show,
is at most $\frac{B - \sum_{i\in Y}r_i(w)}{\alpha}$. Note that since there are
only $B$ copies of every item, it follows that $\alpha[\sum_i r_i(v)] \leq B$,
and $\sum_i r_i(w) \leq B$. Further, by property 2 of $v_i$'s and $w_i$'s, we
have that $\alpha[\sum_{i \in X} r_i(v)] + \sum_{i\in Y} r_i(w) \leq B$. This
means that the number of $v$'s from $X$ is at most $\frac{B - \sum_{i \in Y}
r_i(w)}{\alpha}$. 

Let $P = \sum_{i\in Y} (r_i(w) - B/z)$, and $M = \sum_{i\in X}(B/z - r_i(w))$.
Showing $\ex[P+M]\leq 7\epsilon B + 4\sqrt{\alpha kBz}$ proves the lemma. By an
abuse of notation, let $ALG$ also be
the profit obtained by the algorithm $ALG$ and let best$w_i(t)$ denote the most
profitable $t$ requests of type $w_i$ in a given instance. Note that $4B + \sum_{i=1}^z$best$w_i(B/z) \leq 7B = \opt$. 
We upper-bound $\ex[ALG]$ as:
{\allowdisplaybreaks \begin{align*}
\ex[ALG] &\leq \ex\bigg[\sum_{i=1}^{z} \text{best}w_i(r_i(w))\bigg] + 4\alpha\left[\frac{B - \sum_{i
\in Y} \ex[r_i(w)]}{\alpha} + \sum_{i \in Y} \ex[r_i(v)]\right]\\
&\leq \ex\bigg[\sum_{i=1}^{z} \text{best}_i(B/z) + 3P - M\bigg] + 4\bigg(B -
\ex\bigg[\sum_{i \in
Y}(r_i(w)-B/z) + |Y|B/z\bigg]\bigg)\\
&\qquad\qquad\qquad\qquad + 4\alpha\ex\bigg[\sum_{i \in Y}
r_i(v)\bigg]\\
&\leq \opt -\ex[P+M] + 4\alpha\ex\bigg[\sum_{i \in Y}\bigg(r_i(v) -
\frac{B}{\alpha z}\bigg)\bigg]\\
&\text{ $\bigg($Since $P = \sum_{i\in Y} (r_i(w) - B/z)$ $\bigg)$}\\
&\leq \opt -\ex[P+M] + 4\alpha\ex\bigg[\sum_{i \in Y}\bigg(a_i(v) -
\frac{B}{\alpha z}\bigg)\bigg] (\text{Since $r_i(v) \leq a_i(v)$})\\
&\leq \opt -\ex[P+M] + 4\alpha\ex\bigg[\sum_{i:a_i(v) \geq \frac{B}{\alpha
z}}\bigg(a_i(v) - \frac{B}{\alpha z}\bigg)\bigg]\\
&\leq \opt -\ex[P+M] + 4\alpha\cdot z\cdot k'\cdot\sqrt{\frac{B}{\alpha z}}\\
&\text{   (where $k'$ is some constant from 
Central Limit Theorem)}\\ 
&\leq \opt -\ex[P+M] + 4\sqrt{\alpha kBz} \text{  (where $k$ is $k'^2$)}
\end{align*}}
The inequality that follows from CLT uses the fact that for a random variable $X \sim (m,c/m)$ ($X$ is binomially distributed
with success probability of $c/m$), whenever $c = \omega(1)$, and $c\leq m$, we have that $E[X|X\geq c] = c +k'\sqrt{c}$, for some
constant $k'$. In this case, we have $\frac{B}{\alpha z}$ in place of $c$. For example, if $n = \log(m)$ (and thus $z = \log n = \log\log m$), 
as long as $B = \omega(\log \log m)$ and $B\leq m$, the CLT inequality will hold. Note that $\alpha$ could have been any constant and this argument still holds.
\end{proof}

We are now ready to prove Theorem~\ref{online:thm:lb}, which we restate here for
convenience. 
\begin{oneshot}{Theorem~\ref{online:thm:lb}}
There exist instances with $\gamma = \frac{\epsilon^2}{\log (n)}$ such that no
algorithm, even with complete knowledge of the distribution, can get a $1- o(\epsilon)$
approximation factor.
\end{oneshot}
\begin{proof}
We first give the overview of the proof before providing a detailed argument. 
\paragraph*{Overview.} Lemma~\ref{lem:noChoiceWithW} says that $r_i(w)$ has to be almost always close
to $B/z$ for all $i$. In particular, the probability that $\sum_i |r_i(w) - B/z| \leq 4\left(7\epsilon B + 4\sqrt{\alpha k Bz}\right)$
is at least $3/4$.  In this proof we show, in an argument similar to the one
in~\citet{AWY09}, that if this has to be true, one has to lose a revenue of
$\Omega(\sqrt{Bz}) - 4(7\epsilon B + 4\sqrt{\alpha k Bz})$.  Since $\alpha$ can be
set to any arbitrary constant, this means that we lose a revenue of
$\Omega(\sqrt{Bz})-28\epsilon B$. Since $\opt$ is $7B$, to get a $1-\epsilon$
approximation, we require that $\Omega(\sqrt{Bz})-28\epsilon B \leq
7\epsilon B$. Thus, we need $B \geq \Omega(\frac{\log m}{\epsilon^2})$.  In other
words, we require $\gamma = \frac{1}{B} \leq O(\frac{\epsilon^2}{\log m})$. 

\paragraph*{In Detail} We now proceed to prove the claim that a revenue loss of
$\Omega(\sqrt{Bz}) - 4(7\epsilon B + \sqrt{\alpha k Bz})$ is inevitable. We
just showed that with a probability of at least 3/4, $\sum_i |r_i(w) - B/z|
\leq 4\left(7\epsilon B + 4\sqrt{\alpha k Bz}\right)$.  For now we assume that
$r_i(w)$ should be exactly $B/z$ and later account for the probability $1/4$
leeway and also the $4\left(7\epsilon B + 4\sqrt{\alpha k Bz}\right)$ error
that is allowed by Lemma~\ref{lem:noChoiceWithW}.  With this assumption, we
show that for each $i$ there is a loss of $\Omega(\sqrt{B/z})$.

For each $i$ let $o_i$ denote the number of $(1,w_i)$ requests that the
algorithm served in total.  With a constant probability the number of $3$'s and
$2$'s (of type $w_i$) exceed $B/z$.  If $o_i = \Omega(\sqrt{B/z})$ 
%(for an appropriate small constant inside $\Omega$) 
there is a loss of at least
$\Omega(\sqrt{B/z})$ because of picking ones instead of 2's or 3's. This
establishes the $\Omega(\sqrt{B/z})$ loss that we wanted to prove, for this case.

Suppose $o_i < \Omega(\sqrt{Bz})$. For each $i$, let $R_i$ be the set of requests of type $w_i$ with profit either
1 or 3.  For every $i$, with a constant probability $2B/z - 2\sqrt{B/z} \leq
|R_i | \leq 2B/z + 2\sqrt{B/z}$.
Conditional on the set $R_i$ we make the following two observations:
\begin{itemize} 
\item the types of requests in $R_i$ are independent 
random variables that take value $1$ or $3$ with equal
probability.
\item the order of requests in $R_i$ is a uniformly random permutation of $R_i$
\end{itemize}
Now consider any $(2, w_i)$ request, say $t$-th request, of profit 2. With a constant probability this 
request can be served without violating any capacity constraints, and thus, the algorithm has
to decide whether or not to serve this request. In at least 1/2
of the random permutations of $R_i$, the number of bids from set $R_i$ before
the bid $t$ is less than $B/z$. Conditional on this event, the profits of requests in $R_i$ before $t$,
with a constant probability could:
\begin{enumerate}
\item take values such that there are enough $(3,w_i)$ requests after $t$ to make the total number of $w_i$ requests picked 
by the algorithm to be at least $B/z$;
\item take values  such that even if all the $(3,w_i)$ requests after $t$ were picked, the total number of $w_i$ requests
picked is at most $B/z-\sqrt{B/z}$ with a constant probability. 
\end{enumerate}
%This probability calculation is similar to the one used by Kleinberg [21] in his proof of necessity of condition B ≥ Ω(1/ 2 ).
%For completeness, we derive it in the Lemma C.1 towards the end of the proof.
In the first kind of instances (where number of $(3, w_i)$ requests are more than $B/z$) retaining $(2, w_i)$
causes a loss of 1 as we could have picked a 3 instead. 
In the second kind, skipping $(2, w_i)$ causes a loss of $1$ since we could have picked that $2$ instead of a $1$. 
Thus there is an inevitable constant probability loss of 1 per $(2,w_i)$ request. Thus in expectation, there is a $\Omega(\sqrt{B/z})$ loss. 

Thus whether $o_i =\sqrt{B/z}$ or $o_i < \sqrt{B/z}$, we have established a loss of $\Omega(\sqrt{B/z})$ per $i$ and thus a
total expected loss of $\Omega(\sqrt{Bz})$. This is under the assumption that $r_i(w)$ is exactly $B/z$. There is a leeway of 
$4\left(7\epsilon B + 4\sqrt{\alpha k Bz}\right)$ granted by Lemma~\ref{lem:noChoiceWithW}. Even after that leeway, since 
$\alpha$ can be made an arbitrarily small constant and Lemma~\ref{lem:noChoiceWithW} still holds, we have the loss
at $\Omega(\sqrt{Bz}) - 28\epsilon B$. Now after the leeway, the statement $\sum_i |r_i(w) - B/z| \leq 4\left(7\epsilon B + 4\sqrt{\alpha k Bz}\right)$
has to hold only with probability $3/4$. But even this puts the loss at $\Omega(\sqrt{Bz}) - 21\epsilon B$

Therefore, $E[ALG] \leq \opt - \Omega(\sqrt{Bz}) - 21\epsilon B.$ Since $\opt = 7B$, we have 
$E[ALG] \leq \opt(1 - \Omega(\sqrt{z/B}) - 21\epsilon)$, and in order to get $1 - O(\epsilon)$ approximation we need
$\Omega(\sqrt{z/B} - 21\epsilon) \leq O(\epsilon)$, implying that  $B \geq \Omega(z/\epsilon^2) = \Omega(\log m/\epsilon^2).$
\end{proof}

\section{Greedy Algorithm for Adwords}\label{online:sec:adwordsGreedy}
In this section, we give a simple proof of
Theorem~\ref{online:thm:adwords_greedy}, which we restate below for convenience.
\begin{oneshot}{Theorem~\ref{online:thm:adwords_greedy}}
The greedy algorithm achieves an approximation factor of $1-1/e$ 
for the Adwords problem in the i.i.d. unknown distributions model 
for all $\gamma$, i.e., $0\leq\gamma\leq 1$.  
\end{oneshot}

As noted in Section~\ref{online:subsec:online_ra} where the Adwords problem was
introduced, the budget constraints are not hard, i.e., when a query $j$ arrives,
with a bid amount $b_{ij} > $ remaining budget of $i$, we are still allowed to
allot that query to advertiser $i$, but we only earn a revenue of the remaining
budget of $i$, and not the total value $b_{ij}$.  

~\citet{GM08} prove that the greedy algorithm gives a $(1-1/e)$
approximation to the adwords problem when the queries arrive in a random
permutation or in i.i.d., but under an assumption which almost gets down to
$\gamma$ tending to zero, i.e., bids being much smaller than budgets.  We give a
much simpler proof for a $(1-1/e)$ approximation by greedy algorithm for the
i.i.d. unknown distributions case, and our proof works for all $\gamma$.

Let $p_j$ be the probability of query $j$ appearing in any given impression. 
Let $y_j = mp_j$. 
%We assume $y_j \geq 1$ for all $j$, i.e., in expectation, each query appears at least once. 
Let $x_{ij}$ denote the offline fractional optimal solution for the expected instance. 
Let $w_i(t)$ denote the amount of money spent by advertiser $i$ at time step $t$, i.e., 
for the $t$-th query in the greedy algorithm (to be described below). Let $f_i(0) = \sum_j b_{ij}x_{ij}y_j$. 
Let $f_i(t) = f_i(0) - \sum_{r=1}^{t}w_i(r)$. 
%Let $f_i(t)$ denote the budget remaining for advertiser $i$ after $t$ queries have been served. 
Let $f(t) = \sum_{i=1}^n f_i(t)$. 
%We calculate $f_i(t)$ taking the total budget of $i$ at time zero to be $f_i(0) = \sum_j b_{ij}x_{ij}y_j$,
Note that $f_i(0)$ is the amount spent by $i$ in the offline fractional optimal
solution to the expected instance.

Consider the greedy algorithm which allocates the query $j$ arriving at time $t$ to the 
advertiser who has the maximum effective bid for that query, i.e., $\underset{i}{\mathrm{argmax}}
\min\{b_{ij}, B_i - \sum_{r=1}^{t-1}w_i(r)\}$. We prove that this algorithm obtains a revenue of $(1-1/e)\sum_{i,j} b_{ij}x_{ij}y_j$
and thus gives the desired $1-1/e$ competitive ratio against the fractional
optimal solution to the expected instance. 
%The proof is similar to the proof we presented in Lemma~\ref{lem:convexity} for the resource allocation problem. 
Consider a hypothetical algorithm that allocates queries to advertisers
according to the $x_{ij}$'s. We prove that this hypothetical algorithm obtains an expected revenue of 
$(1-1/e)\sum_{i,j} b_{ij}x_{ij}y_j$, and argue that the greedy algorithm only performs better. 
Let $w^h_i(t)$ and $f^h_i(t)$ denote the quantities analogous to $w_i(t)$ and $f_i(t)$ for the hypothetical algorithm, 
with the initial value $f^h_i(0) = f_i(0) = \sum_j b_{ij}x_{ij}y_j$.
Let $f^h(t) = \sum_{i=1}^{n} f^h_i(t)$. 
Let EXCEED$_i(t)$ denote the set of all $j$ such that $b_{ij}$ is strictly greater
than the remaining budget at the beginning of time step $t$, namely $b_{ij} > B_i - \sum_{r=1}^{t-1}w^h_i(r).$
%Let $e_i(t) = \sum_{j \in \text{EXCEED}_i(t)}b_{ij}x_{ij}y_j$. Thus,
%we have $$\sum_{j \notin \text{EXCEED}_i(t)}b_{ij}x_{ij}y_j = f^h_i(0) - e_i(t).$$
%******************************************************************************************
\begin{lemma}\label{lem:adwordsPrelim}
$\ex[w^h_i(t)|f^h_i(t-1)] \geq  \frac{f^h_i(t-1)}{m}$
\end{lemma}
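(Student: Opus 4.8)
The plan is to prove the inequality after conditioning on the run of the hypothetical algorithm through step $t-1$. Since $\sum_{r=1}^{t-1} w^h_i(r) = f^h_i(0) - f^h_i(t-1)$, conditioning on $f^h_i(t-1)$ already pins down the remaining budget of advertiser $i$ at the start of step $t$, namely $R := B_i - \sum_{r=1}^{t-1} w^h_i(r) = B_i - f^h_i(0) + f^h_i(t-1)$. Two elementary facts are all that is needed: $R \ge 0$ (the hypothetical algorithm never spends past the remaining budget on a single query), and $f^h_i(0) = \sum_j b_{ij} x_{ij} y_j \le B_i$ (feasibility of the optimal solution $x$ for the budget constraint of the expected instance); in particular $0 \le R \le B_i$.

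Next I would compute the one-step conditional expectation. The query arriving at step $t$ (drawn with probabilities $p_j$) and the allocation coin at step $t$ (which sends that query to $i$ with probability $x_{ij}$, whereupon $i$ spends the effective bid $\min\{b_{ij},R\}$) are independent of the past, so
\[
\ex\big[\, w^h_i(t) \mid f^h_i(t-1)\,\big] \;=\; \sum_j p_j\, x_{ij}\, \min\{b_{ij}, R\}.
\]
This is where the hypothesis $\gamma \le 1$ enters: $\gamma \le 1$ means exactly that $b_{ij} \le B_i$ for all $i,j$, and together with $0 \le R \le B_i$ this yields the truncation inequality $\min\{b_{ij}, R\} \ge \tfrac{R}{B_i}\, b_{ij}$ (check the cases $b_{ij}\le R$ and $b_{ij}>R$ separately). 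Substituting, and using $y_j = m p_j$,
\[
\ex\big[\, w^h_i(t) \mid f^h_i(t-1)\,\big] \;\ge\; \frac{R}{B_i}\sum_j p_j\, x_{ij}\, b_{ij} \;=\; \frac{R}{B_i}\cdot\frac{f^h_i(0)}{m}.
\]

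It remains to check $\tfrac{R}{B_i}\cdot\tfrac{f^h_i(0)}{m} \ge \tfrac{f^h_i(t-1)}{m}$. Multiplying by $B_i m>0$ and substituting $f^h_i(t-1) = f^h_i(0) - B_i + R$, this is equivalent to $R\, f^h_i(0) \ge B_i f^h_i(0) - B_i^2 + B_i R$, i.e. to $(B_i - R)(B_i - f^h_i(0)) \ge 0$, which holds since $R \le B_i$ and $f^h_i(0) \le B_i$. (The same algebra also covers the case $f^h_i(t-1) < 0$, where the statement is trivially true anyway.) This is precisely the claimed bound; if one prefers to condition on the full history $H_{t-1}$ rather than on $f^h_i(t-1)$, the identical estimate holds pointwise and then averages down to the lemma. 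I do not foresee a genuine obstacle here; the one step worth getting right — and the place I would single out as the crux — is recognizing that the correct elementary estimate is $\min\{b,R\}\ge \tfrac{R}{B_i}b$ and that it must be combined with the LP-feasibility bound $f^h_i(0)\le B_i$, which is exactly the ingredient supplied by $\gamma\le1$.
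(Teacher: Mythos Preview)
Your proof is correct, and it takes a genuinely different route from the paper's. The paper splits into two cases according to whether the total mass of bids exceeding the remaining budget, $\sum_{j\in\text{EXCEED}_i(t)} x_{ij}y_j$, is at least $1$ or strictly less than $1$; in the first case it lower-bounds the expected spend directly by the remaining budget (which dominates $f^h_i(t-1)$ since $f^h_i(0)\le B_i$), and in the second it rewrites the expectation as $\tfrac{f^h_i(0)}{m}$ minus a correction term that is bounded using $b_{ij}\le B_i$. You instead bypass the case split entirely via the single pointwise inequality $\min\{b_{ij},R\}\ge \tfrac{R}{B_i}b_{ij}$ (valid because $b_{ij}\le B_i$ and $R\le B_i$), and then close with the factored inequality $(B_i-R)(B_i-f^h_i(0))\ge 0$. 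Both arguments rest on exactly the same two facts, $b_{ij}\le B_i$ and $f^h_i(0)\le B_i$, but yours packages them more economically; the paper's case analysis is perhaps more transparent about where each fact enters.
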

\begin{proof}
The expected amount amount of money spent at time step $t$, is given by 
\begin{eqnarray}\label{eqn:expenseOfi}
\ex[w^h_i(t)|f^h_i(t-1)] = \underset{j\in\text{EXCEED}_i(t)}{\sum}\bigg(B_i - \sum_{r=1}^{t-1}w^h_i(r)\bigg)\frac{x_{ij}y_j}{m} 
+ \sum_{j\notin \text{EXCEED}_i(t)} b_{ij}\frac{x_{ij}y_j}{m}.
\end{eqnarray}
If $\underset{j\in\text{EXCEED}_i(t)}{\sum}x_{ij}y_j \geq 1$, then by~\eqref{eqn:expenseOfi}, 
$$ \ex[w^h_i(t)|f^h_i(t-1)] \geq \frac{B_i - \sum_{r=1}^{t-1}w^h_i(r)}{m} \geq \frac{f^h_i(0) - \sum_{r=1}^{t-1}w^h_i(r)}{m}
= \frac{f^h_i(t-1)}{m}.$$ 
Suppose on the other hand $\underset{j\in\text{EXCEED}_i(t)}{\sum}x_{ij}y_j < 1.$ 
We can write $\ex[w^h_i(t)|f^h_i(t-1)]$ as
\begin{eqnarray}\label{eqn:adwordsIntCase}
\ex[w^h_i(t)|f^h_i(t-1)] = \frac{f^h_i(0)}{m} -  \underset{j\in\text{EXCEED}_i(t)}{\sum} \bigg(b_{ij} - (B_i - \sum_{r=1}^{t-1}w^h_i(r))\bigg)\frac{x_{ij}y_j}{m}.
\end{eqnarray}
Since $b_{ij} \leq B_i$, and $\underset{j\in\text{EXCEED}_i(t)}{\sum}x_{ij}y_j < 1$,~\eqref{eqn:adwordsIntCase} can be simplified to 
\begin{eqnarray*}
\ex[w^h_i(t)|f^h_i(t-1)] &>& \frac{f^h_i(0)}{m} -  \frac{\sum_{r=1}^{t-1}w^h_i(r)}{m}\\
&=&  \frac{f^h_i(t-1)}{m}.
\end{eqnarray*}
\end{proof}
\begin{lemma}\label{lem:hypo_generalized_adwords}
The hypothetical algorithm satisfies the following: $\ex[f^h(t)|f^h(t-1)] \leq f^h(t-1)(1-1/m)$
\end{lemma}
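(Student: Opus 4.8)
The statement should follow almost immediately from Lemma~\ref{lem:adwordsPrelim} together with linearity of expectation, so the plan is short. First I would unwind the definitions: since $f^h_i(t) = f^h_i(t-1) - w^h_i(t)$ for every $i$, summing over $i$ gives the identity $f^h(t) = f^h(t-1) - \sum_{i=1}^n w^h_i(t)$. Hence, conditioning appropriately and using linearity,
\[
\ex[f^h(t)\mid \cdot] = f^h(t-1) - \sum_{i=1}^n \ex[w^h_i(t)\mid \cdot].
\]
So it suffices to show $\sum_i \ex[w^h_i(t)\mid\cdot] \ge f^h(t-1)/m$, which is exactly the per-coordinate bound of Lemma~\ref{lem:adwordsPrelim} summed over $i$.

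The one point that needs a little care is the conditioning. Lemma~\ref{lem:adwordsPrelim} is stated conditioned on $f^h_i(t-1)$, but its proof only uses that the remaining budget $B_i - \sum_{r=1}^{t-1} w^h_i(r)$ (and hence $f^h_i(t-1)$ and the set $\mathrm{EXCEED}_i(t)$) is fixed; the only randomness exploited is that the time-$t$ query is an independent fresh draw from the distribution $\{y_j/m\}_j$. Therefore I would condition on the full history $\mathcal H_{t-1}$ of queries and allocations through step $t-1$: this is a refinement of conditioning on $f^h_i(t-1)$, it leaves the time-$t$ draw independent, and so the same argument yields $\ex[w^h_i(t)\mid \mathcal H_{t-1}] \ge f^h_i(t-1)/m$ for every $i$. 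Summing over $i$ gives $\ex\!\big[\sum_i w^h_i(t)\mid \mathcal H_{t-1}\big] \ge \frac{1}{m}\sum_i f^h_i(t-1) = f^h(t-1)/m$, hence $\ex[f^h(t)\mid \mathcal H_{t-1}] \le f^h(t-1)(1-1/m)$. Finally, since $f^h(t-1)$ is $\mathcal H_{t-1}$-measurable, the tower property gives $\ex[f^h(t)\mid f^h(t-1)] = \ex\big[\ex[f^h(t)\mid\mathcal H_{t-1}]\mid f^h(t-1)\big] \le f^h(t-1)(1-1/m)$, as claimed.

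There is essentially no analytic obstacle here — no Chernoff bound, no optimization — so the "hard part'' is purely the bookkeeping of conditioning: making sure that the inequality of Lemma~\ref{lem:adwordsPrelim} is invoked in a form (conditioned on the whole past) under which the $n$ coordinate bounds can be added, and that the $w^h_i(t)$ are summed with the correct sign in the telescoping identity $f^h(t) = f^h(t-1) - \sum_i w^h_i(t)$. Once that is set up, the displayed chain of (in)equalities above completes the proof.
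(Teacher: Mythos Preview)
Your proposal is correct and follows essentially the same approach as the paper: use the recursion $f^h_i(t)=f^h_i(t-1)-w^h_i(t)$, apply Lemma~\ref{lem:adwordsPrelim} coordinatewise, and sum over $i$. If anything, your treatment of the conditioning (working with the full history $\mathcal H_{t-1}$ and then invoking the tower property) is more careful than the paper's, which simply sums the coordinatewise inequalities without explicitly reconciling the different conditioning events.
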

\begin{proof}
From the definition of $f^h_i(t)$, we have
\begin{eqnarray*}
f^h_i(t) &=& f^h_i(t-1) - w^h_i(t)\\
\ex[f^h_i(t) | f^h_i(t-1)] &=& f^h_i(t-1) - \ex[w^h_i(t)|f^h_i(t-1)] \leq f^h_i(t-1)(1-\frac{1}{m}),
\end{eqnarray*}
where the inequality is due to Lemma~\ref{lem:adwordsPrelim}. Summing over all $i$ gives the Lemma. 
\end{proof}

\begin{lemma}\label{lem:greedy_generalized_adwords}
$\ex[\mathrm{GREEDY}] \geq (1-1/e)\sum_{i,j}b_{ij}x_{ij}y_j$
\end{lemma}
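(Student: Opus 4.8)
The plan is to prove the lemma in two moves: first bound the expected revenue of the \emph{hypothetical} algorithm that allocates according to the $x_{ij}$'s, and then argue that greedy does at least as well. For the first move, iterate Lemma~\ref{lem:hypo_generalized_adwords}: unrolling $\ex[f^h(t)\mid f^h(t-1)]\le f^h(t-1)(1-1/m)$ over $t=1,\dots,m$ and taking total expectations gives $\ex[f^h(m)]\le f^h(0)(1-1/m)^m\le f^h(0)/e$. Since the money collected by the hypothetical algorithm is exactly $\sum_{t=1}^m\sum_i w^h_i(t)=f^h(0)-f^h(m)$, its expected revenue is at least $(1-1/e)f^h(0)=(1-1/e)\sum_{i,j}b_{ij}x_{ij}y_j$.

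The second move shows $\ex[\mathrm{GREEDY}]$ is no smaller, by establishing the same recursion for the greedy quantities $f_i(t),f(t)$ defined above: $\ex[f(t)\mid \mathcal F_{t-1}]\le f(t-1)(1-1/m)$. Fix the history through step $t-1$, so the greedy remaining budgets $R_i(t)=B_i-\sum_{r<t}w_i(r)$ and the $f_i(t-1)$'s are determined; when query $j$ arrives greedy collects $\max_i\min\{b_{ij},R_i(t)\}\ge\sum_i x_{ij}\min\{b_{ij},R_i(t)\}$, since $\sum_i x_{ij}\le1$ and all terms are nonnegative. Averaging over $j\sim p$ and then, for each $i$ separately, running the $\mathrm{EXCEED}_i(t)=\{j:b_{ij}>R_i(t)\}$ versus non-$\mathrm{EXCEED}$ case split exactly as in the proof of Lemma~\ref{lem:adwordsPrelim} --- using $b_{ij}\le B_i$ (this is precisely the hypothesis $\gamma\le1$) and $f_i(0)\le B_i$, so $f_i(t-1)\le R_i(t)$ --- gives $\sum_j p_j x_{ij}\min\{b_{ij},R_i(t)\}\ge f_i(t-1)/m$. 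Summing over $i$ yields $\ex[\text{greedy's step-}t\text{ revenue}\mid\mathcal F_{t-1}]\ge f(t-1)/m$, which is the desired recursion.

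Unrolling this gives $\ex[f(m)]\le f(0)/e$, and since greedy's total revenue is $f(0)-f(m)$ with $f(0)=\sum_{i,j}b_{ij}x_{ij}y_j$, we get $\ex[\mathrm{GREEDY}]\ge(1-1/e)\sum_{i,j}b_{ij}x_{ij}y_j$, as claimed. I expect the second move to be the only real obstacle: one cannot literally couple greedy against the hypothetical algorithm, since their remaining-budget vectors drift apart, so ``greedy dominates'' must be recovered by re-running the potential-drop computation inside greedy's own filtration; the single new ingredient relative to Lemma~\ref{lem:adwordsPrelim} is the pointwise inequality $\max_i\min\{b_{ij},R_i(t)\}\ge\sum_i x_{ij}\min\{b_{ij},R_i(t)\}$, valid because $\sum_i x_{ij}\le1$. (Indeed this second move is self-contained and subsumes the first; the hypothetical algorithm and Lemmas~\ref{lem:adwordsPrelim}--\ref{lem:hypo_generalized_adwords} mainly serve to organize the exposition.)
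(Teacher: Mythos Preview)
Your proposal is correct and follows the same approach as the paper: establish the recursion $\ex[f(t)\mid\mathcal F_{t-1}]\le f(t-1)(1-1/m)$ for greedy's own potential, then unroll. The paper's proof is terser on the key comparison step---it just says ``by definition, conditioned on the amount of money spent in first $t-1$ steps, the greedy algorithm earns the maximum revenue at time step $t$''---whereas you make explicit both the pointwise inequality $\max_i\min\{b_{ij},R_i(t)\}\ge\sum_i x_{ij}\min\{b_{ij},R_i(t)\}$ and the re-running of the Lemma~\ref{lem:adwordsPrelim} case split with greedy's remaining budgets $R_i(t)$; your observation that this renders the first move redundant is also correct.
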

\begin{proof}
Lemma~\ref{lem:hypo_generalized_adwords} proves that for the hypothetical algorithm, 
the value of the difference $f^h(t-1) - \ex[f^h(t)|f^h(t-1)]$, which is the 
expected amount spent at time $t$ by all the advertisers together, conditioned on $f^h(t-1)$,
is at least $\frac{f^h(t-1)}{m}$. But by definition, conditioned on the 
amount of money spent in first $t-1$ steps, the greedy 
algorithm earns the maximum revenue at time step $t$ . Thus, for the
greedy algorithm too, the statement of the lemma~\ref{lem:hypo_generalized_adwords} must hold, namely, 
$\ex[f(t)|f(t-1)] \leq f(t-1)(1-1/m)$.
This means that $\ex[f(m)] \leq f(0)(1-1/m)^m \leq f(0)(1/e)$. Thus the expected 
revenue earned is 
\begin{eqnarray*}
\ex[\sum_{r=1}^{m}w(r)] &=& f(0) -\ex[f(m)]\\
&\geq& f(0)\left(1-1/e\right) \\
&=& \left(1-1/e\right)\sum_{i,j}b_{ij}x_{ij}y_j
\end{eqnarray*}
and this proves the lemma. 
\end{proof}

Lemma~\ref{lem:greedy_generalized_adwords} proves Theorem~\ref{online:thm:adwords_greedy}.

\section{Fast Approximation Algorithm for Large Mixed Packing \& Covering
Integer Programs}\label{online:sec:offline}
%\label{sec:apps}
%\subsection{Mixed Covering-Packing }\label{subsec:packing_covering} 
In this
section, we consider the mixed packing-covering problem stated in
Section~\ref{online:subsec:intro_covering_packing}.  and prove
Theorem~\ref{online:thm:coveringPacking}.  We restate the integer program for 
the mixed
covering-packing problem here.
\begin{align}\label{online:lp:offline_ra}
%&&\text{minimize }\lambda \text{ s.t.}\\
& \forall ~i, \sum_{j,k} \aijk\xjk \leq {c_i}\nonumber \\
& \forall ~i, \sum_{j,k} \wijk\xjk \geq {d_i}\nonumber \\
& \forall ~j, \sum_k \xjk \leq1\nonumber\\ 
& \forall ~j,k, \xjk \in \{0,1\}. 
\end{align}
The goal is to check if there is a feasible solution to this IP. We solve a gap
version of this problem. Distinguish between the two cases with a high
probability, say $1-\delta$: 
\begin{itemize}
\item{YES:} There is a feasible solution. 
\item{NO:} There is no feasible solution even with a slack, namely, even if all 
of the $c_i$'s are
multiplied by $1+3\epsilon(1+\epsilon)$ and all of the $d_i$'s are multiplied by
$1-3\epsilon(1+\epsilon)$. 
\end{itemize}
We use $1+3\epsilon(1+\epsilon)$ and $1-3\epsilon(1+\epsilon)$ for slack 
instead of just 
$1+\epsilon$ and
$1-\epsilon$ purely to reduce notational clutter in what follows (mainly for 
the $\no$ case). 

Like in the online problem, we refer to the quantities indexed by $j$ as
requests, $\aijk$ as resource $i$ consumption, and $\wijk$ as resource $i$ profit, 
and the quantities indexed by $k$ as
options. There are a total of $m$ requests, $n$ resources, and $K$ options, and 
the
``zero'' option denoted by $\bot$.  Recall
that the parameter $\gamma$ for this problem is defined by $\gamma =
\max\left(\left\{\frac{\aijk}{c_i}\right\}_{i,j,k} \cup
\left\{\frac{\wijk}{d_i}\right\}_{i,j,k}\right)$. Our algorithm needs the value
of $m$, $n$ and $\gamma$ (an upper bound on the value of $\gamma$ also
suffices). 

\paragraph*{High-level overview.} We solve this offline problem in an online manner
via random sampling. We sample $T = \Theta(\frac{\gamma m \log (n/\delta)
}{\epsilon^2})$ requests $j$ from the set of possible requests uniformly
at random with replacement, and then design an algorithm that allocates resources online
for these requests. At the end of serving $T$ requests we check if the obtained 
solution proportionally satisfies the constraints of 
IP~\eqref{online:lp:offline_ra}. If yes, we declare YES
as the answer and declare NO otherwise. At the core of the solution is the online sampling 
algorithm we use, which is identical to the techniques used to develop the
online algorithm in Sections~\ref{online:sec:noDoublingGWL}
and~\ref{online:sec:doublingGWL}. We describe our algorithm in
Algorithm~\ref{online:alg:offline_ra}
\begin{algorithm}[!h]
\caption{: Online sampling algorithm for offline mixed covering-packing problems}
\label{online:alg:offline_ra}
\textbf{Input:} The mixed packing and covering IP~\eqref{online:lp:offline_ra},
failure probability $\delta > 0$, and an error parameter $\epsilon > 0$. \\
\textbf{Output:} Distinguish between the cases `YES' where there is a feasible
solution to IP~\eqref{online:lp:offline_ra}, and `NO' where there is no feasible
solution to IP~\eqref{online:lp:offline_ra} even if all the $c_i$'s are
multiplied by $1+3\epsilon(1+\epsilon)$ and all of the $d_i$'s are multiplied by
$1-3\epsilon(1+\epsilon)$. \\

\begin{algorithmic}[1]
\STATE Set $T = \Theta(\frac{\gamma m \log (n/\delta) }{\epsilon^2})$ 
\STATE Initialize $\phixi[0] = \frac{1}{c_i}\bigg[\frac{
\left(1+\frac{\epsilon}{m\gamma}\right)^{T-1}}{(1+\epsilon)^{(1+\epsilon)\frac{T}{m\gamma}}}\bigg]$, and, 
$\phiyi[0] = \frac{1}{d_i}\bigg[\frac{
\left(1-\frac{\epsilon}{m\gamma}\right)^{T-1}}{(1-\epsilon)^{(1-\epsilon)\frac{T}{m\gamma}}}\bigg]$
\FOR {$s$ = $1$ to $T$}
\STATE Sample a request $j$ uniformly at random from the total pool of $m$
requests
\STATE If the incoming request is $j$, use the following option $k^*$: 
\begin{equation*}
k^* = \arg \min_{k\in \optn\cup\{\bot\}}\left\{\sum_i\aijk\cdot\phixi[s-1] - 
\sum_i \wijk\cdot\phiyi[s-1]\right\}.
\end{equation*}
\STATE $\xita[s] = a_{ijk^*}$, $\yita[s]=w_{ijk^*}$
\STATE Update $\phixi = \phixi[s-1]\cdot\left[\frac{(1+\epsilon)^{\frac{\xita[s]}{\gamma c_i}}}{1+\frac{\epsilon}{m\gamma}}\right]$, and, 
$\phiyi = \phiyi[s-1]\cdot\left[\frac{(1-\epsilon)^{\frac{\yita[s]}{\gamma d_i}}}{1-\frac{\epsilon}{m\gamma}}\right]$
\ENDFOR
\IF {$\forall i \sum_{t=1}^{T} \xita < \frac{T c_i}{m}(1+\epsilon)$, and, $\sum_{t=1}^{T} \yita > \frac{Td_i}{m}(1-\epsilon)$}
\STATE Declare YES
\ELSE
\STATE Declare NO
\ENDIF
\end{algorithmic}
\end{algorithm}

The main theorem of this section is Theorem~\ref{online:thm:coveringPacking},
which we restate here:
\begin{oneshot}{Theorem~\ref{online:thm:coveringPacking}}
For any $\epsilon > 0$, Algorithm~\ref{online:alg:offline_ra} solves the gap version of the mixed covering-packing problem  
with $\Theta(\frac{\gamma m \log (n/\delta) }{\epsilon^2})$ oracle calls. 
\end{oneshot}

\paragraph*{Detailed Description and Proof} The proof is in two parts. The first part
proves that our algorithm indeed answers YES when the actual answer is YES with a probability
at least $1-\delta$. The second part is the identical statement for the NO case. 

\paragraph*{The YES case} We begin with the case where the true answer is YES.
Let $\xjko$ denote some feasible solution to the LP relaxation of IP
~\eqref{online:lp:offline_ra}. In a
similar spirit to
Sections~\ref{online:sec:HOGWL},~\ref{online:sec:noDoublingGWL}
and~\ref{online:sec:doublingGWL}, we define the algorithm $\ho$ as follows. It
samples a total of $T=\Theta(\frac{\gamma m \log (n/\delta) }{\epsilon^2})$
requests uniformly at random, with replacement, from the total pool of $m$
requests. When request $j$ is sampled, $\ho$ serves $j$ using option $k$ with
probability $\xjko$. Thus, if we denote by $\xito$ the consumption of resource
$i$ in step $t$ of $\ho$, then we have $\ex[\xito] = \sum_{j=1}^{m}
\frac{1}{m}\sum_k \aijk {\xjko} \leq \frac{c_i}{m}$. This inequality follows
from $\xjko$ being a feasible solution to LP relaxation 
of~\eqref{online:lp:offline_ra}. Similarly let
$\yito$ denote the resource $i$ profit in step $t$ of $\ho$. We have $\ex[\yito]
\geq \frac{d_i}{m}$. We now write the probability that our condition for YES is
violated for some algorithm $A$. 

{\allowdisplaybreaks\begin{align}\label{eqn:offline_Fpx}
\Pr\bigg[\sum_{t=1}^{T}\xita \geq \frac{T c_i}{m}(1+\epsilon)\bigg]
&= \Pr\bigg[\frac{\sum_{t=1}^{T}\xita}{\gamma c_i} \geq \frac{T}{m\gamma}(1+\epsilon)\bigg]\nonumber\\
& = \Pr\bigg[(1+\epsilon)^{\frac{\sum_{t=1}^{T}\xita}{\gamma c_i}} \geq (1+\epsilon)^{(1+\epsilon)\frac{T}{m\gamma}}\bigg]\nonumber\\
&\leq \ex\bigg[(1+\epsilon)^{\frac{\sum_{t=1}^{T}\xita}{\gamma c_i}}\bigg]/(1+\epsilon)^{(1+\epsilon)\frac{T}{m\gamma}}\nonumber\\
&= \frac{\ex\bigg[\prod_{t=1}^{T}(1+\epsilon)^{\frac{\xita}{\gamma c_i}}\bigg]}{(1+\epsilon)^{(1+\epsilon)\frac{T}{m\gamma}}}
\end{align}}

{\allowdisplaybreaks\begin{align}\label{eqn:offline_Fpy}
\Pr\bigg[\sum_{t=1}^{T}\yita \leq \frac{T d_i}{m}(1-\epsilon)\bigg]
&= \Pr\bigg[\frac{\sum_{t=1}^{T}\yita}{\gamma d_i} \geq \frac{T}{m\gamma}(1-\epsilon)\bigg]\nonumber\\
& = \Pr\bigg[(1-\epsilon)^{\frac{\sum_{t=1}^{T}\yita}{\gamma d_i}} \geq (1-\epsilon)^{(1-\epsilon)\frac{T}{m\gamma}}\bigg]\nonumber\\
&\leq \ex\bigg[(1-\epsilon)^{\frac{\sum_{t=1}^{T}\yita}{\gamma d_i}}\bigg]/(1-\epsilon)^{(1-\epsilon)\frac{T}{m\gamma}}\nonumber\\
&= \frac{\ex\bigg[\prod_{t=1}^{T}(1-\epsilon)^{\frac{\yita}{\gamma d_i}}\bigg]}{(1-\epsilon)^{(1-\epsilon)\frac{T}{m\gamma}}}
\end{align}} 

If our algorithm $A$ was $\ho$ $\bigg($and therefore we can use $\ex[\xito] \leq \frac{c_i}{m}$ and $\ex[\yito] \geq \frac{d_i}{m}\bigg)$, 
the total failure probability in the YES case, which is the sum of~\eqref{eqn:offline_Fpx} and~\eqref{eqn:offline_Fpy} for all the $i$'s would have been
at most $\delta$, if $T = \Theta(\frac{\gamma m \log(n/\delta)}{\epsilon^2})$
for an appropriate constant inside $\Theta$. 
The goal is to design an algorithm $A$ that, unlike $\ho$, does not first solve 
LP relaxation of IP~\eqref{online:lp:offline_ra} and then use $\xjko$'s to 
allocate
resources, but allocates online and also obtains the same $\delta$ failure probability, just as we did
in Sections~\ref{online:sec:noDoublingGWL} and~\ref{online:sec:doublingGWL}. That is we want to show that the sum of~\eqref{eqn:offline_Fpx} and~\eqref{eqn:offline_Fpy} over all $i$'s is at most $\delta$: 
%By the derivation on Section~\ref{sec:HOGWL} we know that the failure probability
%of an algorithm that consumes $\xit$ amount of resource $i$ and gives
%a $\yit$ amount of resource $i$ profit at time instance $t$ is upper bounded by
$$\frac{\ex\bigg[\prod_{t=1}^{T}(1+\epsilon)^{\frac{\xita}{\gamma c_i}}\bigg]}{(1+\epsilon)^{(1+\epsilon)\frac{T}{m\gamma}}}
+ \frac{\ex\bigg[\prod_{t=1}^{T}(1-\epsilon)^{\frac{\yita}{\gamma d_i}}\bigg]}{(1-\epsilon)^{(1-\epsilon)\frac{T}{m\gamma}}}
\leq \delta.$$

For the algorithm $A^s\ho^{T-s}$, the above quantity can be rewritten as 
$$\sum_i\frac{\ex\bigg[\left(1+\epsilon\right)^{\frac{\sxia}{\gamma c_i}}
\prod_{t=s+1}^{T}\left(1+\epsilon\right)^{\frac{\xito}{\gamma c_i}}\bigg]}{(1+\epsilon)^{(1+\epsilon)\frac{T}{m\gamma}}}+
\sum_i \frac{\ex\bigg[\left(1-\epsilon\right)^{\frac{\syia}{\gamma d_i}}
\prod_{t=s+1}^{T}\left(1-\epsilon\right)^{\frac{\yito}{\gamma d_i}}\bigg]}{(1-\epsilon)^{(1-\epsilon)\frac{T}{m\gamma}}},$$
which, by using $(1+\epsilon)^x \leq 1+\epsilon x$ for $0\leq x\leq 1$, is in turn upper bounded by
$$\sum_i\frac{\ex\bigg[\left(1+\epsilon\right)^{\frac{\sxia}{\gamma c_i}}
\prod_{t=s+1}^{T}\left(1+\epsilon{\frac{\xito}{\gamma c_i}}\right)\bigg]}{(1+\epsilon)^{(1+\epsilon)\frac{T}{m\gamma}}}+
\sum_i \frac{\ex\bigg[\left(1-\epsilon\right)^{\frac{\syia}{\gamma d_i}}
\prod_{t=s+1}^{T}\left(1-\epsilon{\frac{\yito}{\gamma d_i}}\right)\bigg]}{(1-\epsilon)^{(1-\epsilon)\frac{T}{m\gamma}}}.$$

Since for all $t$, the random variables $\xito$, $\xita$, $\yito$ and $\yita$ are all independent, and 
$\ex[\xito]\leq \frac{c_i}{m}$, $\ex[\yito] \geq \frac{d_i}{m}$, the above is in turn upper bounded by
\begin{equation}\label{eqn:offline_FAsPT-s}
\sum_i\frac{\ex\bigg[\left(1+\epsilon\right)^{\frac{\sxia}{\gamma c_i}}
\left(1+\frac{\epsilon}{m\gamma}\right)^{T-s}\bigg]}{(1+\epsilon)^{(1+\epsilon)\frac{T}{m\gamma}}}+
\sum_i \frac{\ex\bigg[\left(1-\epsilon\right)^{\frac{\syia}{\gamma d_i}}
\left(1-\frac{\epsilon}{m\gamma}\right)^{T-s}\bigg]}{(1-\epsilon)^{(1-\epsilon)\frac{T}{m\gamma}}}.
\end{equation}

Let $\uf[A^s\ho^{T-s}]$ denote the quantity in ~\eqref{eqn:offline_FAsPT-s}, which is
an upper bound on failure probability of the hybrid algorithm $A^s\ho^{T-s}$. 
We just said that $\uf[\ho^{T}] \leq \delta$. We now prove
that for all $s \in \{0,1,\dots,T-1\}$, $\uf[A^{s+1}\ho^{T-s-1}] \leq
\uf[A^{s}\ho^{T-s}]$, thus proving that $\uf[A^{T}]\leq \delta$, i.e., running
the algorithm $A$ for all the $T$ steps of stage $\stage$ results in a failure with probability at
most $\delta$. 
%To design such an $A$ we closely follow the derivation of
%Chernoff bounds, which is what established that $\ufr[\ho^{\tr}]\leq \delta$ in
%Theorem~\ref{thm:HOGWL}.  However the design process will reveal that unlike
%algorithm $\ho$ which needs the entire distribution, just the knowledge of
%$\opte$ will do for bounding the failure probability by $\delta$. 

Assuming that for all $s < p$, the algorithm $A$ has been defined for the first
$s+1$ steps in such a way that $\uf[A^{s+1}\ho^{T-s-1}] \leq
\uf[A^s\ho^{T-s}]$,  we now define $A$ for the $p+1$-th step of stage $\stage$ in a way that
will ensure that $\uf[A^{p+1}\ho^{T-p-1}] \leq \uf[A^p\ho^{T-p}]$.  We have
{\allowdisplaybreaks\begin{align}
\uf[A^{p+1}\ho^{m-p-1}] &= \sum_i \frac{\ex\bigg[(1+\epsilon)^{\frac{\sxia[p+1]}{\gamma c_i}}
\left(1+\frac{\epsilon}{m\gamma}\right)^{T-p-1}\bigg]}{(1+\epsilon)^{(1+\epsilon)\frac{T}{m\gamma}}}\qquad+ \nonumber\\
&\qquad\qquad\qquad\qquad\sum_i \frac{\ex\bigg[(1-\epsilon)^{\frac{\syia[p+1]}{\gamma d_i}}
\left(1-\frac{\epsilon}{m\gamma}\right)^{T-p-1}\bigg]}{(1-\epsilon)^{(1-\epsilon)\frac{T}{m\gamma}}}\nonumber\\
&\leq \sum_i \frac{\ex\bigg[(1+\epsilon)^{\frac{\sxia[p]}{\gamma c_i}}\left(1+\epsilon{\frac{X_{i,p+1}^A}{\gamma c_i}}\right)
\left(1+\frac{\epsilon}{m\gamma}\right)^{T-p-1}\bigg]}{(1+\epsilon)^{(1+\epsilon)\frac{T}{m\gamma}}}\qquad+\nonumber\\
&\qquad\qquad\qquad\qquad\sum_i \frac{\ex\bigg[(1-\epsilon)^{\frac{\syia[p]}{\gamma d_i}}\left(1-\epsilon{\frac{Y_{i,p+1}^A}{\gamma d_i}}\right)
\left(1-\frac{\epsilon}{m\gamma}\right)^{T-p-1}\bigg]}{(1-\epsilon)^{(1-\epsilon)\frac{T}{m\gamma}}}\label{eqn:offline_intermediateUF}
\end{align}}
Define
\begin{eqnarray*}
\phixi = \frac{1}{c_i}\bigg[\frac{(1+\epsilon)^{\frac{\sxia}{\gamma c_i}}
\left(1+\frac{\epsilon}{m\gamma}\right)^{T-s-1}}{(1+\epsilon)^{(1+\epsilon)\frac{T}{m\gamma}}}\bigg]\label{eqn:offline_phixi};
\qquad\phiyi = \frac{1}{d_i}\bigg[\frac{(1-\epsilon)^{\frac{\syia}{\gamma d_i}}
\left(1-\frac{\epsilon}{m\gamma}\right)^{T-s-1}}{(1-\epsilon)^{(1-\epsilon)\frac{T}{m\gamma}}}\bigg]\label{eqn:offline_phixir}
\end{eqnarray*}
Define the step $p+1$ of algorithm $A$ as picking the following option $k$ for request $j$:
\begin{equation*}%\label{eqn:algNoDoublingGWL}
k^* = \arg \min_{k\in K\cup\{\bot\}}\left\{\sum_i\aijk\cdot\phixi[p] - \sum_i \wijk\cdot\phiyi[p]\right\}.
\end{equation*}
By the above definition of step $p+1$ of algorithm $A$, it follows that for any two
algorithms with the first $p$ steps being identical, and the last $T-p-1$ steps
following the Hypothetical-Oblivious algorithm $\ho$, algorithm
$A$'s $p+1$-th step is the one that minimizes
expression~\eqref{eqn:offline_intermediateUF}. In particular it follows that
expression~\eqref{eqn:offline_intermediateUF} is upper bounded by the same expression
where the $p+1$-the step is according to $\xito[p+1]$ and $\yito[p+1]$,
i.e., we replace $\xita[p+1]$ by $\xito[p+1]$ and $\yita[p+1]$ by
$\yito[p+1]$.  Therefore we have
{\allowdisplaybreaks\begin{align*}
\uf[A^{p+1}\ho^{T-p-1}] &\leq\sum_i \frac{\ex\bigg[(1+\epsilon)^{\frac{\sxia[p]}{\gamma c_i}}\left(1+\epsilon{\frac{\xito[p+1]}{\gamma c_i}}\right)
\left(1+\frac{\epsilon}{m\gamma}\right)^{T-p-1}\bigg]}{(1+\epsilon)^{(1+\epsilon)\frac{T}{m\gamma}}}\qquad+\nonumber\\
&\qquad\qquad\qquad\qquad\sum_i \frac{\ex\bigg[(1-\epsilon)^{\frac{\syia[p]}{\gamma d_i}}\left(1-\epsilon{\frac{\yito[T+p+1]}{\gamma d_i}}\right)
\left(1-\frac{\epsilon}{m\gamma}\right)^{T-p-1}\bigg]}{(1-\epsilon)^{(1-\epsilon)\frac{T}{m\gamma}}}\nonumber\\
&\leq\sum_i \frac{\ex\bigg[(1+\epsilon)^{\frac{\sxia[p]}{\gamma c_i}}\left(1+\frac{\epsilon}{m\gamma}\right)
\left(1+\frac{\epsilon}{m\gamma}\right)^{T-p-1}\bigg]}{(1+\epsilon)^{(1+\epsilon)\frac{T}{m\gamma}}}\qquad+\nonumber\\
&\qquad\qquad\qquad\qquad\sum_i \frac{\ex\bigg[(1-\epsilon)^{\frac{\syia[p]}{\gamma d_i}}\left(1-\frac{\epsilon}{m\gamma}\right)
\left(1-\frac{\epsilon}{m\gamma}\right)^{T-p-1}\bigg]}{(1-\epsilon)^{(1-\epsilon)\frac{T}{m\gamma}}}\nonumber\\
&=\uf[A^p\ho^{T-p}]\nonumber
\end{align*}}

\paragraph*{The NO case} We now proceed to prove that when the real answer is 
$\no$, our algorithm says $\no$ with a probability at least $1-\delta$. To 
prove this result (formally stated in 
Lemma~\ref{lem:no_instance}), we use as a tool the fact that when the integer 
program in~\eqref{online:lp:offline_ra} is in the $\no$ case where even a slack 
of 
$3\epsilon(1+\epsilon)$ will not make it feasible, then even the LP relaxation 
of~\eqref{online:lp:offline_ra} will be infeasible with a slack of $2\epsilon$. 
We prove this statement now by proving its contrapositive in 
Lemma~\ref{lem:no_case_ip_lp}.
\begin{lemma}\label{lem:no_case_ip_lp}
If the LP relaxation of~\eqref{online:lp:offline_ra} is feasible with a slack 
of $s$, then the integer program in~\eqref{online:lp:offline_ra} is 
feasible with a slack of $s(1+\epsilon) + \epsilon$. 
\end{lemma}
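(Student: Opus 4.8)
The plan is to prove the statement directly by \emph{randomized rounding}. Suppose $x_{jk}$ is a feasible fractional solution to the LP relaxation of~\eqref{online:lp:offline_ra} with slack $s$, i.e. $\sum_{j,k}\aijk x_{jk}\le c_i(1+s)$ and $\sum_{j,k}\wijk x_{jk}\ge d_i(1-s)$ for all $i$, together with $\sum_k x_{jk}\le 1$ and $x_{jk}\ge 0$. First I would round each request $j$ independently: pick exactly one option $k$ and set $X_{jk}=1$ with probability $x_{jk}$, and otherwise (with probability $1-\sum_k x_{jk}$) pick the null option $\bot$, setting all $X_{jk}=0$. Since $(x_{j1},\dots,x_{jK})$ augmented by $1-\sum_k x_{jk}$ is a probability vector, this is well-defined, and by construction the rounded $X$ satisfies $X_{jk}\in\{0,1\}$ and $\sum_k X_{jk}\le 1$ with probability one. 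Hence the only constraints that could fail are the $n$ packing and $n$ covering constraints, and $\ex[X_{jk}]=x_{jk}$.

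Next I would apply the Chernoff bounds recalled earlier in the paper to each of the $2n$ load/profit sums. Fix a resource $i$ and consider the packing constraint. Put $Z_{ij}=\sum_k(\aijk/c_i)X_{jk}$; the $Z_{ij}$ are independent, each lies in $[0,\gamma]$ (exactly one option is chosen and $\aijk/c_i\le\gamma$), and $\mu:=\ex[\sum_j Z_{ij}]=\sum_{j,k}(\aijk/c_i)x_{jk}\le 1+s$. I want $\Pr[\sum_j Z_{ij} > (1+s)(1+\epsilon)]$ to be small. Writing $(1+s)(1+\epsilon)=\mu(1+\theta)$ with $\theta\ge\epsilon$ (this uses $\mu\le 1+s$), the multiplicative Chernoff bound gives failure probability at most $\exp(-\theta^2\mu/(4\gamma))$, and one checks $\theta^2\mu\ge\epsilon^2$ since $\theta\mu=(1+s)(1+\epsilon)-\mu\ge(1+s)\epsilon$ and $\theta\ge\epsilon$; for the regime $\theta>2e-1$ (which occurs only when $\mu$ is tiny) I would instead use the heavy-tail form $2^{-(1+\theta)\mu/\gamma}$ and note $(1+\theta)\mu=(1+s)(1+\epsilon)\ge1$. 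Under the standing assumption $\gamma=O(\epsilon^2/\log(n/\delta))$ this bound is at most $\tfrac{\delta}{2n}$. The covering constraints are symmetric: with $W_{ij}=\sum_k(\wijk/d_i)X_{jk}\in[0,\gamma]$ and $\mu':=\ex[\sum_j W_{ij}]\ge 1-s$, the lower-tail Chernoff bound controls $\Pr[\sum_j W_{ij}<(1-s)(1-\epsilon)]$ by the identical computation (using $1-s\ge\tfrac12$, valid for the small $s$ this lemma is applied to).

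Finally I would pin down the arithmetic that makes the slack come out as claimed: $(1+s)(1+\epsilon)=1+\big(s(1+\epsilon)+\epsilon\big)$, so the good packing event is exactly $\sum_{j,k}\aijk X_{jk}\le c_i\big(1+s(1+\epsilon)+\epsilon\big)$; and $(1-s)(1-\epsilon)=1-\big(s(1-\epsilon)+\epsilon\big)\ge 1-\big(s(1+\epsilon)+\epsilon\big)$, so the good covering event implies $\sum_{j,k}\wijk X_{jk}\ge d_i\big(1-s(1+\epsilon)-\epsilon\big)$. A union bound over the $2n$ bad events shows that with probability at least $1-\delta>0$ none of them occurs, so there exists a realization of $X$ — an integral solution with $\sum_kX_{jk}\le1$ — that certifies feasibility of the IP~\eqref{online:lp:offline_ra} with slack $s(1+\epsilon)+\epsilon$, proving the lemma (and in particular, taking $s=2\epsilon$, that LP-feasibility with slack $2\epsilon$ implies IP-feasibility with slack $2\epsilon(1+\epsilon)+\epsilon\le 3\epsilon(1+\epsilon)$). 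The only place requiring care is the Chernoff bookkeeping when the conditional mean $\mu$ is far below $1+s$, forcing a large multiplicative deviation; that is exactly where the $2^{-(1+\theta)\mu/\gamma}$ form is needed, and since $(1+\theta)\mu$ equals the fixed threshold $(1\pm s)(1\pm\epsilon)=\Theta(1)$ the bound remains $\le\tfrac{\delta}{2n}$ under the $\gamma$ assumption. The rest is routine.
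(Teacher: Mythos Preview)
Your proposal is correct and follows essentially the same approach as the paper: independently round each $x_{jk}$ to pick one option per request, apply Chernoff bounds to each of the $2n$ packing/covering sums using $\gamma = O(\epsilon^2/\log(n/\epsilon))$, and take a union bound to conclude that some rounded solution achieves slack $(1+s)(1+\epsilon)-1 = s(1+\epsilon)+\epsilon$. Your write-up is in fact more careful than the paper's in two respects: you explicitly handle the regime where the mean $\mu$ is much smaller than $1+s$ (invoking the large-deviation form $2^{-(1+\theta)\mu/\gamma}$), and you spell out why $\theta^2\mu\ge\epsilon^2$ and why $1-s\ge\tfrac12$ is needed on the covering side---points the paper's proof passes over.
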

\begin{proof}
To prove this, we write the LP relaxation of the integer program 
in~\eqref{online:lp:offline_ra} 
slightly differently below. 
\begin{table}[!h]
\begin{center}
%\tbl{}{
\begin{align}
\label{lp:no_case_ip_lp_dual}
\text{Primal and dual LPs corresponding to integer program 
in~\eqref{online:lp:offline_ra}}
\end{align}
\begin{tabular}{l|l}
\textbf{Primal LP corresponding to IP~\eqref{online:lp:offline_ra}} & 
\textbf{Dual 
LP corresponding to IP~\eqref{online:lp:offline_ra}}\\\\
Minimize $\lambda\qquad$ s.t.  & Maximize $\sum_i (\rho_i - \alpha_i)
- \sum_{j} \beta_j\qquad$ s.t.\\\\
$\forall ~i,  \lambda - \sum_{j,k} \frac{\aijk\xjk}{c_i} \geq -
1$ & $\forall ~{j,k},\ \beta_j \geq \sum_{i} \left(
\rho_i\frac{\wijk}{d_i} - \alpha_i\frac{\aijk}{c_i}\right)$\\\\
$\forall ~i,  \lambda + \sum_{j,k} \frac{\wijk\xjk}{d_i} \geq  
1$ & $\sum_i (\alpha_i + \rho_i) \leq 1$\\\\
$\forall ~j, \sum_k \xjk \leq1$ &  $\forall\ i, \alpha_i, \dmi \geq 0$\\\\
$\forall ~j,k, \xjk \geq 0$ & $\forall\ j, \beta_j \geq 0$\\\\
$\lambda \geq 0$ & $\ $
\end{tabular}
%}
\end{center}
\renewcommand{\tablename}{LP}
\end{table}

The optimal value $\lambda^*$ of the primal LP in~\eqref{lp:no_case_ip_lp_dual} 
represents 
the 
slack in the YES/NO problem. I.e., if $\lambda^* = 0$, then we 
have zero slack and hence are in 
the YES case. Else, we are in the NO case with a slack equal to $\lambda^*$. 
Given this, all we have to show is that when the LP 
in~\eqref{lp:no_case_ip_lp_dual} has an optimal value of $\lambda^*$, then the 
corresponding integer program's optimal solution is at most 
$\lambda^*(1+\epsilon) + \epsilon$. To see this is true, let $\xjk^*$ 
denote the optimal solution to primal LP~\eqref{lp:no_case_ip_lp_dual}. 
Consider the 
integral solution that does a randomized rounding of the $\xjk^*$'s and 
allocates according to these rounded integers, and let $X_{jk}$ be the 
corresponding $\{0,1\}$ random variable. Let random variable $X_{ij} = \sum_k 
\frac{\aijk 
X_{jk}}{c_i}$. By the definition of LP~\eqref{lp:no_case_ip_lp_dual}, we have 
$\ex[\sum_j X_{ij}] \leq 1+\lambda^*$. Noting that each of the $X_{ij}$'s is at 
most $\gamma$, by 
Chernoff bounds it follows that $\prob[\sum_j X_{ij} \geq 
(1+\lambda^*)(1+\epsilon)] \leq e^{-\frac{\epsilon^2}{4\gamma}}$, which given 
that $\gamma = O(\frac{\epsilon^2}{\log(n/\epsilon)})$, is at most 
$\frac{\epsilon}{2n}$ (the probability 
derivation is just like the derivation 
in Section~\ref{online:sec:HOGWL}). Likewise, if we define by $Y_{ij}$ the 
random variable $\sum_k \frac{\wijk 
X_{jk}}{d_i}$, then by the definition of LP~\eqref{lp:no_case_ip_lp_dual}, we 
have 
$\ex[\sum_j Y_{ij}] \geq 1-\lambda^*$. By Chernoff bounds, we get an identical 
argument, we get $\prob[\sum_j Y_{ij} \leq 
(1-\lambda^*)(1-\epsilon)] \leq \frac{\epsilon}{2n}$. By doing a union bound 
over the $2n$ Chernoff bounds, we have that the randomized rounding integer 
solution is feasible with an optimal value of at most $\lambda^* + \epsilon + 
\lambda^*\epsilon$ with probability at least $1-\epsilon$. This means that 
there exists an integer solution of value $\lambda^* + \epsilon + 
\lambda^*\epsilon$, and this proves the lemma. 
\end{proof}
\begin{corollary}
\label{cor:mixed_pc}
For a $\no$ instance with a slack of $3\epsilon(1+\epsilon)$, the LP relaxation 
of the instance is still infeasible with a slack of $2\epsilon$. In particular, 
this implies that the optimal value of the primal $\lambda^*$ 
in~\eqref{lp:no_case_ip_lp_dual} is at least $2\epsilon$, and likewise the 
optimal dual value in~\eqref{lp:no_case_ip_lp_dual} is $\sum_i (\rho_i^* - 
\alpha_i^*)
- \sum_{j} \beta_j^* \geq 2\epsilon$.
\end{corollary}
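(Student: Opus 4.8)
The plan is to derive the corollary directly from Lemma~\ref{lem:no_case_ip_lp}, with only a one-line arithmetic check in addition. Throughout, a ``slack of $t$'' means replacing every $c_i$ by $c_i(1+t)$ and every $d_i$ by $d_i(1-t)$ in~\eqref{online:lp:offline_ra} (or its LP relaxation). The key structural observation is that this relaxation is monotone in $t$: enlarging the packing budgets and shrinking the covering demands only weakens the constraints, so if a system is feasible with slack $t$ it is feasible with every slack $t' \ge t$, and contrapositively infeasibility with slack $t$ propagates to every slack $t' \le t$.

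First I would argue the infeasibility claim by contradiction: suppose the LP relaxation of~\eqref{online:lp:offline_ra} were feasible with slack $2\epsilon$. Lemma~\ref{lem:no_case_ip_lp} applied with $s = 2\epsilon$ then gives that the integer program~\eqref{online:lp:offline_ra} is feasible with slack $2\epsilon(1+\epsilon) + \epsilon = 3\epsilon + 2\epsilon^2$. Since $3\epsilon + 2\epsilon^2 \le 3\epsilon + 3\epsilon^2 = 3\epsilon(1+\epsilon)$, the monotonicity above promotes this to feasibility of the IP with slack $3\epsilon(1+\epsilon)$, contradicting the assumption that the instance is in the $\no$ case, i.e., that the IP is infeasible even with slack $3\epsilon(1+\epsilon)$. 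Hence the LP relaxation of~\eqref{online:lp:offline_ra} is infeasible with slack $2\epsilon$.

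It remains to read this off from LP~\eqref{lp:no_case_ip_lp_dual}. By construction, a pair $(\lambda,\xjk)$ is feasible for the primal in~\eqref{lp:no_case_ip_lp_dual} exactly when $\xjk \ge 0$, $\sum_k \xjk \le 1$, and for every $i$ one has $\sum_{j,k}\aijk\xjk/c_i \le 1+\lambda$ and $\sum_{j,k}\wijk\xjk/d_i \ge 1-\lambda$; that is, $\lambda^*$ is precisely the smallest slack at which the LP relaxation becomes feasible. Since we have just shown the LP relaxation is infeasible at slack $2\epsilon$ (and hence, by monotonicity, at every slack $\le 2\epsilon$), we get $\lambda^* \ge 2\epsilon$. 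The primal is always feasible (take $\xjk=0$ and $\lambda=1$) and bounded below by $0$, so strong LP duality applies and the optimal dual value equals $\lambda^*$; therefore $\sum_i(\rho_i^* - \alpha_i^*) - \sum_j \beta_j^* = \lambda^* \ge 2\epsilon$, as claimed.

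I do not anticipate a genuine obstacle: the whole argument is a contraposition of the previous lemma plus the trivial bound $2\epsilon^2 \le 3\epsilon^2$ and an appeal to strong duality. The only place demanding a moment of care is keeping the direction of the slack-monotonicity straight --- that infeasibility with a larger slack is the stronger hypothesis and feasibility with a smaller slack is the stronger conclusion --- which is exactly what makes the constants $2\epsilon$ and $3\epsilon(1+\epsilon)$ line up.
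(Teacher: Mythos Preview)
Your proposal is correct and matches the paper's intent: the corollary is stated immediately after Lemma~\ref{lem:no_case_ip_lp} with no separate proof, precisely because it is the contrapositive of that lemma together with the arithmetic check $2\epsilon(1+\epsilon)+\epsilon = 3\epsilon+2\epsilon^2 \le 3\epsilon(1+\epsilon)$ that you spell out. Your additional remarks on monotonicity in the slack and on strong duality for the $\lambda^*$ and dual-value statements are more explicit than anything the paper writes, but they are exactly the routine justifications the reader is expected to supply.
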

\begin{lemma}\label{lem:no_instance}
For a $\no$ instance, if $T \geq \Theta(\frac{\gamma m\log(n/\delta)}{\epsilon^2})$, then 
 \[ \prob \left[ \max_i \frac{\sxia[T]}{c_i} < \frac{T}{m} (1 + \epsilon) 
\ \& \  \min_i \frac{\syia[T]}{d_i} > \frac{T}{m} (1 - \epsilon)\right] \leq \delta. \]
\end{lemma}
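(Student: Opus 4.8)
The plan is to use the dual certificate of infeasibility supplied by Corollary~\ref{cor:mixed_pc} to collapse the $2n$-dimensional event in the statement into a single one-dimensional large-deviation event, and then invoke the Chernoff bounds recalled in Section~\ref{online:sec:prelim}.

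First I would record what the $\no$ assumption buys through Corollary~\ref{cor:mixed_pc}: a nonnegative feasible solution $(\alpha_i^*,\rho_i^*,\beta_j^*)$ of the dual LP in~\eqref{lp:no_case_ip_lp_dual} with $\sum_i(\alpha_i^*+\rho_i^*)\le 1$ and dual objective $\sum_i(\rho_i^*-\alpha_i^*)-\sum_j\beta_j^*\ge 2\epsilon$. From this I would extract two facts used later: (i) dual feasibility gives, for every request $j$ and every option $k$ (including $\bot$, where $\aijk=\wijk=0$), the inequality $\beta_j^* + \sum_i\alpha_i^*\tfrac{\aijk}{c_i} \ge \sum_i\rho_i^*\tfrac{\wijk}{d_i}$; and (ii) each $\beta_j^*\le\gamma$, since an optimal $\beta_j^*$ equals $\max\{0,\max_k\sum_i(\rho_i^*\wijk/d_i-\alpha_i^*\aijk/c_i)\}\le \gamma\sum_i\rho_i^*\le\gamma$ (using $\wijk/d_i\le\gamma$ and $\sum_i\rho_i^*\le1$), which moreover forces $\sum_j\beta_j^*\le\sum_i(\rho_i^*-\alpha_i^*)\le1$.

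The deterministic heart of the argument comes next. Let $j_1,\dots,j_T$ be the sampled requests and $k_1,\dots,k_T$ the options chosen by Algorithm~\ref{online:alg:offline_ra}, so that $\sxia[T]=\sum_{t=1}^T a_{i,j_t,k_t}$ and $\syia[T]=\sum_{t=1}^T w_{i,j_t,k_t}$; note the argument uses nothing about how the $k_t$ are chosen. Summing inequality (i) over $t=1,\dots,T$ gives, for \emph{every} realization,
\[
\sum_{t=1}^T\beta_{j_t}^* + \sum_i\alpha_i^*\frac{\sxia[T]}{c_i} \;\ge\; \sum_i\rho_i^*\frac{\syia[T]}{d_i}.
\]
On the event $\mathcal E$ of the statement we have $\sxia[T]/c_i<\tfrac Tm(1+\epsilon)$ and $\syia[T]/d_i>\tfrac Tm(1-\epsilon)$ for all $i$; substituting these (valid since $\alpha_i^*,\rho_i^*\ge0$) and then using $\sum_i(\alpha_i^*+\rho_i^*)\le1$ together with $\sum_i(\rho_i^*-\alpha_i^*)\ge 2\epsilon+\sum_j\beta_j^*$, I obtain
\[
\sum_{t=1}^T\beta_{j_t}^* \;>\; \frac Tm\sum_j\beta_j^* + \frac{T\epsilon}{m} \;=\; \mu + \frac{T\epsilon}{m},
\qquad\text{where } \mu:=\ex\Bigl[\textstyle\sum_{t=1}^T\beta_{j_t}^*\Bigr]=\frac Tm\sum_j\beta_j^*.
\]
Therefore $\Pr[\mathcal E]$ is at most the probability that the i.i.d.\ sum $\sum_{t=1}^T\beta_{j_t}^*$ — each term is i.i.d.\ since it depends only on the uniformly sampled $j_t$, and lies in $[0,\gamma]$ by (ii) — exceeds its mean by $\tfrac{T\epsilon}{m}$. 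If $\mu=0$ this probability is $0$. Otherwise write the deviation as $\mu\epsilon'$ with $\epsilon'=\epsilon/\sum_j\beta_j^*\ge\epsilon$: for $\epsilon'\le 2e-1$ the Chernoff bound gives at most $\exp(-\epsilon'^2\mu/(4\gamma))=\exp(-T\epsilon^2/(4m\gamma\sum_j\beta_j^*))\le\exp(-T\epsilon^2/(4m\gamma))$ using $\sum_j\beta_j^*\le1$, and for $\epsilon'>2e-1$ it gives at most $2^{-(1+\epsilon')\mu/\gamma}\le 2^{-T\epsilon/(m\gamma)}$. With $T=\Theta(\gamma m\log(n/\delta)/\epsilon^2)$ and a sufficiently large hidden constant, both quantities are at most $\delta$, as required.

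The step I expect to demand the most care is the chain of substitutions producing $\sum_t\beta_{j_t}^*>\mu+\tfrac{T\epsilon}{m}$: the dual value has to be used in the sharp form $\sum_i(\rho_i^*-\alpha_i^*)\ge 2\epsilon+\sum_j\beta_j^*$ (not merely $\ge 2\epsilon$), so that the deviation to rule out is $\mu+\tfrac{T\epsilon}{m}$ rather than a constant; combined with $\sum_j\beta_j^*\le1$, this is exactly what keeps the required sample size at $\Theta(\gamma m\log(n/\delta)/\epsilon^2)$ instead of something quadratic in $\gamma m$. The remaining ingredients — independence of the $\beta_{j_t}^*$, their range, and the two Chernoff regimes — are routine.
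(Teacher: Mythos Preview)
Your proof is correct and is essentially the paper's argument: both take the optimal dual solution $(\alpha^*,\rho^*,\beta^*)$ of the full LP supplied by Corollary~\ref{cor:mixed_pc}, use $\beta_j^*\le\gamma$ and $\sum_j\beta_j^*\le1$, and then apply Chernoff bounds (in both deviation regimes) to $\sum_{j\in R}\beta_j^*$. The only cosmetic difference is that the paper packages the deterministic step as ``the sampled primal LP~\eqref{lp:sampled_dual} has optimal value at least $T\epsilon/m$'' via weak duality, whereas you apply the dual feasibility inequality directly to the algorithm's realized allocation and skip the sampled LP entirely --- the resulting one-dimensional deviation event is identical.
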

%The proof appears in the full version of the paper. 
%The proof is in appendix~\ref{sec:deferred_proofs}
\begin{proof}
Let $R$ denote the set of requests sampled. Consider the following LP.
\begin{table}[!h]
\begin{center}
%\tbl{}{
\begin{align}
\label{lp:sampled_dual}
\text{Sampled primal and dual LPs}
\end{align}
\begin{tabular}{l|l}
\textbf{Sampled primal LP} & \textbf{Sampled dual LP}\\\\
Minimize $\lambda\qquad$ s.t.  & Maximize $\frac{T}{m}\sum_i (\rho_i - \alpha_i)
- \sum_{j\in R} \beta_j\qquad$ s.t.\\\\
$\forall ~i,  \lambda - \sum_{j\in R,k} \frac{\aijk\xjk}{c_i} \geq -
\frac{T}{m}$ & $\forall ~{j \in R,k},\ \beta_j \geq \sum_{i} \left(
\rho_i\frac{\wijk}{d_i} - \alpha_i\frac{\aijk}{c_i}\right)$\\\\
$\forall ~i,  \lambda + \sum_{j\in R,k} \frac{\wijk\xjk}{d_i} \geq  \frac{T}{m}$ & $\sum_i (\alpha_i + \rho_i) \leq 1$\\\\
$\forall ~j\in R, \sum_k \xjk \leq1$ &  $\forall\ i, \alpha_i, \dmi \geq 0$\\\\
$\forall ~j,k, \xjk \geq 0$ & $\forall\ j\in R, \beta_j \geq 0$\\\\
$\lambda \geq 0$ & $\ $
\end{tabular}
%}
\end{center}
\renewcommand{\tablename}{LP}
\end{table}

%The question of whether or not our algorithm answers $\no$ is equivalent to whether or not the 
If the primal in LP~\eqref{lp:sampled_dual} has an optimal objective value at 
least $\frac{T\epsilon}{m}$, then by definition of our 
algorithm~\ref{online:alg:offline_ra}, we would have declared $\no$, i.e., if 
the sampled LP itself had a slack of $\epsilon$ (scaled by $\frac{T}{m}$), then 
no integral allocation based on those samples can obtain a smaller slack. We 
now 
show that
by picking $T = \Theta(\frac{\gamma m \ln(n/\delta)}{\epsilon^2})$, the above 
LP~\eqref{lp:sampled_dual} will have its optimal objective 
value at least $\frac{T\epsilon}{m}$, with a probability at least $1-\delta$. 
This makes our algorithm answer $\no$ with a probability at least $1-\delta$. 

Now, the primal of LP~\eqref{lp:sampled_dual} has an optimal value equal to that of the dual
%\begin{eqnarray}\label{lp:sampled_dual}
%\end{eqnarray}
which in turn is lower bounded by the value of dual
at any feasible solution. One such feasible solution is $\alpha^*, \beta^*, \rho^*$, which is the 
optimal solution to the full version of the dual in LP~\eqref{lp:sampled_dual}, 
namely the one written in LP~\eqref{lp:no_case_ip_lp_dual} where 
$R = [m]$, $T=m$. This is because the set of constraints in the full version of the dual is clearly a superset
of the constraints in the dual of LP~\eqref{lp:sampled_dual}.   
Thus, the optimal value of the primal of LP~\eqref{lp:sampled_dual} 
is lower bounded by value of dual
at $\alpha^*, \beta^*, \rho^*$, which is 
\begin{eqnarray}\label{eqn:no_case}
&=&\frac{T}{m}(\sum_i \rho_i^* - \alpha_i^*) - \sum_{j\in R}\beta^*_j
\end{eqnarray}
For proceeding further in lower bounding~\eqref{eqn:no_case}, we apply Chernoff
bounds to $\sum_{j\in R}\beta^*_j$. The fact that the dual of the full version 
of
LP~\eqref{lp:sampled_dual} is a maximization LP, coupled with the constraints
there in imply that $\beta_j^* \leq \gamma$. 
Further, let $\tau^*$ denote the optimal value of the full version of LP~\eqref{lp:sampled_dual}, i.e.,
$\sum_i (\rho_i^* - \alpha_i^*) - \sum_j \beta_j^* = \tau^*$. 
Now, the constraint  $\sum_i (\alpha_i^* + \rho_i^*) \leq 1$ 
%it follows that $\sum_i(\rho_i^* - \alpha_i^*) \geq -1$ and this
coupled with the fact that $\tau^* \geq 0$ implies $\sum_j \beta_j^* \leq 1$. 
%$\sum_j \beta_j^* \leq \tau^* + 1 \leq 2\max(\tau^*,1)$. 
We are now ready to lower bound the quantity in~\eqref{eqn:no_case}. We have the optimal solution to primal of LP~\eqref{lp:sampled_dual}  
\begin{eqnarray}\label{eqn:noCaseProof}
&\geq&\frac{T}{m}(\sum_i \rho_i^* - \alpha_i^*) - \sum_{j\in R}\beta^*_j\nonumber\\
&\geq& \frac{T}{m}\sum_i (\rho_i^* - \alpha^*_i) - \left(\frac{T\sum_j
\beta^*_j}{m} + \sqrt{\frac{4T(\sum_j\beta^*_j)\gamma \ln(1/\delta)}{m}}\right)
\text{ \bigg(Since $\beta_j^* \in [0,\gamma]$\bigg)}\nonumber\\
&\geq& \frac{T\tau^*}{m} -  \sqrt{\frac{4T\gamma \ln(1/\delta)}{m}}\nonumber\\
&=& \frac{T\tau^*}{m}\left[1-\sqrt{\frac{\gamma m \ln(1/\delta)}{T}\cdot\frac{4}{{\tau^*}^2}}\right]
\end{eqnarray}
where the second inequality is a ``with probability at least $1-\delta$'' inequality, i.e., we apply Chernoff bounds for $\sum_{j \in S}\beta^*_j$, along
with the observation that each $\beta^*_j \in [0, \gamma]$. The third inequality
follows from $\sum_j \beta_j^* \leq 1$ and $\sum_i (\rho_i^* - \alpha_i^*) -
\sum_j \beta_j^* = \tau^*$.  
Setting $T = \Theta(\frac{\gamma m \ln(n/\delta)}{\epsilon^2})$ with a appropriate constant inside the $\Theta$, coupled
with the fact that $\tau^* \geq 2\epsilon$ in the NO case (see 
Corollary~\ref{cor:mixed_pc}), it is easy to verify
that the quantity in~\eqref{eqn:noCaseProof} is at least $\frac{T\epsilon}{m}$.
%, both for $\tau^* \geq 1$ and for $\tau^* < 1$. This proves the lemma.

Going back to our application of Chernoff bounds above, in order to apply it in
the form above, we require that the multiplicative deviation from mean
$\sqrt{\frac{4\gamma m\ln(1/\delta)}{T\sum_j \beta_j^*}} \in [0,2e-1]$. If
$\sum_j \beta_j^* \geq \epsilon^2$, then this requirement would
follow. Suppose on the other hand that $\sum_j \beta_j^* <
\epsilon^2$.  Since we are happy if the excess over mean is
at most $\frac{T\epsilon}{m}$, let us look for a multiplicative error of
$\frac{\frac{T\epsilon}{m}}{\frac{T\sum_j\beta_j^*}{m}}$. Based on the fact that
$\sum_j \beta_j^* < \epsilon^2$ the multiplicative error can be seen to be
at least $1/\epsilon$ which is larger than $2e-1$ when $\epsilon <
\frac{1}{2e-1}$. We now use the version of Chernoff bounds for multiplicative error
larger than $2e-1$, which gives us that a deviation of $\frac{T\epsilon}{m}$
occurs with a probability at most
$2^{-\left(1+\frac{\frac{T\epsilon}{m}}{\frac{T\sum_j\beta_j^*}{m}}\right)\frac{T\sum_j\beta_j^*}{m\gamma}}$,
where the division by $\gamma$ is because of the fact that $\beta_j^* \leq
\gamma$. This probability is at most
$(\frac{\delta}{n})^{1/\epsilon}$, which is at most $\delta$.

\end{proof}

The proofs for the YES and NO cases together prove 
Theorem~\ref{online:thm:coveringPacking}.

\section{Special Cases of the Resource Allocation
Framework}\label{online:sec:special_cases}
%\label{sec:special_cases}
We now list the problems that are special cases of the resource allocation framework
and have been previously considered. The Adwords and Display ads special cases were already discussed in Section~\ref{online:subsec:online_ra}. 
%See the full version of the paper for more special cases. 

\subsection{Network Routing and Load Balancing}
Consider a graph (either undirected or directed) with edge capacities.  Requests
arrive online; a request $j$ consists of a source-sink pair, $(s_j,t_j)$ and a
bandwidth $\rho_j$.  In order to satisfy a request, a capacity of $\rho_j$ must
be allocated to it on every edge along some path from $s_j$ to $t_j$ in the
graph.  In the {\em throughput maximization} version, the objective is to
maximize the number of satisfied requests while not allocating more bandwidth
than the available capacity for each edge (Different requests could have
different values on them, and one could also consider maximizing the total value
of the satisfied requests). Our Algorithm~\ref{online:alg:online_ra} for
resource allocation framework directly applies here and the approximation
guarantee there directly carries over.  
~\citet{KPP} consider a different version of this problem where requests 
according to a Poisson process with unknown arrival rates. Each request has an 
associated \emph{holding time} that is assumed to be exponentially distributed, 
and once a request has been served, the bandwidth it uses up gets freed after 
its holding time (this is unlike our setting where once a certain amount of 
resource capacity has been consumed, it remains unavailable to all future 
requests). Some aspects of the distribution are assumed to be known, namely, 
that the algorithm knows the average rate of profit generated by all the 
incoming circuits, the average holding time, and also the target offline 
optimal offline solution that the online algorithm is aiming to approximate 
(again this is 
unlike our setting where no aspect of the request distribution is known to the 
algorithm, and there could even be some adversarial aspects like in the ASI 
model). When each request consumes at most $\gamma$ fraction of any edge's 
bandwidth,~\citet{KPP} give an online algorithm that achieves an expected 
profit of 
$(1-\epsilon)$ times the optimal offline solution when $\gamma = 
O(\frac{\epsilon^2}{\log n})$.

\subsection{Combinatorial Auctions} 
Suppose we have $n$ items for sale, with $c_i$ copies of item $i$.  Bidders
arrive online, and bidder $j$ has a utility function $U_j : 2^{[n]} \rightarrow
\Real$.  If we posted prices $p_i$ for each item $i$, then bidder $j$ buys a
bundle $S$ that maximizes $U_j(S) - \sum_{i\in S} p_i$. We assume that bidders
can compute such a bundle.  The goal is to maximize social welfare, the total
utility of all the bidders, subject to the supply constraint that there are
only $c_i$ copies of item $i$. Firstly, incentive constraints aside, this
problem can be written as an LP in the resource allocation framework.  The
items are the resources and agents arriving online are the requests. All the
different subsets of items form the set of options. The utility $U_j(S)$
represents the profit $w_{j,S}$ of serving agent $j$ through option $S$, i.e.
subset $S$. If an item $i \in S$, then $a_{i,j,S} = 1$ for all $j$ and zero
otherwise. Incentive constraints aside, our algorithm for resource allocation
at step $s$, will choose the option $k^*$ (or equivalently the bundle $S$) as
specified in point 8 of Algorithm~\ref{online:alg:online_ra}, i.e., minimize the
potential function. That is, if step $s$ falls in stage $r$,
\begin{equation*}
k^* = \arg \min_k\left\{\sum_i\aijk\cdot\phixir[s-1] - \wjk\cdot\psi_{s-1}^r\right\}
\end{equation*} 
(note that unlike Algorithm~\ref{online:alg:online_ra} there is no subscripting for $\wjk$). This can be equivalently written as
\begin{equation*}
k^* = \arg \max_k\left\{\wjk\cdot\psi_{s-1}^r - \sum_i\aijk\cdot\phixir[s-1]\right\}
\end{equation*} 
Now, maximizing the above expression at step $s$ is the same as picking the $k$ to maximize
$\wjk - \sum_i p_i(s)\aijk$, where %\[p_i =\frac{\frac{\epsc(\stage)/\gamma}{\left(1 + \frac{\epsc(\stage)}{\gamma m}\right)}}{\frac{\epso(\stage)/\wm}{\left(1 - \frac{\epso(\stage) Z(\stage)}{\wm m}\right)} \phi_{\text{obj}}^{t-1}}\cdot\frac{\phi_i^{t-1}}{c_i}. \]
$p_i(s) = \frac{\phixir[s-1]}{\psi_{s-1}^r}$.
Thus, if we post a price of $p_i(s)$ on item $i$ for bidder number $s$, he will do exactly what the algorithm would have done otherwise. 
Suppose that the bidders are i.i.d samples from some distribution (or they arrive as in the \asi\ model). 
We can use Theorem~\ref{online:thm:online_ra} 
to get an incentive compatible posted price auction\footnote{Here we assume that each agent reveals his true utility function \emph{after} he makes his purchase. This information is necessary to compute the prices to be charged for future agents.} with a competitive ratio of $1-O(\epsilon)$ 
whenever $\gamma = \min_i\{c_i\} \geq \Omega\left(\frac{\log(n/\epsilon)}{\epsilon^2}\right).$
Further if an analog of Theorem \ref{online:thm:online_ra} also holds in the random permutation model 
then we get a similar result for combinatorial auctions in the offline case: 
we simply consider the bidders one by one in a random order.

\bibliographystyle{plainnat}
\bibliography{dissertation}

\end{document}